\newtheorem{Theorem}{Theorem}
\newtheorem{Lemma}{Lemma}
\newtheorem*{remark}{Remark}
\def\linspread{0.95}
\def\linspreadalgr{0.75}
\begin{document}

\title{Antenna Position and Beamforming Optimization for Movable Antenna Enabled ISAC: Optimal Solutions and Efficient Algorithms}

\author{Lebin Chen, \IEEEmembership{Student Member,~IEEE,} Ming-Min Zhao,
	\IEEEmembership{Senior Member,~IEEE}, Min-Jian Zhao, \IEEEmembership{Member,~IEEE}, and Rui Zhang, \IEEEmembership{Fellow,~IEEE}
	\thanks{L. Chen, M. M. Zhao, and M. J. Zhao are with the College of Information Science and Electronic
		Engineering, Zhejiang University, Hangzhou 310027, China, and also with the Zhejiang Provincial Key Laboratory of Multi‐Modal Communication Networks and Intelligent Information Processing, Hangzhou 310027, China (e-mail:
		12431101@zju.edu.cn; zmmblack@zju.edu.cn; mjzhao@zju.edu.cn). (\emph{Corresponding author: Ming-Min Zhao}.)}
	\thanks{
		R. Zhang is with School of Science and Engineering, Shenzhen Research Institute of Big Data, The Chinese University of Hong Kong, Shenzhen, Guangdong 518172, China (e-mail: rzhang@cuhk.edu.cn). He is also with the Department of Electrical and Computer Engineering, National University of Singapore, Singapore 117583 (e-mail: elezhang@nus.edu.sg).
		The work of Rui Zhang is supported in part by The Guangdong Provincial Key Laboratory of Big Data Computing, the National Natural Science Foundation of China (No. 62331022), the 2022 Stable Research Program of Higher Education of China (No. 20220817144726001), and the Guangdong Major Project of Basic and Applied Basic Research (No. 2023B0303000001).}
        % <-this % stops a space
%\thanks{}% <-this % stops a space
%\thanks{}
}

% The paper headers
%\markboth{Journal of \LaTeX\ Class Files,~Vol.~14, No.~8, August~2021}%
%{Shell \MakeLowercase{\textit{et al.}}: A Sample Article Using IEEEtran.cls for IEEE Journals}

%\IEEEpubid{0000--0000/00\$00.00~\copyright~2021 IEEE}
% Remember, if you use this you must call \IEEEpubidadjcol in the second
% column for its text to clear the IEEEpubid mark.

\maketitle

\begin{abstract}
In this paper, we propose an integrated sensing and communication (ISAC) system enabled by movable antennas (MAs), which can dynamically adjust antenna positions to enhance both sensing and communication performance for future wireless networks.	
%In this paper, we propose a integrated sensing and communication (ISAC) system enabled by the movable antenna (MA) for future 6G networks. 
%The proposed MA-based ISAC design overcomes the inherent limitations of traditional fixed-position antenna (FPA) arrays by introducing dynamic antenna positioning, which significantly enhances both sensing and communication performance. 
To characterize the benefits of MA-enabled ISAC systems, we first derive the Cramér-Rao bound (CRB) for angle estimation error, which is then minimized for optimizing the antenna position vector (APV) and beamforming design, subject to a pre-defined signal-to-noise ratio (SNR) constraint to ensure the communication performance.
In particular, for the case with receive MAs only, we provide a closed-form optimal antenna position solution, and show that employing MAs over conventional fixed-position antennas (FPAs) can achieve a sensing performance gain upper-bounded by 4.77 dB.
%a sensing performance gain of up to 4.77 dB can be achieved by employing MAs over conventional fixed-position antennas (FPAs). 
On the other hand, for the case with transmit MAs only, we develop a boundary traversal breadth-first search (BT-BFS) algorithm to obtain the global optimal solution in the line-of-sight (LoS) channel scenario, along with a lower-complexity boundary traversal depth-first search (BT-DFS) algorithm to find a local optimal solution efficiently. While in the scenario with non-LoS (NLoS) channels, a majorization-minimization (MM) based Rosen's gradient projection (RGP) algorithm with an efficient initialization method is proposed to obtain stationary solutions for the considered problem, which can be extended to the general case with both transmit and receive MAs. 
Extensive numerical results are presented to verify the effectiveness of the proposed algorithms, and demonstrate the superiority of the considered MA-enabled ISAC system over conventional ISAC systems with FPAs in terms of sensing and communication performance trade-off.
\end{abstract}

\begin{IEEEkeywords}
Integrated sensing and communication (ISAC), movable antenna (MA), Cramér-Rao bound (CRB), beamforming design, antenna position optimization.
\end{IEEEkeywords}
\vspace{-0cm}
\section{Introduction}
\IEEEPARstart{I}{ntegrated} sensing and communication (ISAC) has been identified as a key enabling technology for the future/six-generation (6G) wireless networks, with the ability to offer unified communication and sensing functionalities in a single system\cite{6gWCsystems}. This integration is particularly important for emerging applications such as autonomous driving, smart cities, and immersive virtual reality, all of which require not only high-speed data transmission but also precise sensing capabilities\cite{Avisionof6g}. 
ISAC enables simultaneous sensing and communication by leveraging shared hardware and spectrum, which reduces deployment costs and energy consumption. Moreover, ISAC offers potential performance improvements through joint optimization of communication and sensing tasks, allowing for dynamic trade-offs depending on real-time requirements\cite{intro16}. As a result, ISAC is expected to play a critical role in the deployment of future wireless systems, where diverse and dynamic demands will need to be met seamlessly\cite{ISACsurvey}.
%\cite{ISACsurvey,Aroadtowards6g}.

In many communication and sensing systems, fixed-position antenna (FPA) arrays, such as uniform linear/planar arrays (ULAs/UPAs) and sparse arrays\cite{sparse}, are usually employed. While FPAs are widely used in existing applications\cite{intro7,intro9,ap1},
%\cite{intro6,intro7,intro9,ap1}, 
their limitations become evident in the context of ISAC, where flexibility and adaptability are crucial. FPAs consist of antennas fixed in place with pre-defined geometries, which restrict their ability to adapt to varying sensing and communication environments \cite{ma2024movableantennaenhancedwireless,fpa1,fpa2}.
One major drawback of FPAs is their inability to optimize the antenna configuration in real-time. In wireless sensing tasks, the system performance such as radar resolution, heavily depends on the geometry of the antenna array. 
From a communication perspective, FPAs also face significant limitations, especially in multiple-input multiple-output (MIMO) systems, since the channel capacity in MIMO systems is closely tied to the spatial configuration of antennas, and FPAs are unable to leverage the full potential of spatial diversity due to their static positions\cite{10243545,MAsurvey}. This rigidity constrains their ability to dynamically reshape the MIMO channel matrix to improve the communication performance. In rich multipath environments, where spatial diversity is crucial for improving receiver signal-to-noise ratio (SNR) and channel capacity, FPAs are unable to adapt to substantial spatial/time variations of wireless channels and thus may result in poor  performance. Additionally, attempts to improve the performance by increasing the number of antennas lead to higher hardware costs and energy consumption, making large-scale FPA deployment practically challenging in many real-world scenarios\cite{10243545}.
%\cite{10243545,MAWC}.

Fortunately, movable antenna (MA) system, also known as fluid antenna system (FAS) \cite{zhu2024historicalreviewfluidantenna}, offers a promising alternative solution to overcome the limitations of FPAs in ISAC applications. Unlike FPAs, MAs can dynamically adjust their positions, allowing for real-time optimization of both communication and sensing tasks. This new degree of freedom is envisioned to significantly enhance the system performance, particularly in complex and dynamic environments.
In wireless sensing, MAs provide substantial improvements in tasks such as angle of arrival (AoA) estimation and target detection \cite{ma2024movableantennaenhancedwireless}. By allowing antenna elements to move and reconfigure, MA systems can increase the effective aperture of the array by spreading the antennas over a larger area, resulting in higher angular resolution without increasing the number of antennas, thus maximizing antenna cost-efficiency\cite{ma2024movableantennaenhancedwireless}.
For communication systems, MA-enabled MIMO systems offer significant advantages in terms of channel capacity and spatial multiplexing gain\cite{10243545}. By optimizing the positions of both transmit and receive antennas, MA systems can reconfigure the MIMO channel matrix to enhance channel quality and improve data throughput. This adaptability allows MA systems to maximize the use of spatial diversity in multipath environments, leading to higher capacity compared to traditional FPA systems\cite{MA2}. 
%\cite{MA1,MA2,MA3}. 
Additionally, MA systems are particularly beneficial in multi-user communication environments, where precise control over antenna positions can help mitigate interference and improve performance for all users\cite{MA5}.
Furthermore, the dynamic nature of MA systems enables them to switch seamlessly between communication and sensing modes, making them ideal for ISAC applications. This flexibility is especially crucial in dynamic scenarios, such as vehicle-to-everything (V2X) communication networks, where both high-performance sensing and reliable communication are necessary\cite{MAsurvey}. 
%Besides antenna position optimization, antenna rotation adjustment can also be leveraged to improve wireless communication and/or sensing performance, as investigated in recent works on six-dimensional MA (6DMA) systems \cite{6DMA1,6DMA2}. 
{{To achieve this flexibility, MAs adaptively respond to changing communication and sensing conditions, enabling superior performance even with fewer antennas. By leveraging dynamic reconfigurability and sub-wavelength positioning, MAs provide highly directive beamforming, precise angular resolution, and effective interference suppression 
\cite{R1,R2}. 
%Recent studies confirm that MA-based arrays consistently outperform fixed-position arrays by reducing the Cramér–Rao bound (CRB) in angle estimation \cite{FASforISAC1}, jointly optimizing antenna positions and beamforming for enhanced communication and sensing performance \cite{R3}, and flexibly shaping beampatterns in bistatic radar ISAC scenarios \cite{R4}. 
Recent studies confirm that MA-based arrays consistently outperform fixed-position arrays through multiple mechanisms, including reducing the Cramér–Rao bound (CRB) in angle estimation \cite{FASforISAC1}, enabling joint optimization of antenna positions and beamforming to enhance communication and sensing performance \cite{R3}, and providing flexible beampattern shaping capabilities in bistatic radar ISAC scenarios \cite{R4}.
Beyond linear antenna array optimization, recent works have also demonstrated that exploiting two-dimensional/three-dimensional (2D/3D) array geometry \cite{R1,R2}, and 3D antenna rotation adjustments can further enhance communication/sensing performance in six-dimensional MA (6DMA) systems  \cite{6DMA1,6DMA2,R5,R6}.}}

MA-enabled ISAC systems have gained considerable attention for future 6G networks, as they provide enhanced flexibility and performance. MA-enabled wireless sensing was first studied in \cite{ma2024movableantennaenhancedwireless}, by focusing on movable receive antennas. While this study provided valuable insights into the benefits of receive MAs for sensing performance enhancement, it did not explore the potential advantages of movable transmit antennas. Subsequently, \cite{FASforISAC1} and \cite{FASforISAC2} extended the MA-ISAC framework by incorporating movable transmit antennas and employing alternating optimization based algorithm to optimize beamforming and antenna positions, while \cite{FASforISAC3} leveraged the deep reinforcement learning (DRL) technique.
However, these approaches have their limitations, primarily due to their reliance on local optimization methods and the absence of optimal strategies for the joint optimization of transmit and receive MAs.
%Despite these advancements, both approaches yielded sub-optimal solutions, primarily due to their reliance on local optimization methods and the lack of comprehensive strategies for joint optimization of transmit and receive antennas. 
%Furthermore, these studies did not fully address the challenges associated with global optimization in diverse propagation environments. Addressing these gaps, this paper proposes an enhanced MA-based ISAC system that simultaneously optimizes both transmit and receive antennas using advanced global optimization techniques, thereby offering a more robust and efficient solution for dynamic and complex 6G network scenarios.

Motived by the above discussions, we focus on developing optimal solutions and efficient algorithms for MA-enabled ISAC systems in this work, where the CRB is used as the objective function to optimize the antenna position vector (APV) and beamforming vectors, while a pre-defined SNR level is imposed as a constraint to guarantee communication performance.
The main contributions of this work are summarized as follows:
\begin{itemize}
	\item First, for the case with receive MAs only, we derive an optimal closed-form solution for the antenna positions, despite the non-convexity of the objective function and variable-coupling issue. 
	Besides, we show that the sensing performance gain 
	obtained via receive MAs  is upper-bounded by 4.77 dB.
	\item Next, for the case with transmit MAs only, we propose a boundary traversal breadth-first search (BT-BFS) algorithm to obtain the global optimal solution of the considered problem in the line-of-sight (LoS) channel scenario.
	Besides, to reduce the complexity of the BT-BFS algorithm, a boundary traversal depth-first search (BT-DFS) algorithm is further proposed to obtain a local optimal solution with only linear computational complexity.  
	Then, for the non-line-of-sight (NLoS) channel scenario, we propose a majorization-minimization (MM) based Rosen's gradient projection (RGP) algorithm to obtain a stationary solution for the considered problem, together with a meticulously designed method to attain a high-quality initial solution.
	Furthermore, we theoretically show that utilizing movable transmit antenna array is able to enlarge the SNR-CRB region, and the corresponding performance gain is determined by the directional channel response, which is defined as the magnitude of the inner product between the channel vector and the steering vector.
	\item Finally, we show that the general case with MAs for both receiving and transmitting can be simply addressed by combining the above techniques. 
	Numerical results are presented to show the effectiveness of the proposed algorithms and validate the significant ISAC performance gain achieved by using MA array over traditional FPAs. 
	%Moreover, the simulation reveals that the beampattern of MA array exhibits a higher energy main lobe and narrower bandwidth in the sensing direction, which can greatly enhance the reliability of angle estimation.
\end{itemize}

The remainder of this paper is organized as follows. Section \uppercase\expandafter{\romannumeral2} introduces the system model and problem formulation. Sections \uppercase\expandafter{\romannumeral3} and \uppercase\expandafter{\romannumeral4} provide the joint antenna position and beamforming designs for the cases with receive/transmit MAs, respectively. Section \uppercase\expandafter{\romannumeral4}
provides numerical results. Finally, conclusions are drawn in Section \uppercase\expandafter{\romannumeral5}.

\begin{figure}[t]
	\centering
	\includegraphics[width=3.0in]{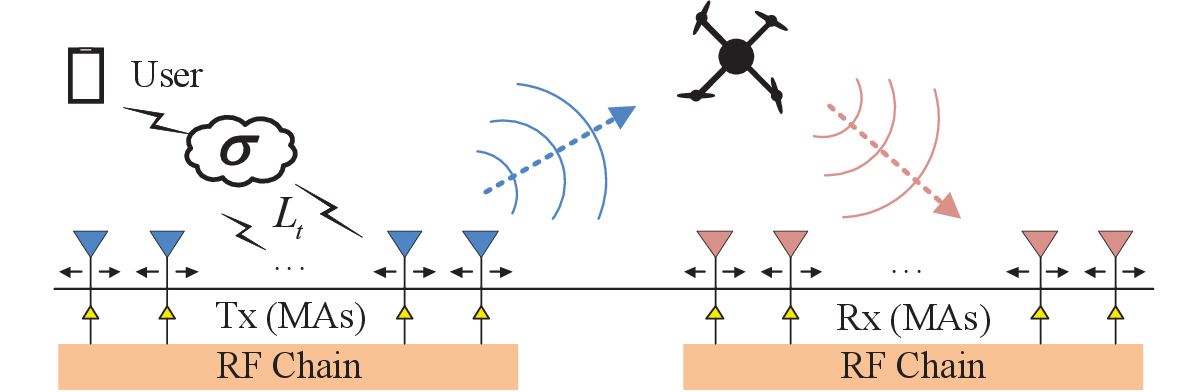}\vspace{-0.2cm}
	\caption{An MA-enabled ISAC system.}
	\label{model}
	\vspace{-0cm}
\end{figure}

\textit{Notations:}
Scalars, vectors, and matrices are denoted by lower/upper case, bold-face lower-case, and bold-face uppercase letters, respectively.
$(\cdot)^T$, $(\cdot)^*$ and $(\cdot)^H$ denote the transpose, conjugate and conjugate transpose operators, respectively. We use $\operatorname{Re}\{x\}$ and $\angle x $ to denote the real part and the phase of a complex number $x$. $\mathbf{I}$, $\mathbf{J}$, $\mathbf{0}$ and $\mathbf{1}$  are used to represent identity matrix, all-one matrix, all-zero vector and all-one vector with proper dimensions, respectively. 
$\begin{Vmatrix}\cdot\end{Vmatrix}$ denotes the Euclidean norm of a complex vector/matrix, and $|\cdot|$ denotes the absolute value of a complex number. 
$\mathbf{y} \backslash y_i$ denotes the set/vector obtained by removing the element $y_i$ from $\mathbf{y}$, where $\mathbf{y}$ is a set or vector and $y_i$ is one of its elements. 
The set of integers is denoted by $\mathbb{Z}$, and the sets of $P \times Q$ dimensional complex, real and positive real matrices are denoted by $\mathbb{C}^{P\times Q}$, $\mathbb{R}^{P\times Q}$ and $\mathbb{R}^{P\times Q}_{++}$, respectively.

\section{System Model And Problem Formulation}
\subsection{System Setting}
In this work, we consider a MIMO ISAC base station (BS) with $N_t$ transmit MAs and $N_r$ receive MAs, which is serving a downlink user with a single antenna while detecting a single point target, as depicted in Fig. \ref{model}. 
In order to avoid information loss of the sensed target, the number of receive antennas is assumed to be greater than that of transmit antennas, i.e., $N_r>N_t$ \cite{9652071}. Additionally, the transmit and receive MAs can be dynamically repositioned in real-time through flexible cables linked to the RF chains \cite{MAsurvey}. Furthermore, we assume narrow-band quasi-static channels, such that the time overhead for adjusting MA positions is tolerable compared to the much longer channel coherence time of both the user and target\cite{MAsurvey}. 
\footnote{We assume that the user/target is static for a long period or moves at a much lower speed as compared to that of the antenna movement.}   
As shown in Fig. \ref{model}, we consider linear MA arrays of sizes $N_t$ and $N_r$ at the transmitter and receiver sides, respectively. Denote $\mathbf{x}=[x_1,x_2,...,x_{N_t}]^T\in\mathbb{R}^{N_t}$ and $\mathbf{y}=[y_1,y_2,...,y_{N_r}]^T\in\mathbb{R}^{N_r}$ as the APVs of the transmit and receive MA arrays, respectively. The steering vectors of the transmit and receive MA arrays can be respectively written as \vspace{-0.2cm}
\begin{equation}
	\label{s1}
	\small
	\mathbf{a}(\mathbf{x},\theta)=\begin{bmatrix}e^{-j\frac{2\pi}{\lambda}x_1\sin\theta},e^{-j\frac{2\pi}{\lambda}x_2\sin\theta},...,e^{-j\frac{2\pi}{\lambda}x_{N_t}\sin\theta}\end{bmatrix}^T,
\end{equation}
 \begin{equation}
 	\label{s2}\small
 	\mathbf{b}(\mathbf{y},\theta)=\begin{bmatrix}e^{-j\frac{2\pi}{\lambda}y_1\sin\theta},e^{-j\frac{2\pi}{\lambda}y_2\sin\theta},...,e^{-j\frac{2\pi}{\lambda}y_{N_r}\sin\theta}\end{bmatrix}^T,
 \end{equation}
where $\lambda$ denotes the carrier wavelength and $\theta$ is the azimuth angle of the point target relative to the BS.
% Due to the monostatic radar setting, the direction of arrival (DoA) and the direction of departure (DoD) are the same for the same target.
{{Due to the
	co-located radar setting, the direction of arrival (DoA) and
	the direction of departure (DoD) are the same for the same
	target.}}

In our considered system, the channel vector is determined by the signal propagation environment and the positions of the transmit MAs. We consider the field-response based channel model \cite{10243545}, where the number of transmit paths is denoted as $L_t$. The azimuth angle of departure (AoD) of the $k$th ($k=1,2,...,L_t$) transmit path is denoted by $\theta_t^k\in[0,\pi]$. Thus, the field response vector of the $i$th  transmit MA can be defined as $\mathbf{g}(x_i)=[e^{-j\frac{2\pi}{\lambda}x_i\sin\theta_t^1},e^{-j\frac{2\pi}{\lambda}x_i\sin\theta_t^2},...,e^{-j\frac{2\pi}{\lambda}x_i\sin\theta_t^{L_t}}]^T$.
%\begin{equation}
%	\mathbf{g}(x_i)=\begin{bmatrix}e^{-j\frac{2\pi}{\lambda}x_i\sin\theta_t^1},e^{-j\frac{2\pi}{\lambda}x_i\sin\theta_t^2},...,e^{-j\frac{2\pi}{\lambda}x_i\sin\theta_t^{L_t}}\end{bmatrix}^T.
%\end{equation}
By stacking $\mathbf{g}(x_i)$ of all $N_t$ transmit MAs, the field response matrix of the transmit MA array is given by $\mathbf{G}(\mathbf{x})=[\mathbf{g}(x_1),\mathbf{g}(x_2),...,\mathbf{g}(x_{N_t})]\in\mathbb{C}^{L_t\times{N_t}}$.
%\begin{equation}
%	\mathbf{G}(\mathbf{x})=\begin{bmatrix}\mathbf{g}(x_1),\mathbf{g}(x_2),...,\mathbf{g}(x_{N_t})\end{bmatrix}\in\mathbb{C}^{L_t\times{N_t}}.
%\end{equation}
%Furthermore, we define the path response vector as $\bm{\sigma}\in\mathbb{C}^{L_t\times1}$, where ${\sigma}_k$ is the response between the $k$th transmit path and the user's receive path. 
{Furthermore, we define the channel gain vector as $\bm{\sigma}\in\mathbb{C}^{L_t\times1}$, where  $\sigma_k$ is the complex gain of the $k$-th propagation path, which accounts for both path loss and phase shift.}
As a result, the channel vector from the transmitter to the downlink user is given by\vspace{-0.1cm}
\begin{equation}
	\label{q18}\small
%	\vspace{-0.1cm}
	\mathbf{h}^H(\mathbf{x})=\bm{\sigma}^H\mathbf{G}(\mathbf{x}),
	\vspace{-0.1cm}
\end{equation}
which is assumed to be known to the BS \cite{10243545,9652071}.
Let $\mathbf{X}\in\mathbb{C}^{N_t\times{L}}$ be a narrowband ISAC signal matrix, with $L>N_t$ being the length of the radar pulse/communication frame. 
Furthermore, let $\mathbf{X}=\mathbf{w}\mathbf{s}^H$,
%\begin{equation}
%	\mathbf{X}=\mathbf{w}\mathbf{s}^H,
%\end{equation}
where $\mathbf{w}$ is the dual-functional beamforming vector to be designed, and $\mathbf{s}\in\mathbb{C}^{L\times1}$ contains the unit-power data stream intended for the communication user. The data stream $\mathbf{s}$ is assumed to have independent entries and satisfy $\frac{1}{L}\mathbf{s}^H\mathbf{s}=1$.
%, which
%%\begin{equation}
%%\label{q1}
%%	\frac{1}{L}\mathbf{s}^H\mathbf{s}\approx1.
%%\end{equation}
%holds asymptotically for, e.g., white Gaussian signaling, provided that the block length $L$ is sufficiently large. 
By transmitting $\mathbf{X}$ to the communication user, the received signal vector is \vspace{-0.1cm}
\begin{equation}
	\mathbf{y}_C^H=\mathbf{h}^H\mathbf{X}+\mathbf{z}_C^H,
	\vspace{-0.1cm}
\end{equation}
where $\mathbf{z}_C\in\mathbb{C}^{{L}\times1}$ is an additive white Gaussian noise (AWGN) vector with zero mean and covariance matrix $\sigma_C^2 \mathbf{I}$. 
By transmitting $\mathbf{X}$ to sense the target, the reflected echo signal at the receiver of the ISAC BS is given by \vspace{-0.1cm}
\begin{equation}
	\mathbf{Y}_R=\mathbf{B}\mathbf{X}+\mathbf{Z}_R,
	\vspace{-0.1cm}
\end{equation}
where  $\mathbf{Z}_R\in\mathbb{C}^{N_r\times{L}}$ denotes an AWGN matrix with zero mean and the variance of each its entry is $\sigma_R^2$, $\mathbf{B}\in\mathbb{C}^{N_r\times{N_t}}$ represents the target response matrix. Since the target is modeled as an unstructured point that is far away from the BS, the target response matrix $\mathbf{B}$ can be written as $\mathbf{B}=\alpha\mathbf{b}(\mathbf{y},\theta)\mathbf{a}^H(\mathbf{x},\theta)
\triangleq\alpha\mathbf{A}(\mathbf{x},\mathbf{y},\theta)$,
%\begin{equation}
%	\mathbf{B}=\alpha\mathbf{b}(\mathbf{y},\theta)\mathbf{a}^H(\mathbf{x},\theta)
%	\triangleq\alpha\mathbf{A}(\mathbf{x},\mathbf{y},\theta),
%\end{equation}
where $\alpha\in\mathbb{C}$ represents the reflection coefficient, which contains both the round-trip path-loss and the radar cross-section (RCS) of the target.

\vspace{-0.2cm}
\subsection{Problem Formulation}
In this paper, we aim to optimize the sensing performance under the communication quality-of-service (QoS) constraint by jointly optimizing the transmit and/or receive MA positions ${\mathbf{x}}$, ${\mathbf{y}}$, and the transmit beamforming vector ${\mathbf{w}}$.
In particular, we adopt the CRB for evaluating the target estimation performance, which is a lower bound on the variance of any unbiased estimators, and employ the user SNR to measure the communication performance.

For the considered point target case, the \text{CRB} for $\theta$ was derived in \cite{Kay1993FundamentalsOS}, \cite{1703855}, and is given by (\ref{q2}) at the bottom of the next page, where
\begin{figure*}[hb] % hb底部，ht为头部
	\centering % 公式居中
	\vspace*{-12pt}
	\hrulefill % 添加一条水平线
	\vspace*{-5pt} % 调整线与公式之间的距离
	\begin{equation}
		\label{q2}\small
		\begin{aligned}
			\text{CRB}(\theta)=\frac{\sigma_R^2\text{tr}\Big(\mathbf{A}^H(\theta)\mathbf{A}(\theta)\mathbf{R}_X\Big)}
			{
				2\begin{vmatrix}
					\alpha
				\end{vmatrix}^2L\Big(\text{tr}\big(\dot{\mathbf{A}}^H(\theta)\dot{\mathbf{A}}(\theta)\mathbf{R}_X\big)
				\text{tr}\big(\mathbf{A}^H(\theta)\mathbf{A}(\theta)\mathbf{R}_X\big)
				-\begin{vmatrix}
					\text{tr}\big(\dot{\mathbf{A}}^H(\theta){\mathbf{A}}(\theta)\mathbf{R}_X\big)
				\end{vmatrix}^2\Big)
			}.
		\end{aligned}
	\end{equation}
\end{figure*}

\vspace{-0.3cm}
\begin{equation}
	\label{q3}\small
	\mathbf{R}_X=\frac{1}{L}\mathbf{X}\mathbf{X}^H=
	\frac{1}{L}\mathbf{w}\mathbf{s}^H\mathbf{s}\mathbf{w}^H=
	\mathbf{w}\mathbf{w}^H,
	\vspace{-0.2cm}
\end{equation}
is the sample covariance matrix of $\mathbf{X}$, and $\dot{\mathbf{A}}\triangleq\frac{\partial\mathbf{A}(\theta)}{\partial\theta}$.
Furthermore, the receive SNR at the user is given as
$	\gamma=\frac{|\mathbf{h}^H\mathbf{w}|^2}{\sigma_C^2}$.
%\begin{equation}
%	\gamma=\frac{\begin{vmatrix}
%			\mathbf{h}^H\mathbf{w}
%		\end{vmatrix}^2}{\sigma_C^2}.
%\end{equation}
%Specifically, we aim to minimize the \text{CRB} of radar sensing while guaranteeing a pre-defined level of SNR for the communication user by jointly optimizing the MA positions $\mathbf{x},\mathbf{y}$ and the beamforming vector $\mathbf{w}$. 
Thus, the corresponding joint optimization problem 
%of beamforming and antenna distribution with fixed $\theta$ under communication user's SNR and power budget constraints 
can be formulated as \vspace{-0.1cm}
\begin{subequations}
	\label{p1}\small
	\begin{align}
		\min\limits_{\mathbf{x},\mathbf{y},\mathbf{w}}&\;\;\text{CRB}(\theta)\\
		\text{s.t.}&\;\;\gamma\ge\Gamma,
		\begin{vmatrix}
			\mathbf{w}
		\end{vmatrix}^2\le{P_T},\\
		&\;\;D_x\ge\begin{vmatrix}
			x_m-x_n
		\end{vmatrix}\ge{d},1\le{m},{n}\le{N_t},m\neq{n}, \label{q4}\\
		&\;\;D_y\ge\begin{vmatrix}
			y_m-y_n
		\end{vmatrix}\ge{d},1\le{m},{n}\le{N_r},m\neq{n}, \label{q5}
%		\min\limits_{\mathbf{x},\mathbf{y},\mathbf{w}}&\;\;\text{CRB}(\theta)\\
%		\text{s.t.}&\;\;\gamma\ge\Gamma,
%		\begin{vmatrix}
%			\mathbf{w}
%		\end{vmatrix}^2\le{P_T},\\
%		&\;\;D_x\ge\begin{vmatrix}
%			x_m-x_n
%		\end{vmatrix}\ge{d}, \label{q4}\\
%		&\;\;D_y\ge\begin{vmatrix}
%			y_m-y_n
%		\end{vmatrix}\ge{d},\forall m,n,\;m\neq{n}, \label{q5}
	\end{align}
\end{subequations}
where $\Gamma$ is the required SNR level for the user, $P_T$ is the transmit power budget, $d$ is the minimum distance (usually we set $d=\frac{\lambda}{2}$) between any two MAs to avoid the coupling effect, and $D_x$ ($D_y$) is the aperture of the overall transmit (receive) antenna array. Due to the equivalence between the antennas, we can assume $x_1<x_2<...<x_{N_t}$ and $y_1<y_2<...<y_{N_r}$ without loss of optimality. Thus, constraints (\ref{q4}) and (\ref{q5}) can be equivalently transformed into linear inequality constraints $\mathbf{U}\mathbf{x}\preceq\mathbf{l}_u$ and $\mathbf{V}\mathbf{y}\preceq\mathbf{l}_v$, respectively,
%\begin{equation}
%	\mathbf{U}\mathbf{x}\preceq\mathbf{l}_u,\mathbf{V}\mathbf{y}\preceq\mathbf{l}_v,
%\end{equation}
where \vspace{-0.1cm}
\begin{equation}
	\label{q16}\small
	\begin{aligned}
	\mathbf{U}&=\left[\begin{array} {ccccccc}
		1 & -1 & 0 & 0 & \cdots & 0 & 0\\
		0 & 1 & -1 & 0 & \cdots & 0 & 0\\
		\vdots & \vdots & \vdots & \vdots & \ddots & \vdots & \vdots\\
		0 & 0 & 0 & 0 & \cdots & 1 & -1\\
		-1 & 0 & 0 & 0 & \cdots & 0 &1
	\end{array}\right]_{N_t\times{N_t}},\\
			\mathbf{l}_u&=\left[
	-d,  -d ,\cdots , -d , D_x
	\right]^T_{N_t\times1},
\end{aligned}
\end{equation}
%\vspace{-0.8em}
%\begin{equation}
%	\begin{aligned}\small
%		\mathbf{l}_u=\left[
%			-d,  -d ,\cdots , -d , D_x
%	\right]^T_{N_t\times1},
%	\end{aligned}\vspace{-0.1cm}
%\end{equation}
and $\{\mathbf{V},\mathbf{l}_v\}$ are similarly defined.
 
Based on the properties of the steering vectors in (\ref{s1}), (\ref{s2}) and their derivatives, it can be easily verified that \vspace{-0.1cm}
\begin{equation}
	\label{q6}\small
	\dot{\mathbf{a}}^H(\mathbf{x},\theta)\mathbf{a}(\mathbf{x},\theta)=j\frac{2\pi}{\lambda}\cos\theta
	\sum_{i=1}^{N_t}x_i,\forall\theta,
\end{equation}
\vspace{-0.6em}
\begin{equation}
	\label{q7}\small
	\dot{\mathbf{b}}^H(\mathbf{y},\theta)\mathbf{b}(\mathbf{y},\theta)=
	j\frac{2\pi}{\lambda}\cos\theta
	\sum_{i=1}^{N_r}y_i,\forall\theta.
	\vspace{-0.1cm}
\end{equation}
Leveraging (\ref{q3}), (\ref{q6}) and (\ref{q7}) yields (\ref{q8})-(\ref{q9}), i.e.,\vspace{-0.1cm}
\begin{equation}
	\label{q8}\small
	\text{tr}\big({{\mathbf{A}^H}\mathbf{A}{\mathbf{R}_X}}\big) =
	 {N_r}{\begin{vmatrix}{{\mathbf{a}^H}{\mathbf{w}}}\end{vmatrix}^2},
\end{equation}
\vspace{-0.5cm}
\begin{equation}\small
	\text{tr}\big({{\dot{\mathbf{A}}^H}\mathbf{A}{\mathbf{R}_X}}\big) =
	j\frac{2\pi}{\lambda}\cos\theta
	{\begin{vmatrix}{{\mathbf{a}^H}{\mathbf{w}}}\end{vmatrix}^2}
	\sum_{i=1}^{N_r}y_i
	+N_r\mathbf{a}^H\mathbf{w}\mathbf{w}^H\dot{\mathbf{a}},
	\vspace{-0.1cm}
\end{equation}
\begin{figure*}[hb] % hb底部，ht为头部
	\centering % 公式居中
	\vspace*{-28pt}
	\begin{equation}\small
		\label{q9}
		\text{tr}\big({{\dot{\mathbf{A}}^H}\dot{\mathbf{A}}{\mathbf{R}_X}}\big) =
		(\frac{2\pi}{\lambda}\cos\theta)^2
		{\begin{vmatrix}{{\mathbf{a}^H}{\mathbf{w}}}\end{vmatrix}^2}
		\sum_{i=1}^{N_r}y_i^2
		+j\frac{2\pi}{\lambda}\cos\theta
		\sum_{i=1}^{N_r}y_i
		(\dot{\mathbf{a}}^H\mathbf{w}\mathbf{w}^H{\mathbf{a}}-\mathbf{a}^H\mathbf{w}\mathbf{w}^H\dot{\mathbf{a}})
		+{N_r}{\begin{vmatrix}{{\dot{\mathbf{a}}^H}{\mathbf{w}}}\end{vmatrix}^2}.
	\end{equation}
%	\hrulefill % 添加一条水平线
	\vspace*{-16pt} % 调整线与公式之间的距离
\end{figure*}where $\mathbf{A}\triangleq\mathbf{A}(\mathbf{x},\mathbf{y},\theta)$ and $\dot{\mathbf{A}}\triangleq\dot{\mathbf{A}}(\mathbf{x},\mathbf{y},\theta)$, and (\ref{q9}) is shown at the bottom of the next page. Substituting (\ref{q8})-(\ref{q9}) into (\ref{q2}), the objective function of problem (\ref{p1}), i.e., $\text{CRB}(\theta)$, can be simplified to\vspace{-0.2cm}
\begin{equation}
	\label{CRB_expression}\small
	\begin{aligned}
		\text{CRB}(\theta)=
		\frac{\sigma_R^2/
			(2\begin{vmatrix}
				\alpha
			\end{vmatrix}^2L)}
		{
			(\frac{2\pi}{\lambda}\cos\theta)^2
			{\begin{vmatrix}{{\mathbf{a}^H}{\mathbf{w}}}\end{vmatrix}^2}
			\big(\sum\limits_{i=1}^{N_r}y_i^2
			-\frac{1}{N_r}(\sum\limits_{i=1}^{N_r}y_i)^2\big)
		}.
	\end{aligned}\vspace{-0.1cm}
\end{equation}
Thus, problem (\ref{p1}) can be equivalently recast as\vspace{-0.2cm}
\begin{subequations}
	\label{p2}\small
\begin{align}
(\text{P})	\quad \quad	\max\limits_{\mathbf{x},\mathbf{y},\mathbf{w}}\;\;
&{\begin{vmatrix}{{\mathbf{a}^H}{\mathbf{w}}}\end{vmatrix}^2}
\left(\sum\limits_{i=1}^{N_r}y_i^2
-\frac{1}{N_r}\left(\sum\limits_{i=1}^{N_r}y_i\right)^2\right)\label{p2a}\\
\text{s.t.}\;\;&\begin{vmatrix}{{\mathbf{h}^H}{\mathbf{w}}}\end{vmatrix}^2
\ge\Gamma\sigma_C^2,
\|
\mathbf{w}
\|^2\le{P_T},\\
&\mathbf{U}\mathbf{x}\preceq\mathbf{l}_u,\mathbf{V}\mathbf{y}\preceq\mathbf{l}_v.
	\end{align}
\end{subequations}
In this work, we aim to obtain the optimal solution of problem (P), i.e., derive the optimal beamforming vector and antenna positions for the considered MA-assisted ISAC system. However, problem (P) is a highly non-convex optimization problem and very difficult to solve due to the coupling between the beamforming and antenna
position variables, as well as the fact that the position variables appear in the exponential terms of each channel coefficient. 
%Generally, there is no efficient method for solving the non-convex problem (\ref{p2}) optimally. In the next two sections, we propose an efficient boundary traversal algorithm to solve problem (\ref{p2}) optimally in the LoS channel case, and then MM based algorithm and RGPM based technique are proposed for the more general NLoS case.
Generally, there is no efficient method for solving the non-convex problem (P) optimally.
To address the abovementioned difficulties, we first examine two special cases of problem (P) in the next two sections respectively, i.e., 1) only the receive antennas are movable, and 2) only the transmit antennas are movable. 
%For the first case, we manage to provide the optimal closed-form solution of problem (\ref{p2}); for the second, we propose two efficient boundary traversal based algorithms to solve problem (\ref{p2}) in the LoS channel scenario, and then MM based algorithm and RGPM technique are proposed for the NLoS scenario.
Then, for the general case where both receive and transmit antennas are movable, we show that the corresponding optimization problem can be efficiently solved by combining the techniques proposed in the above two special cases. 

 \vspace{-0.2cm}
\section{Receive MA Case}

In this section, we investigate the case with receive MAs only, and show that variable decoupling can be achieved by substituting the closed-form solution of the optimal beamforming vector (as a function of the APVs) into the objective function of problem (P), i.e., (\ref{p2a}). Then, we exploit the translational invariance and symmetry properties of (\ref{p2a}) over the receive MAs' APV $\mathbf{y}$ and provide the optimal closed-form solution to problem (P).
Moreover, we characterize the sensing performance gain provided by receive MAs, which is shown to be capped at 4.77 dB.
\vspace{-0.3cm}
\subsection{Optimal Solution}
In this subsection, a closed-form optimal solution to problem (P) is derived, where we assume that the transmit antennas are in the form of a uniform linear array (ULA) with half wavelength spacing. Under this assumption, problem (P) can be reduced to\vspace{-0.2cm} 
\begin{equation}
	\label{p3}\small
	\begin{aligned}
	(\text{AP})	\quad \quad	\max\limits_{\mathbf{y},\mathbf{w}}\;\;
		&{\begin{vmatrix}{{\mathbf{a}^H}{\mathbf{w}}}\end{vmatrix}^2}
		\underbrace{\left(\sum\limits_{i=1}^{N_r}y_i^2
		-\frac{1}{N_r}\left(\sum\limits_{i=1}^{N_r}y_i\right)^2\right)}_{\triangleq{f(\mathbf{y})}}\\
		\text{s.t.}\;\;&\begin{vmatrix}{{\mathbf{h}^H}{\mathbf{w}}}\end{vmatrix}^2\ge\Gamma\sigma_C^2,\\
		&\|
			\mathbf{w}
		\|^2\le{P_T},
		\mathbf{V}\mathbf{y}\preceq\mathbf{l}_v.
	\end{aligned} \vspace{-0.1cm}
\end{equation}
As can be seen, addressing problem (AP) is challenging even in this simplified scenario due to its highly non-convexity and the coupling between the variables $\mathbf{y}$ and $\mathbf{w}$. However, we also observe that $\mathbf{y}$ and $\mathbf{w}$ are decoupled in the constraints of problem (AP), and the objective function can be separated into two terms, i.e., ${|{{\mathbf{a}^H}{\mathbf{w}}}|}$ and $f(\mathbf{y})$. 
%Besides, we can infer that the gains provided by adjusting the receive antennas are relatively limited as compared to the case with movable transmit antennas.
This is intuitive, as the sensing antennas (i.e., the receive MAs) do not affect the information transmission and thus do not relate to the transmit beamforming. %The detailed analysis will be provided in the following. 

%To proceed, we first obtain the following optimal solution of $\mathbf{w}$ \cite{9652071}:
To proceed, we obtain the optimal solution of $\mathbf{w}$ as \cite{9652071}\vspace{-0.1cm}
\begin{subequations}
	\label{q11}\small
	\begin{numcases}{\mathbf{w}=}
%		\sqrt{P_T}\dfrac{\mathbf{a}}{\|\mathbf{a}\|}, 
		\sqrt{P_T} {\mathbf{a}}/{\|\mathbf{a}\|}, 
		& \text{if $P_T\begin{vmatrix}{{\mathbf{h}^H}{\mathbf{a}}}\end{vmatrix}^2>N_t\Gamma\sigma_C^2$,}
		\label{q11a} \\
		c_1\mathbf{u}_1+c_2\mathbf{a}_u,
		& \text{otherwise,} \label{q11b}
\end{numcases}
\end{subequations}
where
$		\mathbf{u}_1=\frac{\mathbf{h}}{\|\mathbf{h}\|} $,
		$\mathbf{a}_u=\frac{\mathbf{a}-(\mathbf{u}_1^H\mathbf{a})\mathbf{u}_1}
		{\|\mathbf{a}-(\mathbf{u}_1^H\mathbf{a})\mathbf{u}_1\|}$,
	$	c_1=\sqrt{\frac{\Gamma\sigma_C^2}{\|\mathbf{h}\|^2}}
		\frac{\mathbf{u}_1^H\mathbf{a}}{\|\mathbf{u}_1^H\mathbf{a}\|}$ and 
		$c_2=\sqrt{P_T-\frac{\Gamma\sigma_C^2}{\|\mathbf{h}\|^2}}
		\frac{\mathbf{a}_u^H\mathbf{a}}{\|\mathbf{a}_u^H\mathbf{a}\|}.$
%\begin{equation}\small
%	\begin{aligned}
%		&\mathbf{u}_1={\mathbf{h}}/{\|\mathbf{h}\|}, 
%		\mathbf{a}_u={(\mathbf{a}-(\mathbf{u}_1^H\mathbf{a})\mathbf{u}_1)}/
%		{\begin{Vmatrix}\mathbf{a}-(\mathbf{u}_1^H\mathbf{a})\mathbf{u}_1\end{Vmatrix}},\\
%		&c_1=\sqrt{\frac{\Gamma\sigma_C^2}{\|\mathbf{h}\|^2}}
%		\dfrac{\mathbf{u}_1^H\mathbf{a}}{\begin{vmatrix}\mathbf{u}_1^H\mathbf{a}\end{vmatrix}}, 
%		c_2=\sqrt{P_T-\frac{\Gamma\sigma_C^2}{\|\mathbf{h}\|^2}}
%		\dfrac{\mathbf{a}_u^H\mathbf{a}}{\begin{vmatrix}\mathbf{a}_u^H\mathbf{a}\end{vmatrix}}.
%	\end{aligned}
%\end{equation}
Consequently, problem (AP) can be equivalently transformed into the following problem without loss of optimality:
	\begin{flalign}
		\label{p4}\small
	\quad\quad\quad\quad\quad\;\; (\text{AP-m})	\quad \quad  &\max\limits_{\mathbf{y}}\;\;
		f(\mathbf{y}) \notag &&
		%=\sum\limits_{i=1}^{N_r}y_i^2
		%-\frac{1}{N_r}\left(\sum\limits_{i=1}^{N_r}y_i\right)^2
		\\
		&\;\;\;\text{s.t.}\;\;\mathbf{V}\mathbf{y}\preceq\mathbf{l}_v. &&
	\end{flalign}
Although problem (AP-m) is still non-convex due to the non-convex $f(\mathbf{y})$, we will prove that it admits a closed-form optimal solution.\footnote{Problem (AP-m) is similar to the one considered in \cite{ma2024movableantennaenhancedwireless}, where mathematical induction is used to derive the corresponding optimal solution. However, different from  \cite{ma2024movableantennaenhancedwireless}, this paper presents a completely different proof based on the translational invariance and symmetry properties of $f(\mathbf{y})$.} 
Before providing the formal proof, we first present the following lemmas, which state the properties of \( f(\mathbf{y}) \) and the conditions that the optimal solution must satisfy.
\begin{Lemma}
	For any $a\in\mathbb{R}$, $f(\mathbf{y})$ must satisfy the following translational invariance property:
	$f(\mathbf{y})=f(\mathbf{y}+a)$.
%	\begin{equation}
%		\begin{aligned}
%			f(\mathbf{y})=f(\mathbf{y}+a).
%		\end{aligned}
%	\end{equation}
\end{Lemma}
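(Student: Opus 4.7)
The plan is to verify the translational invariance of $f(\mathbf{y})$ by direct algebraic substitution, since $f$ is a simple quadratic expression in the entries of $\mathbf{y}$. Here ``$\mathbf{y}+a$'' should be read as $\mathbf{y}+a\mathbf{1}$, i.e., shifting every antenna position by the same scalar $a$, which physically corresponds to a rigid translation of the receive array along its axis and is expected to have no effect on the variance-like quantity $f(\mathbf{y})$.

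Concretely, I would substitute $y_i \mapsto y_i + a$ into both terms of $f(\mathbf{y}) = \sum_{i=1}^{N_r} y_i^2 - \frac{1}{N_r}\bigl(\sum_{i=1}^{N_r} y_i\bigr)^2$. Expanding the first term yields $\sum_i y_i^2 + 2a\sum_i y_i + N_r a^2$, while expanding the squared sum in the second term yields $\bigl(\sum_i y_i\bigr)^2 + 2 a N_r \sum_i y_i + N_r^2 a^2$. Multiplying the latter by $1/N_r$ and subtracting, the cross terms $2a\sum_i y_i$ cancel, and the $N_r a^2$ contributions also cancel, leaving exactly $f(\mathbf{y})$.

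An equivalent, more conceptual argument I could present in one line is to observe that $f(\mathbf{y}) = N_r \cdot \widehat{\mathrm{Var}}(\mathbf{y})$, where $\widehat{\mathrm{Var}}$ denotes the (uncorrected) sample variance of the entries of $\mathbf{y}$; translation invariance of variance under a common shift then immediately implies $f(\mathbf{y}+a) = f(\mathbf{y})$. I would likely include both the direct expansion and this one-line interpretation, since the latter gives useful intuition that will be reused when combining this lemma with the symmetry property in the subsequent step of the closed-form optimality proof.

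There is no real obstacle here: the lemma is a short identity check, and the only thing to be careful about is notational, namely making the broadcasting convention $\mathbf{y}+a := \mathbf{y}+a\mathbf{1}$ explicit so that the statement is well-typed.
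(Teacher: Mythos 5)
Your proof is correct and follows essentially the same route as the paper: a direct algebraic verification that the cross terms and $a^2$ terms cancel under the shift $\mathbf{y}\mapsto\mathbf{y}+a\mathbf{1}$. The paper merely packages the same computation as a quadratic form $f(\mathbf{y})=\mathbf{y}^T\mathbf{Q}\mathbf{y}$ with the centering matrix $\mathbf{Q}=\mathbf{I}_{N_r}-\frac{1}{N_r}\mathbf{J}_{N_r}$ satisfying $\mathbf{1}^T\mathbf{Q}=\mathbf{0}^T$, which is equivalent to your componentwise expansion and to your sample-variance interpretation.
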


\begin{proof}
Please refer to Appendix A. 
\end{proof}

\begin{Lemma}
	For any $D_y\in\mathbb{R}$, $f(\mathbf{y})$ must satisfy the following symmetry property:
	$f(\mathbf{y})=f(D_y-\mathbf{y})$. 
%\begin{equation}
%	\begin{aligned}
%		f(\mathbf{y})=f(D_y-\mathbf{y}).
%	\end{aligned}
%\end{equation}
\end{Lemma}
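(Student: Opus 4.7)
The plan is to derive the symmetry property by piggy-backing on Lemma 1 together with the obvious even symmetry of $f$ in its argument. Reading $D_y-\mathbf{y}$ as the vector $D_y\mathbf{1}-\mathbf{y}$, the identity to establish is $f(D_y\mathbf{1}-\mathbf{y})=f(\mathbf{y})$.

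First I would note the trivial fact that $f$ is an even function, i.e.\ $f(-\mathbf{y})=f(\mathbf{y})$. This is immediate because $f(\mathbf{y})$ depends on $\mathbf{y}$ only through $\sum_i y_i^2$ and $(\sum_i y_i)^2$, and both quantities are invariant under the sign flip $\mathbf{y}\mapsto-\mathbf{y}$. Next I would invoke Lemma 1 with the scalar shift $a=D_y$ applied to the vector $-\mathbf{y}$; this yields
\begin{equation*}
f(-\mathbf{y}+D_y\mathbf{1})=f(-\mathbf{y}).
\end{equation*}
Chaining these two identities gives $f(D_y\mathbf{1}-\mathbf{y})=f(-\mathbf{y})=f(\mathbf{y})$, which is the desired symmetry.

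As a sanity check, I would also outline the direct algebraic verification: setting $z_i=D_y-y_i$, one computes $\sum_i z_i^2=N_rD_y^2-2D_y\sum_i y_i+\sum_i y_i^2$ and $\tfrac{1}{N_r}(\sum_i z_i)^2=N_rD_y^2-2D_y\sum_i y_i+\tfrac{1}{N_r}(\sum_i y_i)^2$; subtracting the second from the first makes the $D_y$-dependent terms cancel, leaving exactly $f(\mathbf{y})$. Either route works, but the first one is cleaner because it reuses Lemma 1 verbatim and highlights that symmetry is a consequence of translational invariance combined with sign-flip invariance.

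There is no real obstacle here: the only subtlety is notational, namely interpreting $D_y-\mathbf{y}$ as a componentwise operation where the scalar $D_y$ is broadcast to every entry of $\mathbf{y}$. Once this convention is fixed, the lemma reduces to a two-line argument, and the physical interpretation is that reflecting the antenna positions about the midpoint $D_y/2$ of the aperture leaves the sensing-related variance term untouched, a fact that will be useful for pinning down the optimal APV in the closed-form characterization that follows.
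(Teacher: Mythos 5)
Your proof is correct and matches what the paper intends: the paper omits the argument, stating only that it is ``similarly proved as Lemma 1,'' and your direct algebraic verification via $z_i=D_y-y_i$ is exactly that computation (equivalently, expanding $f(D_y\mathbf{1}-\mathbf{y})=(D_y\mathbf{1}-\mathbf{y})^T\mathbf{Q}(D_y\mathbf{1}-\mathbf{y})$ with $\mathbf{1}^T\mathbf{Q}=\mathbf{0}^T$). Your first route, factoring the symmetry as evenness of $f$ composed with the translational invariance of Lemma 1, is a slightly cleaner packaging of the same fact and is equally valid.
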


\begin{proof}
	This lemma can be similarly proved as Lemma 1, and the detailed proof is omitted here due to space limitation.
\end{proof}

\begin{Lemma}
The optimal solution of problem (AP-m) satisfies $y_i=y_{i-1}+d,\;2\le{i}\le\left\lfloor{\frac{N_r}{2}}\right\rfloor$.
%\begin{equation}
%	\begin{aligned}
%		y_i=y_{i-1}+d,\;2\le{i}\le\left\lfloor{\frac{N_r}{2}}\right\rfloor.
%	\end{aligned}
%\end{equation}
\end{Lemma}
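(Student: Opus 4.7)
My plan rests on three observations: (i) $f$ is a convex quadratic in $\mathbf{y}$; (ii) after translating, the feasible set is a bounded polytope; and (iii) Lemma~2 lets me reverse the array end-to-end without changing $f$. Concretely, $f(\mathbf{y})=\mathbf{y}^{T}\bigl(\mathbf{I}-\tfrac{1}{N_r}\mathbf{1}\mathbf{1}^{T}\bigr)\mathbf{y}$, whose Hessian is (up to the factor $2$) the centering projection and is therefore positive semidefinite, so $f$ is convex. By Lemma~1 I may fix $y_{1}=0$; then $\mathcal{F}=\{\mathbf{y}:y_{1}=0,\ y_{k}-y_{k-1}\ge d,\ y_{N_r}\le D_y\}$ is a bounded polytope, and a convex function attains its maximum on a polytope at an extreme point. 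It therefore suffices to examine the vertices of $\mathcal{F}$.

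Next I would enumerate those vertices. With $y_{1}=0$ prescribed, $\mathcal{F}$ lives in $\mathbb{R}^{N_r-1}$ and is cut out by $N_r-1$ spacing inequalities plus one aperture inequality, so each vertex has exactly $N_r-1$ linearly independent tight constraints. Linear independence is immediate because the spacing rows of $\mathbf{V}$ form a bidiagonal block. Consequently a vertex falls into one of two cases: (a) all $N_r-1$ spacings are tight, i.e.\ $y_{k}=(k-1)d$, for which the lemma's conclusion is trivial; or (b) the aperture constraint and exactly $N_r-2$ spacing constraints are tight, yielding a unique index $i^{\star}$ with $y_{i^{\star}}-y_{i^{\star}-1}>d$ and $y_{k}-y_{k-1}=d$ for every other $k$. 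Solving these equations gives $y_{k}=(k-1)d$ for $k<i^{\star}$ and $y_{k}=D_y-(N_r-k)d$ for $k\ge i^{\star}$, with the single gap of width $D_y-(N_r-2)d$.

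Finally, I would invoke Lemma~2 to push the unique gap into the upper half. If $i^{\star}>\lfloor N_r/2\rfloor$, the claim holds verbatim. Otherwise, define the reversed configuration $\tilde{y}_{i}\triangleq D_y-y_{N_r+1-i}$; it is feasible (the aperture is tight, so $\tilde{y}_{1}=0$), it preserves $f$ by Lemma~2 together with the symmetry of $f$ in its entries, and its gap sits at index $N_r+2-i^{\star}$. A short parity check shows $N_r+2-i^{\star}>\lfloor N_r/2\rfloor$ whenever $i^{\star}\le\lfloor N_r/2\rfloor$, for both even and odd $N_r$, so $\tilde{\mathbf{y}}$ is an optimal solution obeying $\tilde{y}_{i}-\tilde{y}_{i-1}=d$ for $2\le i\le\lfloor N_r/2\rfloor$.

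The main obstacle I anticipate is the rigorous justification of the one-gap structure in case~(b): I need to verify that among the $N_r$ ways of selecting $N_r-1$ active constraints out of $N_r$, every selection yields a nondegenerate vertex of $\mathcal{F}$. The bidiagonal structure of $\mathbf{V}$ makes the linear-independence check routine, but some care is required for the boundary case $D_y=(N_r-1)d$, in which (a) and~(b) coalesce and the lemma becomes trivial.
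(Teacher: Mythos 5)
Your proof is correct, but it takes a genuinely different route from the paper's. The paper argues coordinate-by-coordinate: holding $\mathbf{y}\backslash y_k$ fixed, convexity of $f$ in $y_k$ alone forces the optimum to one of the two endpoints $y_{k-1}+d$ or $y_{k+1}-d$, and Appendix B then compares the two resulting configurations $\hat{\mathbf{y}}$ and $\tilde{\mathbf{y}}$ explicitly, showing the left endpoint wins for $k\le\lfloor N_r/2\rfloor$ via term-by-term bounds such as $\sum_{i\neq k}y_i>\tfrac{(N_r-1)^2}{2}d$ and $y_{k-1}+y_{k+1}<(N_r-1)d$. You instead make a single global argument: after using Lemma 1 to fix $y_1=0$, the feasible set is a compact polytope, a convex function attains its maximum at a vertex, and the bidiagonal structure of $\mathbf{V}$ lets you enumerate the vertices as either the full ULA or a one-gap configuration, after which Lemma 2 pushes the gap into the upper half. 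Your route buys a complete structural characterization of all candidate optima --- which in fact makes Theorem 1 essentially immediate, since one only needs to locate the best gap index --- and it sidesteps the delicate endpoint-comparison algebra of Appendix B (in particular the step $y_{k-1}+y_{k+1}\le(k-2)d+kd$, which appears to presuppose that the first $k+1$ antennas are already packed at spacing $d$, i.e., part of what is being proved). The one caveat is that your reflection step establishes the lemma for \emph{some} optimal solution rather than for every one; since the optimal set is closed under the reflection and the paper itself lists two reflection-equivalent optima for odd $N_r$ in Theorem 1, this is exactly the form in which the lemma is used downstream, but it is worth stating explicitly.
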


\begin{proof}
	Please see Appendix B. 
\end{proof}

%Using Lemmas 1 and 2, we can derive two useful properties of the objective function. Combining these with Lemma 3, we conclude that the optimal solution to problem (AP-m) under the given constraints must satisfy specific conditions.
Based on the above lemmas, we have the following theorem that provides the optimal antenna positions in the receive-MA-only case.

\begin{Theorem}
The optimal solution of problem (AP-m) is given by \vspace{-0.2cm}
\begin{equation}
	\label{q10}\small
	\mathbf{y}_\text{opt}=
	\left\{
	\begin{aligned}
		[
		&0,d,\cdots,(\frac{N_r}{2}-1)d,D_y-(\frac{N_r}{2}-1)d,\\
		&D_y-(\frac{N_r}{2}-2)d,\cdots,D_y-d,D_y
		]^T,N_r\;\text{is even};\\
		[&0,\cdots,(\frac{N_r-3}{2})d
		,(\frac{N_r-1}{2})d\;\text{or}\;D_y-(\frac{N_r-1}{2})d,\\
		&D_y-(\frac{N_r-3}{2})d
		,\cdots,D_y-d,D_y
		]^T,N_r \;\text{is odd}.
	\end{aligned}
	\right.\vspace{-0.2cm}
\end{equation}
\end{Theorem}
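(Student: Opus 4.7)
The approach is to combine the three lemmas to reduce the multivariate optimization to a one-dimensional scalar problem. Since both $f$ and the feasible set (whose constraints depend only on differences of the $y_i$) are translation-invariant, by Lemma 1 I fix $y_1 = 0$ without loss of optimality, so that the aperture constraint simplifies to $y_{N_r} \le D_y$. Lemma 3 then immediately forces $y_i = (i-1)d$ for $1 \le i \le \lfloor N_r/2 \rfloor$, which pins down the left half of the array.

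To constrain the right half, I would apply Lemma 3 to the reflected-and-reversed vector $\tilde{\mathbf{y}} := (D_y - y_{N_r},\, D_y - y_{N_r-1},\, \ldots,\, D_y - y_1)^T$, which by Lemma 2 (combined with Lemma 1) attains the same objective value and lies in the same feasible set. This yields $y_{N_r - i + 1} = y_{N_r} - (i-1)d$ for $1 \le i \le \lfloor N_r/2 \rfloor$, so every coordinate is now determined by the single scalar $y_{N_r}$ (except possibly the central coordinate when $N_r$ is odd). Substituting these arithmetic expressions into $f$ reduces problem (AP-m) to the maximization of a quadratic in $y_{N_r}$ over the interval $[(N_r - 1)d,\, D_y]$. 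The quadratic has positive leading coefficient, consistent with the strict convexity $\partial^2 f/\partial y_{N_r}^2 = 2 - 2/N_r > 0$, and a direct comparison shows its vertex lies strictly to the left of the feasible interval, so the maximum is attained at $y_{N_r} = D_y$. This establishes the configuration claimed for even $N_r$.

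For odd $N_r$, the same argument determines every coordinate except the central one $y_{(N_r+1)/2}$, whose feasible range is $[kd,\, D_y - kd]$ with $k = (N_r-1)/2$. Since $f$ is strictly convex in each coordinate (by the same second-derivative computation), the maximum over this interval is attained at one of its two endpoints, and by Lemma 2 the two endpoints produce the same value of $f$. This yields the two equivalent choices in the theorem statement.

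The main obstacle is the bookkeeping in the reflection step: I must verify that the re-sorted $\tilde{\mathbf{y}}$ still obeys the strict ordering $\tilde{y}_1 < \tilde{y}_2 < \cdots < \tilde{y}_{N_r}$ and has aperture exactly $D_y$, so that Lemma 3 genuinely applies and the resulting arithmetic progression can be mapped back onto the original coordinates $y_{k+1},\ldots,y_{N_r}$ consistently. Once this reduction is justified, the final one-dimensional quadratic maximization is mechanical.
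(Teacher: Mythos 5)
Your proposal is correct and follows essentially the same route as the paper: Lemma 3 pins the left half to a ULA anchored at $0$, and Lemma 2 (applied via the reflected-and-reversed configuration) pins the right half to a ULA anchored at $y_{N_r}$, after which symmetry handles the odd-$N_r$ middle element. You actually supply a step the paper leaves implicit — the final one-dimensional quadratic maximization showing $y_{N_r}=D_y$ is optimal (vertex at $(N_r-2)d/2$, strictly left of the feasible interval $[(N_r-1)d,\,D_y]$) — which makes your argument slightly more complete than the paper's own.
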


%\begin{figure*}[hb] % hb底部，ht为头部
%	\centering % 公式居中
%	\hrulefill % 添加一条水平线
%	\vspace*{8pt} % 调整线与公式之间的距离
%	\begin{equation}
%		\label{q10}
%		\mathbf{y}_\text{opt}=
%		\left\{
%		\begin{aligned}
%			&\left[
%			\begin{array}{ccccccccc}0,d,\cdots,(\frac{N_r}{2}-1)d,D_y-(\frac{N_r}{2}-1)d
%				,D_y-(\frac{N_r}{2}-2)d,\cdots,D_y-d,D_y
%			\end{array}
%			\right]^T,N_r\;\text{is even},\\
%			&\left[\begin{array}{ccccccccc}0,d,\cdots,(\frac{N_r-3}{2})d
%				,(\frac{N_r-1}{2})d\;\text{or}\;D_y-(\frac{N_r-1}{2})d,D_y-(\frac{N_r-3}{2})d
%				,\cdots,D_y-d,D_y
%			\end{array}\right]^T,N_r \;\text{is odd}.
%		\end{aligned}
%		\right.
%	\end{equation}
%\end{figure*}

\begin{proof}
From Lemma 3, we can assert that the first $\lfloor\frac{N_r}{2}\rfloor$ antennas must form a ULA with spacing \( d \) starting from zero to $\left(\lfloor\frac{N_r}{2}\rfloor-1\right)d$. Similarly, based on the symmetry property of $f(\mathbf{y})$ proved in Lemma 2, the last $\lfloor\frac{N_r}{2}\rfloor$ antennas must also form a ULA starting from $D_y-\left(\lfloor\frac{N_r}{2}\rfloor-1\right)d$ to \( D_y \).
Therefore, (\ref{q10})
%when \( N_r \) is even, i.e., $\lfloor\frac{N_r}{2}\rfloor=\frac{N_r}{2}$, $\mathbf{y}_\text{opt}$ must satisfy $\mathbf{y}_\text{opt}=[0,d,\cdots,(\frac{N_r-2}{2})d,D_y-(\frac{N_r-2}{2})d,D_y-(\frac{N_r-4}{2})d,\cdots,D_y]^T$, while when \( N_r \) is odd, $\mathbf{y}_\text{opt}=[0,\cdots,(\frac{N_r-3}{2})d,(\frac{N_r-1}{2})d\;\text{or}\;D_y-(\frac{N_r-1}{2})d,D_y-(\frac{N_r-3}{2})d
%,\cdots,D_y
%]^T$ 
must hold, which completes the proof.
\end{proof}
 As can be seen from (\ref{q10}), in order to maximize the sensing performance, the MAs should be divided into two groups with equal size, each forming a ULA with antenna spacing \( d \). With given antenna aperture $D_y$, these two groups of MAs should be positioned as far apart from each other as possible, thus maximizing the antenna aperture utilization. It is important to mention that this result is quite intuitive, which resembles how our eyes function, i.e., wider spacing between our eyes enables us to better perceive the object positions in space and also enhances our depth perception and spatial awareness. 
 %For ease of illustration, we plot in Fig. 2 the optimal antenna distribution provided in (\ref{q10}). 
 \vspace{-0.2cm}
 \subsection{CRB Performance Gain Analysis}
 {In our work, we interpret $\theta$ as the direction of interest, i.e., the estimated/predicted target bearing, thus minimizing $\text{CRB}(\theta)$ corresponds directly to optimizing the beamforming vector and APV for that particular azimuth angle.
 	This formulation is typical in the target tracking scenario, where the radar steers its beam toward an estimated/predicted direction to track the movement of the target. Since target motion is continuous, the optimal beamformer and antenna positions only need to vary when there is a sufficient change of the target's angle with the BS.
 	Under this formulation, the full-aperture span $D_y$ of a non-optimized sparse array yields a lower CRB than a half-wavelength ULA with the same $N_r$.  A direct comparison between the non-optimized sparse-array (ULAF) baseline, where antenna spacing is $D_y/(N_r-1)$, and the half-wavelength ULA (ULAH) is given by $			{\text{CRB}_\text{ULAH}}/{\text{CRB}_\text{ULAF}}={f(\mathbf{y}_\text{ULAF})}/{f(\mathbf{y}_\text{ULAH})}
 	={D_y^2}/({(N_r-1)^2 d^2})
 	\ge 1$.
% 	\vspace{-0.1cm}
% 	\begin{equation}\small
% 		\begin{aligned}
% 			&\frac{\text{CRB}_\text{ULAH}}{\text{CRB}_\text{ULAF}}=\frac{f(\mathbf{y}_\text{ULAF})}{f(\mathbf{y}_\text{ULAH})}
% 			&=\frac{D_y^2}{(N_r-1)^2 d^2}
% 			\ge 1,
% 		\end{aligned}\vspace{-0.2cm}
% \end{equation}
}

Thus, to explore the sensing performance limit brought by the movable receive antennas, we further analyze the ratio between the CRB with receive MAs and the CRB with receive ULAF scheme.
 For ease of analysis and without loss of generality, we assume that the number of receive antennas, i.e., \(N_r\), is even. Then, by substituting (\ref{q10}) and  $\mathbf{y}_\text{ULAF}=[0,\frac{D_y}{N_r-1},\frac{2D_y}{N_r-1},...,D_y]^T$ into \(f(\mathbf{y})\), we obtain\vspace{-0.1cm}
\begin{equation}
	\label{approx1}\small
	\begin{aligned}
	&\frac{\text{CRB}_\text{ULAF}}{\text{CRB}_\text{MA}}=\frac{f(\mathbf{y}_\text{opt})}{f(\mathbf{y}_\text{ULAF})}\\
	&=\frac{N_r-2}{N_r+1}\frac{(N_r-1)d}{D_y}\left(\frac{(N_r-1)d}{D_y}-3\right)
		+3\frac{N_r-1}{N_r+1}
		\\&<3\frac{N_r-1}{N_r+1},
%	&\;{\approx}\left(\frac{(N_r-1)d}{D_y}\right)^2-3\left(\frac{(N_r-1)d}{D_y}\right)+3,
	\end{aligned}\vspace{-0.1cm}
\end{equation}
%where 
%%the subscript ``ULA" indicates that the receive antennas are arranged in ULA with a total span of \(D\), the subscript ``opt" denotes that the receive antennas optimally deployed as in (\ref{q10}) and 
%the approximation holds under the condition that \(N_r\gg1\).
where the inequality holds due to the fact that the antenna aperture $D_y$ satisfies $D_y \ge (N_r - 1)d$ and ${\text{CRB}_\text{ULAF}}/{\text{CRB}_\text{MA}}$ is a monotonically decreasing function of ${(N_r-1)d}/{D_y}$.
%Since the antenna aperture $D_y$ satisfies $D_y \ge (N_r - 1)d$, we can see that ${\text{CRB}_\text{ULA}}/{\text{CRB}_\text{MA}}$ is a monotonically decreasing function of ${(N_r-1)d}/{D_y}$, and thus $ {{\text{CRB}_\text{ULA}}/{\text{CRB}_\text{MA}}}<3\frac{N_r-1}{N_r+1}<3$. 
%Therefore, the sensing performance gain brought by receive MAs is upper-bounded by 4.77 dB.\footnote{\hl{The CRB gain is given by  
%$
%10 \lg\left({\text{CRB}_{\text{MA}}}/{\text{CRB}_{\text{ULA}}}\right) < 10 \lg 3 = 4.77\,\text{dB},
%$
%which can only be achieved when both $D_y$ and $N_r$ approach infinity. Since we are considering a far-field channel model, under this condition, the CRB gain at the receiver is guaranteed to be less than 4.77 dB.}} 
Therefore, the sensing performance gain brought by receive MAs is upper-bounded by \footnote{This upper bound is only achieved when both \( D_y \) and \( N_r \) approach infinity. Since we are considering a far-field channel model, the CRB gain at the receiver is restricted to less than 4.77 dB.}\vspace{-0.2cm}
\begin{equation}\small
	10 \lg\left(\frac{\text{CRB}_{\text{ULAF}}}{\text{CRB}_{\text{MA}}}\right) < 10 \lg 3 = 4.77\,\text{dB}.
\end{equation}

 \vspace{-0.2cm}
\section{Transmit MA Case}
In this section, we focus on the case with transmit MAs only, while the receive antennas are assumed to be a ULA with half wavelength spacing. 
First, we show that the coupling between the steering vector $\mathbf{a}(\mathbf{x})$ and the channel vector $\mathbf{h}(\mathbf{x})$ results in a ``quasi-harmonic" behavior of the objective function (\ref{p2a}). Thus, problem (P) has many local optimal solutions as well as saddle points, thus it is almost impossible to find its global optimal solution using existing algorithms. To address this difficulty, we analyze the LoS channel scenario first and propose a BT-BFS algorithm, which is proved to obtain the global optimal solution of problem (P) in the case with transmit MAs only. Additionally, to reduce the computational complexity of the BT-BFS algorithm, we further introduce a BT-DFS algorithm to obtain a local optimal solution with only linear complexity. Then, for the general NLoS channel scenario, we propose an MM-based RGP algorithm, where a meticulously designed method is presented to determine a high-performance initial point which can effectively prevent the algorithm from converging to a bad stationary point.
Besides, we discover that as the SNR threshold required for communication exceeds a certain value, the sensing CRB will inevitably increase, which cannot be avoided even when the receive antenna aperture is infinitely large. 
However, with movable transmit antennas, we can significantly raise this threshold with the help of our proposed algorithm, allowing the CRB to remain at the lowest level even at higher required communication SNR threshold.
\vspace{-0.1cm}
\subsection{Problem Transformation}
In the case with transmit MAs only, problem (P) can be rewritten as \vspace{-0.2cm}
\begin{subequations}
	\label{p5}\small
	\begin{align}
		  (\text{BP})	\quad \quad\max\limits_{\mathbf{x},\mathbf{w}}\;\;
		&{\begin{vmatrix}{{\mathbf{a}^H}{\mathbf{w}}}\end{vmatrix}^2} &&
		\label{p5a}\\
		\text{s.t.}\;\;&\begin{vmatrix}{{\mathbf{h}^H}{\mathbf{w}}}\end{vmatrix}^2\ge\Gamma\sigma_C^2,&&
		\label{p5b}\\
		&\|
			\mathbf{w}
		\|^2\le{P_T},
		\mathbf{U}\mathbf{x}\preceq\mathbf{l}_u.&&\label{p5c}
	\end{align}
\end{subequations}
Although problem (BP) looks very similar to problem (AP), they are fundamentally different because the receive antennas in (AP) do not participate in the information transmission process, whereas the transmit antenna positions in (BP) will affect the communication channel significantly, which further impacts the transmit beamforming design as well as both communication and sensing performance. Generally, problem (BP) is more challenging to solve than problem (AP) since the transmit APV \(\mathbf{x}\) appears in the exponential parts of the steering vector \(\mathbf{a}\) and the channel vector \(\mathbf{h}\) (see (\ref{s1}) and (\ref{q18})), and \(\mathbf{w}\) and \(\mathbf{x}\) are tightly coupled in both the constraints and objective function. 

To address problem (BP), we first propose to adopt the optimal closed-form solution for \(\mathbf{w}\) in (\ref{q11}), which is able to reduce the number of optimization variables and decouple the variables. Then, through some mathematical manipulations,
we can show that problem (BP) can be equivalently decomposed into the following two subproblems:\vspace{-0.5cm}

\begin{subequations}
	\label{p6}\small
	\begin{align}
	(\text{SP1})	\quad \quad 	\text{Find}&\;\;\mathbf{x}\label{p6a}&&
		\\
		\text{s.t.}&\;\;\begin{vmatrix}{{\mathbf{h}^H}{\mathbf{a}}}\end{vmatrix}^2>\frac{N_t}{P_T}\Gamma\sigma_C^2,\label{p6b}&&\\
		&\;\;\mathbf{U}\mathbf{x}\preceq\mathbf{l}_u,\label{p6c}&&
	\end{align}
\end{subequations}
\vspace{-0.7cm}
\begin{subequations}
	\label{p7}\small
	\begin{align}
\;	(\text{SP2})	\quad \quad\; \max\limits_{\mathbf{x}}&\;\;
		{f_t}{(\mathbf{x})}\label{p7a}&&
		\\
		\text{s.t.}&\;\;\begin{vmatrix}{{\mathbf{h}^H}{\mathbf{a}}}\end{vmatrix}^2\le
		\frac{N_t}{P_T}\Gamma\sigma_C^2,\label{p7b}&&\\ 
		&\quad \text{(\ref{p6c})},&&\nonumber 
	\end{align}
	\vspace{-0.2cm}
\end{subequations}
where \vspace{-0cm}
\begin{equation}
	\label{ft}\small
	\begin{aligned}
{f_t}{(\mathbf{x})}
\triangleq{{\frac{\sqrt{\Gamma\sigma_C^2}}{\|\mathbf{h}\|^2}
		{\begin{vmatrix}{{\mathbf{h}^H}{\mathbf{a}}}\end{vmatrix}}
		+\sqrt{P_T-\frac{\Gamma\sigma_C^2}{\|\mathbf{h}\|^2}}
		\sqrt{\frac{\|\mathbf{h}\|^2\|\mathbf{a}\|^2
				-|{{\mathbf{h}^H}{\mathbf{a}}}|^2}
			{\|\mathbf{h}\|^2}}}}.
	\end{aligned}\vspace{-0.1cm}
\end{equation}
Thus, if subproblem (SP1) has a feasible solution,
it is also the optimal solution of problem (BP); otherwise, the solution of subproblem (SP2) would be the optimal solution.
 
Then, without loss of optimality, subproblem (SP1) can be further transformed into \vspace{-0.1cm}
\begin{equation}
	\label{p8}\small
	\begin{aligned}
		(\text{BP1})	\quad \quad  \max\limits_{\mathbf{x}}\;\;&p_1(\mathbf{x})\triangleq\begin{vmatrix}{{\mathbf{h}^H}{\mathbf{a}}}\end{vmatrix}^2
		&&\\
		\text{s.t.}\;\;&\text{(\ref{p6b})},\text{(\ref{p6c})}.&&
	\end{aligned}\vspace{-0.1cm}
\end{equation}
Next, we resort to the following trigonometric transformations to simplify subproblem (SP2):\vspace{-0.2cm}
\begin{equation}
	\label{q12}\small
	\begin{aligned}
		\cos\upsilon(\mathbf{x})={{\begin{vmatrix}{{\mathbf{h}^H}{\mathbf{a}}}\end{vmatrix}}}/
		{({\|{\mathbf{h}}\|}\|\mathbf{a}\|)},\upsilon(\mathbf{x})\in[0,\frac{\pi}{2}],
	\end{aligned}
\end{equation}
\vspace{-0.4cm}
\begin{equation}
	\label{q13}\small
	\begin{aligned}
		\sin\phi(\mathbf{x})=
		\sqrt{{\Gamma\sigma_C^2}/{(P_T\|\mathbf{h}\|^2})}
		,\phi(\mathbf{x})\in[0,\frac{\pi}{2}],
	\end{aligned}\vspace{-0.2cm}
\end{equation}
based on which, $f_t(\mathbf{x})$ can be equivalently rewritten as \vspace{-0.1cm}
\begin{equation}
	\label{q20}\small
	\begin{aligned}
		f_t(\mathbf{x})&\overset{(a)}{=}\sqrt{N_t}\bigg(\frac{\sqrt{\Gamma\sigma_C^2}}{\|\mathbf{h}\|}
		\cos\upsilon
		+\sqrt{P_T-\frac{\Gamma\sigma_C^2}{\|\mathbf{h}\|^2}}\sin\upsilon\bigg)
		\\&=\sqrt{N_tP_T}\sin\left(\upsilon(\mathbf{x})+\phi(\mathbf{x})\right),
	\end{aligned}\vspace{-0.1cm}
\end{equation}
where $(a)$ holds due to the fact that $\|\mathbf{a}\|=\sqrt{N_t}$. Besides, by substituting (\ref{q12}) and (\ref{q13}) into $|{{\mathbf{h}^H}{\mathbf{a}}}|^2\le
\frac{N_t}{P_T}\Gamma\sigma_C^2$, we obtain\vspace{-0.1cm}
\begin{equation}
	\label{q15}\small
	 \phi(\mathbf{x})+\upsilon(\mathbf{x})\ge\frac{\pi}{2}.
%     \cos \upsilon(\mathbf{x})\le\sin \phi(\mathbf{x}).
\vspace{-0.1cm}
\end{equation}
Therefore, subproblem (SP2) can be equivalently recast as \vspace{-0.1cm}
\begin{equation}
	\label{p9}\small
	\begin{aligned}
	(\text{BP2})	\quad \quad 	\max\limits_{\mathbf{x}}\;\;&\sqrt{N_tP_T}\sin(\upsilon(\mathbf{x})+\phi(\mathbf{x}))\\
		\text{s.t.}\;\;&\text{(\ref{q15})}
		,\text{(\ref{p6c})}.
	\end{aligned}\vspace{-0.1cm}
\end{equation}
The advantages of adopting these trigonometric transformations will be explained later.

After the above decomposition and transformations, we observe that although the subproblems only contain a single variable \(\mathbf{x}\), they are still challenging to solve due to the term $|{{\mathbf{h}^H}{\mathbf{a}}}|$. Specifically, when expanding $|{{\mathbf{h}^H}{\mathbf{a}}}|$ using Euler's formula, it becomes the sum of multiple sine and cosine functions with different periods, exhibiting a harmonic-like characteristic. Thus, it is almost impossible to find the global optimal solutions of problems (BP1) and (BP2) using existing algorithms. Fortunately, when the channel between the BS and the user follows the LoS model, i.e., \(L_t = 1\), these two subproblems can be further simplified, and the corresponding optimal solutions can be obtained. Therefore, in the following, we will consider the LoS channel scenario first and then address the more general NLoS channel scenario.
\vspace{-0.3cm}
\subsection{LoS Channel Scenario}
In this scenario, the channel vector in (\ref{q18}) can be re-expressed as \vspace{-0.2cm}
\begin{equation}
	\label{q19}\small
	\mathbf{h}^H(\mathbf{x})=\sigma_1^*\begin{bmatrix}e^{-j\frac{2\pi}{\lambda}x_1\sin\theta_t^1},...,e^{-j\frac{2\pi}{\lambda}x_{N_t}\sin\theta_t^1}\end{bmatrix}.
\end{equation}
Thus, $\sin\phi(\mathbf{x})$ in (\ref{q13})  becomes a constant $\sqrt{\frac{\Gamma\sigma_C^2}{P_T|\sigma_1|^2N_t}}$ that is independent of $\mathbf{x}$. Note that this observation can be easily obtained after converting $f_t(\mathbf{x})$ into $\sqrt{N_tP_T}\sin(\upsilon+\phi)$ via the trigonometric transformations introduced in (\ref{q12}) and (\ref{q13}), whereas the same does not hold true for the original subproblem (SP2). 
Based on this key observation {and constraint (\ref{q15}), together with (\ref{q12}) and (\ref{q13}), we can see that the argument $\upsilon(\mathbf x)+\phi$ lies within $[\pi/2,\pi]$, where $\sin(\cdot)$ is strictly decreasing. Thus, maximizing $\sin(\upsilon(\mathbf{x})+\phi)$ is equivalent to minimizing $\upsilon(\mathbf{x})$. Since $\upsilon(\mathbf{x})\in[0,\pi/2]$ and $\cos \upsilon(\mathbf{x})$ is strictly decreasing on $\upsilon(\mathbf{x})\in[\pi/2,\pi]$, minimizing $\upsilon(\mathbf{x})$ is also equivalent to maximizing $\cos\upsilon(\mathbf{x})$. Therefore, problem (BP2) can be further converted into the following more concise form without loss of optimality:}\vspace{-0cm}
\begin{equation}
	\label{p10}\small
	\begin{aligned}
		\max\limits_{\mathbf{x}}\;\;&\sqrt{N_tP_T}\cos\upsilon(\mathbf{x})\\
		\text{s.t.}\;\;&\text{(\ref{q15})},\text{(\ref{p6c})}.
	\end{aligned}\vspace{-0.1cm}
\end{equation}
Furthermore, for ease of analysis, we substitute (\ref{q12}) back into subproblem (\ref{p10}) and obtain \vspace{-0.1cm}
\begin{equation}
	\label{p11}\small
	\begin{aligned}
		\max\limits_{\mathbf{x}}\;\;&\sqrt{{P_T}/{(N_t\begin{vmatrix}\sigma_1\end{vmatrix}^2)}}
		\begin{vmatrix}\mathbf{h}^H\mathbf{a}\end{vmatrix}\\
		\text{s.t.}\;\;&\text{(\ref{p7b})},\text{(\ref{p6c})}.
	\end{aligned}\vspace{-0.1cm}
\end{equation}
To this end, the subproblems (BP1) and (\ref{p11}) (equivalent to problem (BP2)) can be recombined into a simpler problem under the LoS scenario, which is given by\vspace{-0.1cm}
\begin{equation}
	\label{p12}\small
		 \begin{aligned}
		 	(\text{CP})	\quad \quad 
		\max\limits_{\mathbf{x}}\;\;&g(\mathbf{x})\triangleq
%		\begin{vmatrix}\mathbf{h}^H\mathbf{a}\end{vmatrix}=
		\begin{vmatrix}\sum\limits_{i=1}^{N_t}e^{-j\frac{2\pi}{\lambda}(\sin\theta_t^1+\sin\theta)x_i}\end{vmatrix}\\
		\text{s.t.}\;\;&\text{(\ref{p6c})}.
	\end{aligned}\vspace{-0.1cm}
\end{equation}
Problem (CP) contains \(N_t\) linear inequalities and its objective function is the modulus of the sum of \(N_t\) unit-magnitude complex numbers, i.e., $e^{-j\frac{2\pi}{\lambda}(\sin\theta_t^1+\sin\theta)x_i}$, $i=1,2,...,N_t$. In the real domain, this translates to the sum of \(N_t\) sine functions with different periods, which introduces strong periodicity in each dimension of \(\mathbf{x}\). 
%Moreover, it can be seen that the complexity of solving problem (CP) depends on the transmit antenna aperture, i.e., \( D_x \).
In the following, we analyze and solve problem (CP) under different values of \(D_x\).

First, when $D_x$ is sufficiently large and satisfies $D_x\ge\frac{(N_t-1)\lambda}{\sin\theta_t^1+\sin\theta}$, aligning the $N_t$ complex numbers along the same line in the complex plane yields the optimal solution of problem (CP), which is given by \vspace{-0.1cm}
\begin{equation}
	\label{q17}\small
	\begin{aligned}
		\mathbf{x}_\text{opt}=\left[
			0,  \frac{\lambda}{\sin\theta_t^1+\sin\theta} ,\cdots , \frac{(N_t-1)\lambda}{\sin\theta_t^1+\sin\theta}\right]^T.
	\end{aligned}\vspace{-0.1cm}
\end{equation}
Although the above optimal antenna positions are equidistant as in a conventional ULA, the antenna spacing becomes \(\frac{\lambda}{\sin\theta_t^1 + \sin\theta}\), instead of \(\frac{\lambda}{2}\).\footnote{This implies that when the azimuth AoD \(\theta_t^1\) of the LoS channel transmit path and the target angle \(\theta\) are relatively small, the antenna aperture \(D_x\) needs to be very large in order to satisfy (\ref{q17}).}

Second, when  $D_x<\frac{(N_t-1)\lambda}{\sin\theta_t^1+\sin\theta}$ (note that $D_x$ should satisfy $D_x\ge(N_t-1)d$), problem (CP) becomes more challenging to solve due to the insufficient space to align the \(N_t\) complex numbers. 
%Additionally, in the real domain, each variable \(x_i\) exhibits significant periodicity, which complicates the search process for the global optimum. 
To proceed, we first consider a general function defined as ${E(\bm{\beta})=|\sum\nolimits_{i=1}^{N}\rho_ie^{j\beta_i}|}$, where $
\bm{\beta}\triangleq [\beta_1, \cdots, \beta_N]^T \in\mathbb{R}^{N\times1}$ and $\rho_i \geq 0, \forall i$, then a useful property of $E(\bm{\beta})$ can be exploited to further simplify problem (CP), which is given as follows.
\begin{Lemma}
	%For $\forall\bm{\rho}\in\mathbb{R}^{N\times1}_{++}$, there exists a unique maximum $\sum\limits_{i=1}^{N}\rho_i$ for the function $E(\bm{\beta})=\begin{vmatrix}\sum\limits_{i=1}^{N}\rho_ie^{j\beta_i}\end{vmatrix}$ with $\bm{\beta}\in\mathbb{R}^{N\times1}$ as the vector variable, satisfying $\beta_m-\beta_n=2\pi{k},1\le{m},{n}\le{N},k\in\mathbb{Z}$.%
$E(\bm{\beta})$ has a unique local maximum value of $\sum\nolimits_{i=1}^{N}\rho_i$, and to attain this local maximum, $\bm{\beta}$ must satisfy \vspace{-0.1cm}
	\begin{equation}
		\label{Econ}\small
		\begin{aligned}
			\beta_m-\beta_n=2\pi{k},1\le{m},{n}\le{N},k\in\mathbb{Z}.
		\end{aligned}\vspace{-0.1cm}
	\end{equation}
\end{Lemma}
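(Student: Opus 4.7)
The plan is to establish two things: (i) the global maximum of $E$ equals $\sum_{i=1}^{N}\rho_i$, attained precisely on the set (\ref{Econ}); and (ii) no strictly smaller value can be a local maximum, so the unique local maximum value is $\sum_{i=1}^{N}\rho_i$. Step (i) is immediate from the triangle inequality applied to the complex sum $V(\bm{\beta})\triangleq\sum_{i=1}^{N}\rho_i e^{j\beta_i}$: one has $|V(\bm{\beta})|\le\sum_{i=1}^{N}\rho_i|e^{j\beta_i}|=\sum_{i=1}^{N}\rho_i$, with equality exactly when all unit phasors $e^{j\beta_i}$ point in the same direction, i.e., when $\beta_m-\beta_n\in 2\pi\mathbb{Z}$ for every pair $(m,n)$, which is (\ref{Econ}).

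For step (ii) I would work with the smooth surrogate $E^{2}(\bm{\beta})=\sum_{i,\ell}\rho_i\rho_\ell\cos(\beta_i-\beta_\ell)$, bypassing the nondifferentiability of $E$ at $V=0$. Since $E\ge 0$, local maxima of $E$ and $E^{2}$ coincide, and the case $V=0$ corresponds to $E=0$, which is a global minimum (hence not a local maximum whenever some $\rho_i>0$). A direct computation gives $\partial E^{2}/\partial\beta_m=2\rho_m\,\mathrm{Im}\bigl(e^{-j\beta_m}V(\bm{\beta})\bigr)$, so at every critical point with $V(\bm{\beta})=Re^{j\psi}$, $R>0$, each $\beta_m$ must satisfy $\beta_m\equiv\psi$ or $\beta_m\equiv\psi+\pi\pmod{2\pi}$.

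It then remains to exclude any nontrivial set $\mathcal{S}=\{m:\beta_m\equiv\psi+\pi\}$ at a local maximum. I would compute the diagonal Hessian entry $\partial^{2}E^{2}/\partial\beta_m^{2}=2\rho_m^{2}-2\rho_m\,\mathrm{Re}\bigl(e^{-j\beta_m}V(\bm{\beta})\bigr)$, which evaluates to $2\rho_m(\rho_m-R)$ for $m\notin\mathcal{S}$ and to $2\rho_m(\rho_m+R)>0$ for $m\in\mathcal{S}$. A strictly positive diagonal entry rules out a local maximum, forcing $\mathcal{S}=\emptyset$; thus every $\beta_m\equiv\psi\pmod{2\pi}$, which is precisely (\ref{Econ}) and gives $E(\bm{\beta})=\sum_{i=1}^{N}\rho_i$.

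The hard part is conceptual rather than computational: the statement forbids \emph{any} additional local maximum of $E$, so the argument must certify non-maximality at every candidate critical point, including ``balanced'' phase configurations with $V=0$ for which higher-order tests might otherwise seem necessary. The gradient identity above collapses the $V=0$ case into the global minimum, while the diagonal Hessian test alone disposes of every remaining non-aligned critical point, so no full spectral analysis of the Hessian is required.
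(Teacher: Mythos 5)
Your proof is correct, and its key step differs from the paper's in a way worth noting. Both arguments share the same skeleton: pass to the smooth surrogate $E^2(\bm{\beta})$ (justified because $E\ge 0$), obtain the value $\sum_{i}\rho_i$ and the alignment condition (\ref{Econ}) from the equality case of the triangle inequality, and then rule out all other candidates. Where you diverge is in how the non-aligned candidates are eliminated. The paper argues by contradiction with a two-case geometric analysis: for non-collinear phasor configurations it splits the sum into two sub-sums separated by an angle $\eta^0\in(0,\pi)$ and perturbs $\eta^0$ via the cosine rule to increase $E^2$; for collinear-but-antipodal configurations it exhibits a direction of positive curvature in the Hessian. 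Your route instead uses the first-order condition $\partial E^2/\partial\beta_m=2\rho_m\,\mathrm{Im}(e^{-j\beta_m}V)=0$ to show that every critical point with $V\neq 0$ is already collinear (each phasor parallel or antiparallel to $V$), which makes the paper's entire non-collinear case unnecessary — such points are not even stationary, let alone maxima — and then your diagonal Hessian entry $2\rho_m(\rho_m+R)>0$ for antiparallel indices is essentially the paper's Case II. This is cleaner and avoids the somewhat delicate step in the paper's Case I of realizing the perturbed angle $\eta^1$ by an admissible perturbation of $\bm{\beta}^0$ inside the $\delta$-neighborhood. One small point to tighten: "a global minimum is hence not a local maximum" is not a valid inference in general (constant functions), so you should add the one-line observation that perturbing a single $\beta_m$ with $\rho_m>0$ away from a $V=0$ configuration makes $E$ strictly positive; alternatively, note that your own diagonal Hessian formula gives $\partial^2E^2/\partial\beta_m^2=2\rho_m^2>0$ at any $V=0$ critical point, which disposes of that case by the same test you already use.
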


\begin{proof}
	See Appendix C.
\end{proof}

Lemma 4 indicates that \( E(\bm{\beta}) \) only has one local maximum value which is also the global maximum value.
Besides, it can be observed that the objective function of problem (CP), i.e.,  \( g(\mathbf{x}) \), is a special case of \( E(\bm{\beta}) \) by setting $\bm{\rho} \triangleq [\rho_1,\cdots, \rho_N]^T=\mathbf{1}$ and $\bm{\beta}={-\frac{2\pi}{\lambda}(\sin\theta_t^1+\sin\theta)}\mathbf{x}$.
Since the feasible region of problem (CP) is a compact set and \( g(\mathbf{x}) \) is a continuous function, its maximum must be achieved either at a local maximum point or on the boundary.
%According to Lemma 4, the unique local maximum of \( g(\mathbf{x}) \) satisfies (\ref{Econ}), and the corresponding solution is given by (\ref{q17}) when $D_x$ is sufficiently large. 
According to Lemma 4,  when $D_x\ge\frac{(N_t-1)\lambda}{\sin\theta_t^1+\sin\theta}$, satisfying condition (\ref{Econ}) actually implies that (\ref{q17}) is fulfilled. Therefore, when $D_x<\frac{(N_t-1)\lambda}{\sin\theta_t^1+\sin\theta}$, the optimal solution of problem (CP) must lie on the boundary%However, since $D_x<\frac{(N_t-1)\lambda}{\sin\theta_t^1+\sin\theta}$, neither criteria (\ref{Econ}) nor solution (\ref{q17}) is satisfied. 
%Consequently, the maximum solution of problem (CP) in the case of $D_x<\frac{(N_t-1)\lambda}{\sin\theta_t^1+\sin\theta}$ must lie on the boundary of its feasible region
, which means that some of the linear inequalities in $\mathbf{U}\mathbf{x}\preceq\mathbf{l}_u$ must be active.
We assume that the number of active constraints is \( c \), with \( 1 \leq c \leq N_t-1 \).\footnote{If \( c=N_t \), then \( \mathbf{U}\mathbf{x} = \mathbf{l}_u \) and \( D_x = (N_t - 1)\lambda/2 \) hold, which implies that the transmit antennas should be arranged at equal spacing with an interval of $\lambda/2$. This is equivalent to the conventional ULA and thus we can safely ignore the case of $c=N_t$.}

Based on the above analysis, it follows that to obtain the optimal solution of problem (CP), we can check the corresponding Karush-Kuhn-Tucker (KKT) conditions. To describe the optimality conditions for a nonlinear optimization problem such as problem (CP), it is often required to assume that some regularity conditions are met \cite{book1}. Here, we employ a commonly used condition, known as the linear independence constraint qualification (LICQ), and the detailed verification is provided in Appendix D.

%\textit{Definition (LICQ \cite{book1}):}
%Given a point \( \mathbf{x} \) and an active set of a given nonlinear optimization problem, we say that the LICQ holds if the set of active constraint gradients 
%at \( \mathbf{x} \) is linearly independent.
%
%According to the above definition, we now compute the gradients of the inequality constraints in problem (CP), i.e., $\nabla_{\mathbf{x}}^T{\left(\mathbf{U}\mathbf{x}-\mathbf{l}_u\right)}=\mathbf{U}$,
%%\begin{equation}
%%	\begin{aligned}
%%		\nabla_{\mathbf{x}}^T{\left(\mathbf{U}\mathbf{x}-\mathbf{l}_u\right)}=\mathbf{U},
%%	\end{aligned}
%%\end{equation}
%where \(\mathbf{U}\) is an \( N_t \times N_t \) matrix given in (\ref{q16}) with \(\text{Rank}(\mathbf{U}) = N_t - 1\), and it can be easily verified that any 
%\( N_t-1 \) rows in \(\mathbf{U}\) are linearly independent. 
%As some of the constraints in problem (CP) must be active when $D_x<\frac{(N_t-1)\lambda}{\sin\theta_t^1+\sin\theta}$, let us assume that the number of active constraints is \( c \), where \( 1 \leq c \leq N_t-1 \).\footnote{If \( c=N_t \), then \( \mathbf{U}\mathbf{x} = \mathbf{l}_u \) and \( D_x = (N_t - 1)\lambda/2 \) hold, which implies that the transmit antennas should be arranged at equal spacing with an interval of $\lambda/2$. This is equivalent to the conventional ULA scheme and thus we can safely ignore the case $c=N_t$.} Based on the properties of \( \mathbf{U} \), we can infer that any \( c \) rows of \( \mathbf{U} \) must also be linearly independent, which indicates that problem (CP) satisfies the LICQ condition.

Now we are ready to establish the KKT conditions of problem (CP).
Suppose that \( \mathbf{x}^* \) is a local maximum point of (CP), and the LICQ holds at \( \mathbf{x}^* \), then there exists a Lagrange multiplier 
vector \( \bm{\lambda}^*=[\lambda_1^*, \lambda_2^*, \cdots, \lambda_{N_t}^*]^T\), such that the following conditions are satisfied at 
\( (\mathbf{x}^*, \bm{\lambda}^*) \):\vspace{-0.4cm}

\begin{subequations}
	\label{kkt}\small
	\begin{align}
		&\nabla_{\mathbf{x}} \mathcal{L}(\mathbf{x}^*, \bm{\lambda}^*) = \mathbf{0}, \label{kkt1}\\
		&\mathbf{U}\mathbf{\mathbf{x}^*}-\mathbf{l}_u\preceq\mathbf{0},\label{kkt2}\\
		&\bm{\lambda}^*\succeq\mathbf{0},\label{kkt3}\\
		&\bm{\lambda}^*\circ(\mathbf{U}\mathbf{\mathbf{x}^*}-\mathbf{l}_u)=\mathbf{0},\label{kkt4}
	\end{align}
\end{subequations}
where $\mathcal{L}(\mathbf{x}, \bm{\lambda})$ is the Lagrangian function defined as $	 	\mathcal{L}(\mathbf{x}, \bm{\lambda})=g(\mathbf{x})+
\bm{\lambda}^T(\mathbf{U}\mathbf{\mathbf{x}}-\mathbf{l}_u)$.
%\begin{equation}
%	\begin{aligned}
%	 	\mathcal{L}(\mathbf{x}, \bm{\lambda})=g(\mathbf{x})+
%	 	\bm{\lambda}^T(\mathbf{U}\mathbf{\mathbf{x}}-\mathbf{l}_u).
%	\end{aligned}
%\end{equation}
It is important to mention that solutions that satisfy the KKT conditions are not necessarily local optimum, but any local optimum must satisfy the KKT conditions. 
Furthermore, due to the highly non-convexity of $\nabla_{\mathbf{x}} \mathcal{L}(\mathbf{x}^*, \bm{\lambda}^*)$, it is generally impossible to find all the points that satisfy the KKT conditions, which means that obtaining the global optimal solution of problem (CP) is very difficult.
To tackle this difficulty, we will first present the proposed BT-BFS algorithm in the following to identify
all the local optimum of problem (CP) by utilizing the special properties of $E(\bm{\beta})$ given in Lemma 4. Then, we will provide the detailed proof for that the proposed algorithm is guaranteed to find the global optimal solution.

Without loss of generality, 
let $\mathcal{A}_c \triangleq \{ a_i|\mathbf{U}_{a_i}\mathbf{x} = ({\mathbf{l}_u})_{a_i},1\le{i}\le{c},a_1<a_2<\cdots<a_c \}$ denote the set of active constraint indices, where  
\(\mathbf{U}_{a_i}\) and \((\mathbf{l}_u)_{a_i}\) refer to the \(a_i\)-th row of \(\mathbf{U}\) and the \(a_i\)-th entry of \(\mathbf{l}_u\), respectively.
Due to the \textit{particularity} of the \(N_t\)-th constraint, i.e., $x_{N_t} -x_1 \le D_x$,  we propose to analyze problem (CP) by considering the following two cases.

\textbf{Case}\;\textbf{\uppercase\expandafter{\romannumeral1}}:
 \( a_c \leq N_t - 1 \). 
In this case, by substituting the \( c \) active constraints into $g(\mathbf{x})$, i.e., retaining $x_{a_i},1\le{i}\le{c-1}$ as free variables and substituting \( x_{a_i+1} = x_{a_i} + d \) into $g(\mathbf{x})$, the dimension of the original problem (CP) can be reduced from \(N_t\) to \(N_t - c\). 
After relabeling the remaining \( N_t - c \) variables to form a new vector variable $\mathbf{x}'$, problem (CP) can be reduced to \vspace{-0.1cm}
\begin{equation}
	\label{pcase1}\small
	\begin{aligned}
	(\text{CP-m})	\quad  \max\limits_{\mathbf{x}'}\;\;&{\tilde{g}}(\mathbf{x}')\triangleq
		\begin{vmatrix}
			\sum\limits_{i=1}^{N_t-c}r_ie^{-j\frac{2\pi}{\lambda}(\sin\theta_t^1+\sin\theta)x'_i}
		\end{vmatrix}\\
		\text{s.t.}\;\;&\mathbf{U}'\mathbf{x}'\preceq\mathbf{l}'_{u}.
	\end{aligned}\vspace{-0.2cm}
\end{equation}
where \vspace{-0.2cm}
%\begin{equation}
%	\small
%	\begin{aligned}
%		r_i=\sum\limits_{p=0}^{n_i}
%		e^{-j\frac{2\pi}{\lambda}(\sin\theta^1_t+\sin\theta)pd},
%	\end{aligned}
%\end{equation}
%\vspace{-0.3cm}
%\begin{equation}\small
%	\begin{aligned}
%		\mathbf{U}'=\left[\begin{array} {cccccc}
%			1 & -1 & 0 & \cdots & 0 & 0\\
%			0 & 1 & -1  & \cdots & 0 & 0\\
%			\vdots & \vdots  & \vdots & \ddots & \vdots & \vdots\\
%			0 & 0 & 0 &  \cdots & 1 & -1\\
%			-1 & 0 & 0 &  \cdots & 0 &1
%		\end{array}\right]_{(N_t-c)\times{(N_t-c)}},
%	\end{aligned}
%\end{equation}
%\vspace{-0.3cm}
%\begin{equation}\small
%	\begin{aligned}
%		\mathbf{l}'_u=\left[
%			-(n_1+1)d,  -(n_2+1)d ,\cdots , D_x-n_{N_t-c}d \right]^T,
%	\end{aligned}
%\end{equation}
\begin{equation}
	\small
	\begin{aligned}
		r_i &= \sum\limits_{p=0}^{n_i}
		e^{-j\frac{2\pi}{\lambda}(\sin\theta^1_t+\sin\theta)pd}, \\
		\mathbf{U}' &= \left[\begin{array} {cccccc}
			1 & -1 & 0 & \cdots & 0 & 0\\
			0 & 1 & -1  & \cdots & 0 & 0\\
			\vdots & \vdots  & \vdots & \ddots & \vdots & \vdots\\
			0 & 0 & 0 &  \cdots & 1 & -1\\
			-1 & 0 & 0 &  \cdots & 0 & 1
		\end{array}\right]_{(N_t-c)\times{(N_t-c)}}, \\
		\mathbf{l}'_u &= \left[
		-(n_1+1)d,  -(n_2+1)d ,\cdots , D_x-n_{N_t-c}d 
		\right]^T,
	\end{aligned}\vspace{-0.1cm}
\end{equation}
with $n_i$ denoting the number of active constraints related to $x'_i$. 
Let $\mathbf{x}^{\prime *}$ denote  the optimal solution of problem (CP-m). {Then, by setting $\rho_i = |r_i| \geq 0$ for all $i$, and introducing a phase offset $\bm{\zeta} = \angle \mathbf{r}$ in the argument $\bm{\beta} = -\frac{2\pi}{\lambda}(\sin\theta_t^1 + \sin\theta)\mathbf{x}' + \bm{\zeta}$, ${\tilde{g}}(\mathbf{x}')$ can be viewed as a special case of $E(\bm{\beta})$. Therefore, \( {\mathbf{x}{'}}^* \) must satisfy condition (\ref{Econ}) in Lemma 4, which is given as}\vspace{-0.1cm}
\begin{equation}
	\label{criteria}\small
	\begin{aligned}
		\frac{2\pi}{\lambda}(\sin\theta_t^1+\sin\theta)({x}_{i+1}^{\prime*}-{x}_{i}^{\prime*}){-(\angle{r_{i+1}}-\angle{r_{i}})}=2\pi{k_i},
	\end{aligned}\vspace{-0.1cm}
\end{equation}
where $1\le{i}\le{N_t-c-1},{k_i}\in\mathbb{Z}$.
Additionally, by combining (\ref{criteria}) with the constraint of problem (CP-m), i.e., ${x}_{i+1}^{\prime*}-{x}_{i}^{\prime*}\ge (n_i+1)d$, we can see that $k_i$ must satisfy \vspace{-0.1cm}
\begin{equation}
	\label{k_i_ineq}\small
	\begin{aligned}
		\frac{{\lambda(2\pi{k_i}+\angle{r_{i+1}}-\angle{r_{i}})}}{2\pi(\sin\theta_t^1+\sin\theta)}\ge{(n_i+1)d}.
	\end{aligned}\vspace{-0.1cm}
\end{equation}
{Consequently, the remaining $N_t - c$ inactive constraints form an unbounded set. Thus, we can denote $D_\text{min}={{{x}_{N_t-c}^{\prime{*}}-{x}_{1}^{\prime*}}+n_{N_t-c}d}$ as the smallest \( D_x \) to align the $N_t-c$ complex numbers $r_ie^{-j\frac{2\pi}{\lambda}(\sin\theta_t^1+\sin\theta)x_i^{\prime}},i=1,2,\cdots,N_t-c$.}
Then, by substituting (\ref{criteria}) into $D_\text{min}$, we have \vspace{-0.1cm}
	\begin{align}
		\label{Dmin}\small
		D_\text{min}=&\sum\limits_{i=1}^{N_t-c-1}({x}_{i+1}^{\prime*}-{x}_{i}^{\prime*})+n_{N_t-c}d
		\\=&
		\frac{\lambda}{\sin\theta_t^1+\sin\theta}\sum\limits_{i=1}^{N_t-c-1}{\!\!\!\!}\left( {k_i}+
		\frac{{\angle{r_{i+1}}-\angle{r_{i}}}}{2\pi}\right)+n_{N_t-c}d.\nonumber
	\end{align}
To determine the minimum value of \( D_x \) that satisfies (\ref{criteria}), we minimize \( D_{\text{min}} \), thereby obtaining the minimum \( k_i \) from (\ref{k_i_ineq}) as\vspace{-0.0cm}
\begin{equation}
	\label{k_i}\small
	\begin{aligned}
		k_i=\left\lceil{
			\frac{\sin\theta_t^1+\sin\theta}{\lambda}(n_i+1)d-\frac{{\angle{r_{i+1}}-\angle{r_{i}}}}{2\pi}
			}\right\rceil.
	\end{aligned}\vspace{-0.1cm}
\end{equation}
%Therefore, when \(D_x \geq D_{\text{min}}\), the optimal solution of problem (CP-m) is as follows:
%\begin{equation}
%	\label{pcase1opt}
%	\begin{aligned}
%		{x}_{i+1}^{\prime*}={x}_{i}^{\prime*}+\frac{k_i\lambda}{\sin\theta_t^1+\sin\theta}+\frac{{\lambda(\angle{r_{i+1}}-\angle{r_{i}})}}{2\pi(\sin\theta_t^1+\sin\theta)},
%	\end{aligned}
%\end{equation}
%and \({x}_{1}^{\prime*}\) can be assumed to be 0 without loss of generality (In fact, (\ref{q17}) is a special case of (\ref{pcase1opt}) when $c=0$.). And when \( D_x < D_{\text{min}} \), the optimal solution of problem (CP-m) will be attained when some of its constraints become active. 

\textbf{Case}\;\textbf{\uppercase\expandafter{\romannumeral2}}: \( a_c = N_t \). 
Now we consider the case when the $N_t$-th constraint is active, i.e., \(x_{N_t} - x_1 = D_x\). 
Let \(l\) denote the smallest index of $\mathbf{x}$ such that the $(l+1)$-th to the $N_t$-th constraints are active.
Similar to the previous case, by substituting the \( c \) active constraints into $g(\mathbf{x})$, i.e., retaining $x_{a_i},1\le{i}\le{c-(N_t-l)+1}$ and $x_1$ as free variables and substituting \( x_{a_i+1} = x_{a_i} + d \) and $x_{N_t-j}=x_1+D_x-jd,0\le{j}\le{N_t-l-1}$ into $g(\mathbf{x})$, 
we can obtain a similar problem as (CP-m), 
with\vspace{-0.1cm}
\begin{equation}\small
	\begin{aligned}
%		r_1 &=
%		\sum\limits_{p=0}^{n_1+l-N_t}
%		{\!\!\!}e^{-j\frac{2\pi}{\lambda}(\sin\theta^1_t+\sin\theta)pd}
%		+ \sum\limits_{p=0}^{N_t-l-1}
%		{\!\!\!}e^{j\frac{2\pi}{\lambda}(\sin\theta^1_t+\sin\theta)(pd-D_x)},\\
		{r_i}&  {=
     \begin{cases}
	\sum\limits_{p=0}^{n_1+l-N_t}
	e^{-j\frac{2\pi}{\lambda}(\sin\theta^1_t+\sin\theta)pd}\\
	\quad \quad + \sum\limits_{p=0}^{N_t-l-1}
	e^{j\frac{2\pi}{\lambda}(\sin\theta^1_t+\sin\theta)(pd-D_x)}, \quad i = 1,\\
	\sum\limits_{p=0}^{n_i}
	e^{-j\frac{2\pi}{\lambda}(\sin\theta^1_t+\sin\theta)pd}, \quad \quad\quad\; 2 \le i \le N_t-c,\\
    \end{cases}}\\
		\mathbf{U}' &= 
		\left[\begin{array} {ccccccc}
			1 & -1 & 0 & 0 & \cdots & 0 & 0\\
			0 & 1 & -1 & 0 & \cdots & 0 & 0\\
			\vdots & \vdots & \vdots & \vdots & \ddots & \vdots & \vdots\\
			0 & 0 & 0 & 0 & \cdots & 1 & -1
		\end{array}\right]_{(N_t-c)\times{(N_t-c)}},\\
		\mathbf{l}'_u &=
		\left[
		-(n_1+1)d,  -(n_2+1)d ,\cdots , -(n_{N_t-c}+1)d
		\right]^T.
	\end{aligned}
\end{equation}
{Here, the $N_t - c$ inactive constraints form a bounded set, which can be intuitively viewed as a closed chain. 
	To handle this chain, the boundary is “cut” at the last inactive constraint not related to the 
	$N_t$-th constraint, which can effectively open the chain and convert it into a similar unbounded set as in Case I.}
Due to the \textit{particularity} of the \(N_t\)-th constraint, we re-denote \(D_{\text{min}}\) in this case as follows:\vspace{-0.1cm}
\begin{align}
	\label{Dmin_2}\small
	D_\text{min}=&{{x}_{N_t-c}^{\prime*}-{x}_{1}^{\prime*}}+({N_t-l})d
	\\=&
	\frac{\lambda}{\sin\theta_t^1+\sin\theta}{\!\!\!\!}\sum\limits_{i=1}^{N_t-c-1}{\!\!\!\!}\left( {k_i}  + 
	\frac{{\angle{r_{i+1}}-\angle{r_{i}}}}{2\pi}\right)\!+\!({N_t-l})d,\nonumber
\end{align}
and the minimum $k_i$ to minimize $D_{\text{min}}$ is also given by (\ref{k_i}).
{Thus, the underlying factor for distinguishing the two cases is whether the last constraint is active or not, which determines if the inactive constraints form a bounded or unbounded set. This geometric difference directly results in the distinct expressions of $D_{\min}$ in (\ref{Dmin}) and (\ref{Dmin_2}). }

Therefore, we can see that with given $c$ and $a_c$, there exists a constant $D_{\text{min}}$ such that when $D_{\text{min}}\le D_x <\frac{(N_t-1)\lambda}{\sin\theta_t^1+\sin\theta} $, the local optimal solution of problem (CP) can also be easily obtained without exploiting the KKT conditions in (\ref{kkt}). Specifically, for both cases of $a_c \le N_t-1$ and $a_c=N_t$, the optimal solution of problem (CP-m) is as follows if \(D_x \geq D_{\text{min}}\):\vspace{-0.1cm}
\begin{equation}
	\label{pcase1opt}\small
	\begin{aligned}
		{x}_{i+1}^{\prime*}={x}_{i}^{\prime*}+\frac{k_i\lambda}{\sin\theta_t^1+\sin\theta}+\frac{{\lambda(\angle{r_{i+1}}-\angle{r_{i}})}}{2\pi(\sin\theta_t^1+\sin\theta)},
	\end{aligned}\vspace{-0.1cm}
\end{equation}
and \({x}_{1}^{\prime*}\) can be assumed to be 0 without loss of generality (in fact, (\ref{q17}) is a special case of (\ref{pcase1opt}) when $c=0$). When \( D_x < D_{\text{min}} \), however, the optimal solution of problem (CP-m) must lie on the boundary of its feasible region, which means that it can only be attained by considering all possible combinations of \(a_1, a_2, \dots, a_c\) and the value of $c$ should be enumerated from $1$ to $N_t-1$, such that  
%Since we have the method to derive \(D_{\text{min}}\), we can deduce that when \(D_x \geq D_\text{min}\), we can directly obtain the optimal solution \(\mathbf{x}^*\) of problem (CP) using (\ref{pcase1opt}), with the \(a_1, a_2, \dots, a_c\)th constraints  being active.
%Conversely, when $D_x<D_\text{min}$, the local maximum value of ${\tilde{g}}(\mathbf{x}')$ must lie on the boundary of its feasible region, which means it is obtained when one or more of the remaining inequality constraints become active. 
all local optimal values of problem (CP) are obtained. 
In this way, the most challenging part of the KKT conditions, i.e., (\ref{kkt1}), is automatically satisfied, {{where the detailed proof is given in Appendix E}}. Thus, by comparing these local optimal values, the global optimal solution can be determined.

\begin{figure}[t]
	\centering
	\includegraphics[width=3.2in]{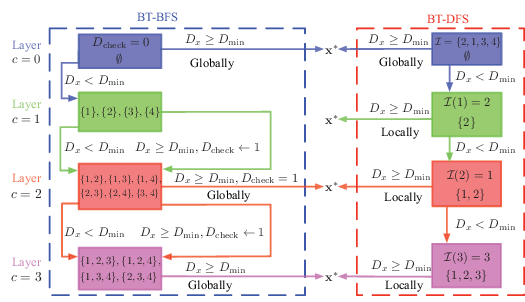}
	\caption{Illustration of the BT-BFS and BT-DFS algorithms when \( N_t = 4 \).}
	\label{BFSDFS}
	\vspace{-0.0cm}
\end{figure}
To traverse all the boundaries of the feasible region (\ref{p6c}), $2^{N_t}$ cases are required to be evaluated which results in extremely high computational complexity. %Thus, we aim to develop a search algorithm that can also obtain the global optimal solution but with higher efficiency.
%Based on the absolute value inequality, it is evident that, when the \(N_t\)-th constraint is not considered, the fewer active constraints there are, the better the resulting solution will be. 
To lower the computational complexity and in the meantime still achieve the global optimal solution of problem (CP), we propose a BFS strategy, as shown in Fig. \ref{BFSDFS} (for a toy example with $N_t=4$), where the $c$-th ``layer" represents that $c$ constraints are active and \( \binom{N_t}{c} \) combinations of possible active constraints should be considered in this layer. These layers are traversed in a breadth-first manner from $c=0$.
In the example given in Fig. \ref{BFSDFS}, the overall search process is divided into four layers, i.e., \( c \in \{ 0, 1, 2, 3\} \), and $c$ also represents the size of the active set \( \mathcal{A}_c \). In each layer, there are a total number of \( \binom{4}{c} \) subsets. For each subset, if \( D_x \geq D_{\text{min}} \), a local optimum solution of problem (CP) can be easily found through (\ref{pcase1opt}).
On the other hand, if \( D_x < D_{\text{min}} \), the proposed algorithm will continue searching downward to the next layer until \( D_x \geq D_{\text{min}} \) is satisfied.
%This allows us to avoid the process of traversing all \(2^{N_t}\) possible cases, but still obtain the optimal solution efficiently.
Generally, the solutions obtained from the upper layers are better than those from the lower layers, as fewer active constraints are imposed in the upper layers and the corresponding objective function is supposed to have a higher degree of freedom.
However, since the \(N_t\)-th constraint is special as it limits the overall antenna size, it is possible that when the \(N_t\)-th constraint is active, the solutions obtained under fewer active constraints lead to worse outcomes. Therefore, in such cases, we need to perform an additional layer of traversal to ensure that the solution obtained after BFS is indeed optimal.
The detailed steps of the proposed BT-BFS algorithm to solve problem (CP) are summarized in Algorithm $\ref{A2}$. 
\begin{algorithm}[t]
	\caption{Proposed BT-BFS Algorithm}
	\label{A2}
	\renewcommand\baselinestretch{\linspreadalgr}\selectfont
	\renewcommand{\algorithmicrequire}{\textbf{Input:}}
	\renewcommand{\algorithmicensure}{\textbf{Initialize:}}
	\begin{algorithmic}[1]
		\REQUIRE $N_t$, $D_x$, $d$, $\theta$, $\theta_t^1$,$\lambda$.  %%input
		\ENSURE $L\leftarrow0$, $c\leftarrow1$,  $g_\text{max}\leftarrow0$, $D_\text{check}\leftarrow0$.

		\STATE \textbf{if} {$D_x\ge\frac{(N_t-1)\lambda}{\sin\theta_t^1+\sin\theta}$} \textbf{then} obtain $\mathbf{x}_\text{opt}$ via (\ref{q17}).
		\STATE \textbf{else} \textbf{while} {$c\le{N_t}-1$} \textbf{do} 
		\STATE \;\;\;\;\;\;\;\;\;\;\;$j\leftarrow1$.
		\STATE \;\;\;\;\;\;\;\;\;\;\;\textbf{while} {$j\le{\binom{N_t}{c}}$} \textbf{do}
		\STATE \;\;\;\;\;\;\;\;\;\;\;\;\;\;\;Obtain $\mathcal{A}_c$ and $\{n_i\}_{i=1}^{N_t-c}$. %by traversing through ${\binom{N_t}{c}}$ \text{\quad\quad} combinations of the constraints for each $j$.
		\STATE \;\;\;\;\;\;\;\;\;\;\;\;\;\;\;\textbf{if} {$a_c<N_t$} \textbf{then} compute $D_\text{min}$ via (\ref{Dmin}).
		\STATE \;\;\;\;\;\;\;\;\;\;\;\;\;\;\;\textbf{else} Compute $D_\text{min}$ via (\ref{Dmin_2}). \textbf{end if}
		\STATE \;\;\;\;\;\;\;\;\;\;\;\;\;\;\;Obtain $\mathbf{x}'$ via (\ref{pcase1opt}), $j\leftarrow{j+1}$.
		\STATE \;\;\;\;\;\;\;\;\;\;\;\;\;\;\;\textbf{if} {$D_\text{min}\le{D_x}$ and $\tilde{g}_\text{max}<{\tilde{g}}(\mathbf{x}')$} \textbf{then} 
		\STATE \;\;\;\;\;\;\;\;\;\;\;\;\;\;\;$\mathbf{x}'_\text{opt}\leftarrow\mathbf{x}'$, $\tilde{g}_\text{max}\leftarrow{\tilde{g}(\mathbf{x}')}$, ${L}\leftarrow{c}.$ \textbf{end if}
		\STATE \;\;\;\;\;\;\;\;\;\;\;\textbf{end while}
		\STATE \;\;\;\;\;\;\;\;\;\;\;\textbf{if} {$D_\text{check}=1$}  \textbf{then} \textbf{break}.
		\STATE \;\;\;\;\;\;\;\;\;\;\;\textbf{else if} {$L>0$} \textbf{then} $c\leftarrow{L}+1$, $D_\text{check}\leftarrow1$.
		\STATE \;\;\;\;\;\;\;\;\;\;\;\textbf{else} {$c\leftarrow{c+1}$.} \textbf{end if}
		\STATE \;\;\;\;\;\;\;\textbf{end while} 
		\STATE \textbf{end if}
		\RETURN $\mathbf{x}_\text{opt}$ obtained from $\mathbf{x}'_\text{opt}$.
	\end{algorithmic}
	\vspace{-0.1cm}
\end{algorithm}

\begin{algorithm}[t]
	\caption{Proposed BT-DFS Algorithm}
	\label{A3}
	\renewcommand\baselinestretch{\linspreadalgr}\selectfont
	\renewcommand{\algorithmicrequire}{\textbf{Input:}}
	\renewcommand{\algorithmicensure}{\textbf{Initialize:}}
	\begin{algorithmic}[1]
		\REQUIRE $N_t$, $D_x$, $d$, $\theta$, $\theta_t^1$,$\lambda$.  %%input
		\ENSURE $\mathcal{I}\leftarrow{\text{a random sequence from $1$ to $N_t$}}$, $c\leftarrow1$.
		\STATE \textbf{if} {$D_x\ge\frac{(N_t-1)\lambda}{\sin\theta_t^1+\sin\theta}$} \textbf{then} obtain $\mathbf{x}_\text{opt}$ via (\ref{q17}).
		\STATE \textbf{else} \textbf{while} {$c\le{N_t}-1$} \textbf{do} 		
		\STATE \;\;\;\;\;\;\;\;\;\;\;Add ${\mathcal{I}(c)}$ into $\mathcal{A}_c$ and obtain $\{n_i\}_{i=1}^{N_t-c}$. 
		\STATE \;\;\;\;\;\;\;\;\;\;\;\textbf{if} {$a_c<N_t$} \textbf{then} compute $D_\text{min}$ via (\ref{Dmin}).
		\STATE \;\;\;\;\;\;\;\;\;\;\;\textbf{else} Compute $D_\text{min}$ via (\ref{Dmin_2}). \textbf{end if}
		\STATE \;\;\;\;\;\;\;\;\;\;\;Obtain $\mathbf{x}'$ via (\ref{pcase1opt}), $c\leftarrow{c+1}$.
		\STATE \;\;\;\;\;\;\;\;\;\;\;\textbf{if} $D_\text{min}\le{D_x}$ \textbf{then} 
		$\mathbf{x}'_\text{local}\leftarrow\mathbf{x}',$ \textbf{break}. \textbf{end if}
		\STATE \;\;\;\;\;\;\;\textbf{end while} 
		\STATE\textbf{end if}
		\RETURN $\mathbf{x}_\text{local}$ obtained from $\mathbf{x}'_\text{local}$.
	\end{algorithmic}\vspace{-0.1cm}
\end{algorithm}
%The detailed steps of the proposed BT-BFS algorithm to solve problem (CP) are summarized in Algorithm $\ref{A2}$. 

To further reduce the computational complexity, we propose in the following the BT-DFS algorithm. 
%Different from the BT-BFS algorithm, the BT-DFS algorithm adds constraints to the active set \(\mathcal{A}_c\) one by one until \(D_x \geq D_{\text{min}}\) is achieved.
Different from the BT-BFS algorithm, the BT-DFS algorithm adds constraints to the active set \(\mathcal{A}_c\) one by one until \(D_x \geq D_{\text{min}}\) is achieved.
As shown in the example in Fig. \ref{BFSDFS} (b), a random sequence \( \mathcal{I} = \{2, 1, 3, 4\} \) is chosen to represent the searching path of the proposed depth-first traversal, which means that the 2nd, 1st, 3rd, and 4th constraints are sequentially activated in each layer until \( D_x \geq D_{\text{min}} \) holds.
Although this algorithm does not guarantee the optimal solution for problem (CP) generally, it can achieve a local optimal solution with reasonably good performance, and the corresponding details are provided in Algorithm \ref{A3}.

\begin{remark}
	{The proposed BT-BFS algorithm fundamentally differs from the conventional active-set method by systematically enumerating all active constraint combinations and computing each boundary’s unique maximizer in closed form, rather than iteratively solving quadratic programming (QP) subproblems} 
	{and updating the working set. Leveraging the problem’s quasi-harmonic structure, BT-BFS directly satisfies KKT conditions on each boundary without convex approximations or multiplier updates. Its breadth-first search with immediate feasibility checks guarantees global optimality, whereas the active-set method generally converges only to a stationary point in nonconvex settings. Although BT-BFS has worst-case complexity $O(2^{N_t})$, each boundary evaluation is $O(1)$, and it generally converges rapidly in simulations. A lower-complexity variant, BT-DFS, further reduces complexity to $O(N_t)$ by following a single depth-first path, which offers significant computational saving compared to the repeated $O(N_t^3)$ QP solvers in the active-set approach.}
\end{remark}

%\begin{algorithm}[t]
%	\caption{Proposed BT-DFS Algorithm}
%	\label{A3}
%	\renewcommand\baselinestretch{\linspreadalgr}\selectfont
%	\renewcommand{\algorithmicrequire}{\textbf{Input:}}
%	\renewcommand{\algorithmicensure}{\textbf{Initialize:}}
%	\begin{algorithmic}[1]
%		\REQUIRE $N_t$, $D_x$, $d$, $\theta$, $\theta_t^1$,$\lambda$.  %%input
%		\ENSURE $\mathcal{I}\leftarrow{\text{a random sequence from $1$ to $N_t$}}$, $c\leftarrow1$.
%		\STATE \textbf{if} {$D_x\ge\frac{(N_t-1)\lambda}{\sin\theta_t^1+\sin\theta}$} \textbf{then} obtain $\mathbf{x}_\text{opt}$ via (\ref{q17}).
%		\STATE \textbf{else} \textbf{while} {$c\le{N_t}-1$} \textbf{do} 		
%		\STATE \;\;\;\;\;\;\;\;\;\;\;Add ${\mathcal{I}(c)}$ into $\mathcal{A}_c$ and obtain $\{n_i\}_{i=1}^{N_t-c}$. 
%		\STATE \;\;\;\;\;\;\;\;\;\;\;\textbf{if} {$a_c<N_t$} \textbf{then} compute $D_\text{min}$ via (\ref{Dmin}).
%		\STATE \;\;\;\;\;\;\;\;\;\;\;\textbf{else} Compute $D_\text{min}$ via (\ref{Dmin_2}). \textbf{end if}
%		\STATE \;\;\;\;\;\;\;\;\;\;\;Obtain $\mathbf{x}'$ via (\ref{pcase1opt}), $c\leftarrow{c+1}$.
%		\STATE \;\;\;\;\;\;\;\;\;\;\;\textbf{if} $D_\text{min}\le{D_x}$ \textbf{then} 
%		$\mathbf{x}'_\text{local}\leftarrow\mathbf{x}',$ \textbf{break}. \textbf{end if}
%		\STATE \;\;\;\;\;\;\;\textbf{end while} 
%		\STATE\textbf{end if}
%		\RETURN $\mathbf{x}_\text{local}$ obtained from $\mathbf{x}'_\text{local}$.
%	\end{algorithmic}\vspace{-0.1cm}
%\end{algorithm}

\vspace{-0.2cm}
\subsection{NLoS Channel Scenario}
For the general NLoS channel scenario, the channel vector $\mathbf{h}$ becomes more complex, and it is not feasible to recombine the two subproblems (BP1) and (BP2) into a single problem, as in Section \uppercase\expandafter{\romannumeral4}. B. Therefore, we propose an MM-based RGP algorithm to solve these two subproblems.

First, we focus on subproblem (BP1). 
As can be seen, its objective function, i.e., $p_1(\mathbf{x})$, can be equivalently transformed into
$p_1(\mathbf{x})={\bm{\psi}(\mathbf{x})}^H\bm{\Sigma}\bm{\psi}(\mathbf{x})$, 
%\begin{equation}
%	\begin{aligned}
%		p_1(\mathbf{x})={\bm{\psi}(\mathbf{x})}^H\bm{\Sigma}\bm{\psi}(\mathbf{x}),
%	\end{aligned}
%\end{equation}
where $\bm{\Sigma}\triangleq\bm{\sigma}{\bm{\sigma}}^H$ is a positive semi-definite matrix and 
$\bm{\psi}(\mathbf{x})\triangleq\mathbf{G}_{L_t\times{N_t}}\mathbf{a}_{N_t\times{1}}=[
\sum\nolimits_{i=1}^{N_t}e^{-j\frac{2\pi}{\lambda}(\sin\theta_t^1+\sin\theta)x_i},\cdots,
\sum\nolimits_{i=1}^{N_t}e^{-j\frac{2\pi}{\lambda}(\sin\theta_t^{L_t}+\sin\theta)x_{i}} ]^T.$
%\begin{align}
%\bm{\psi}&(\mathbf{x})\triangleq\mathbf{G}_{L_t\times{N_t}}\mathbf{a}_{N_t\times{1}}\\=&\left[
%\sum\limits_{i=1}^{N_t}e^{-j\frac{2\pi}{\lambda}(\sin\theta_t^1+\sin\theta)x_i},\cdots,
%\sum\limits_{i=1}^{N_t}e^{-j\frac{2\pi}{\lambda}(\sin\theta_t^{L_t}+\sin\theta)x_{i}} \right]^T.\nonumber
%\end{align}
Since \( p_1(\mathbf{x}) \) is convex with respect to (w.r.t.) \( {\bm{\psi}(\mathbf{x})} \), we can derive the following lower bound of \( {\bm{\psi}(\mathbf{x})}^H\bm{\Sigma}\bm{\psi}(\mathbf{x}) \) with given local point \( \mathbf{x}^i \) in the \( i \)-th iteration of the proposed MM-based algorithm based on the first-order Taylor expansion of $p_1(\mathbf{x})$:
\begin{equation}\small
	\begin{aligned}
	p_1(\mathbf{x})\ge
	\underbrace{\operatorname{Re} \left\{ \bm{\psi}(\mathbf{x}^i)^H \bm{\Sigma} \bm{\psi}(\mathbf{x}) \right\}}_{\triangleq\bar{p}_1(\mathbf{x})} - \underbrace{\bm{\psi}(\mathbf{x}^i)^H \bm{\Sigma} \bm{\psi}(\mathbf{x}^i)}_{\text{constant}}.
	\end{aligned}	\vspace{-0.2cm}
\end{equation}
Apparently, \( {\bar{p}_1(\mathbf{x})} \) is still neither concave nor convex over \( \mathbf{x} \), which cannot be regarded as the surrogate function of \( {{p}_1(\mathbf{x})} \) according to the MM principle \cite{MMorigin}. To address this difficulty, we propose to construct a surrogate function by using the second-order Taylor expansion of \( {{p}_1(\mathbf{x})} \).
Let \( \nabla {\bar{p}_1(\mathbf{x})} \in \mathbb{R}^{N_t} \) and \( \nabla^2 {\bar{p}_1(\mathbf{x})} \in \mathbb{R}^{N_t \times N_t} \) denote the gradient and Hessian matrix of \( {\bar{p}_1(\mathbf{x})} \) w.r.t. \( \mathbf{x} \), respectively. 
Then, we can always find a positive real number \( \delta_1 \) that satisfies \( \delta_1 \mathbf{I}_{N_t} \succeq \nabla^2 {\bar{p}_1(\mathbf{x})} \). 
%The detailed expressions of \( \nabla {\bar{p}_1(\mathbf{x})}  \) and \( \nabla^2 {\bar{p}_1(\mathbf{x})}  \) are presented in Appendix D, while the process to find an appropriate value of $\delta_1$ is given in Appendix E.
The detailed expressions of \( \nabla {\bar{p}_1(\mathbf{x})}  \) and \( \nabla^2 {\bar{p}_1(\mathbf{x})}  \) as well as the process to find an appropriate value of $\delta_1$ are given in {{Appendix F}}.
Thus, the following quadratic surrogate function can be employed to globally lower-bound \( \bar{p}_1(\mathbf{x}) \): $\bar{p}_1(\mathbf{x}) \geq \bar{p}_1(\mathbf{x}^i) + \nabla \bar{p}_1(\mathbf{x}^i)^T (\mathbf{x} - \mathbf{x}^i)
- \frac{\delta_1}{2} (\mathbf{x} - \mathbf{x}^i)^T (\mathbf{x} - \mathbf{x}^i)$.
%\begin{align}
%\bar{p}_1(\mathbf{x}) \geq& \bar{p}_1(\mathbf{x}^i) + \nabla \bar{p}_1(\mathbf{x}^i)^T (\mathbf{x} - \mathbf{x}^i)
%- \frac{\delta_1}{2} (\mathbf{x} - \mathbf{x}^i)^T (\mathbf{x} - \mathbf{x}^i)\nonumber
%\\=& \underbrace{-\frac{\delta_1}{2} \mathbf{x}^T \mathbf{x} + \left(\nabla \bar{p}_1(\mathbf{x}^i) + \delta_1 \mathbf{x}^i\right)^T \mathbf{x}}_{\triangleq\tilde{p}_1(\mathbf{x})}\nonumber
%\\
%&+ \underbrace{\bar{p}_1(\mathbf{x}^i) - \nabla \bar{p}_1(\mathbf{x}^i)^T \mathbf{x}^i - \frac{\delta_1}{2} (\mathbf{x}^i)^T \mathbf{x}^i}_{\text{constant}}. 
%\end{align}
Thus, subproblem (BP1) in the $i$-th iteration of the proposed MM-based algorithm  can be relaxed as\vspace{-0.1cm}
\begin{equation}
	\label{MM1}\small
	\begin{aligned}
		(\text{BP1-m})	\quad \quad
&\max_{\mathbf{x}} \quad {-\frac{\delta_1}{2} \mathbf{x}^T \mathbf{x} + \left(\nabla \bar{p}_1(\mathbf{x}^i) + \delta_1 \mathbf{x}^i\right)^T \mathbf{x}}
\\
&\;\;\;\text{s.t. } \;\;\; \text{(\ref{p6c})},
	\end{aligned}\vspace{-0.1cm}
\end{equation}
where constraint (\ref{p6b}) can be safely omitted because the objective function of subproblem (BP1) already incorporates this constraint and if the final solution of (BP1-m) after a sufficient number of iterations violates (\ref{p6b}), this implies that we need to solve subproblem (BP2) instead.
Since problem (BP1-m) is a typical convex QP problem, its global optimal solution, denoted by $\mathbf{x}^{i+1}$, can be efficiently obtained by off-the-shelf solvers, such as CVX \cite{cvx}.
%The above analysis pertains to subproblem (BP1); however, recall that the actual problem we need to solve is subproblem (SP1), which is actually easier than (BP1) since we only need to find a $\mathbf{x}$ that satisfies \( p_1(\mathbf{x}) \ge \frac{N_t}{P_T}\Gamma\sigma_C^2 \).
Therefore, if $\mathbf{x}^{i+1}$ satisfies (\ref{p6b}), we can find a feasible solution of subproblem (SP1), which is also the optimal solution of the original problem (BP). Otherwise, we continue to compute $\mathbf{x}^{i+1}$ in the next iteration. 
The details of the proposed algorithm to solve problem (SP1) are summarized in Algorithm 3, which is guaranteed to converge to the set of stationary solutions \cite{MMconvergence}. 
%and $\mathbf{x}^{i}$ is assured to converge to a stationary point.
It is noteworthy that in step 6 of Algorithm 3, if we are unable to obtain a feasible solution for subproblem (SP1), then -1 is returned and in this case, we should turn to subproblem (BP2) for solving the original problem (BP). 
\begin{algorithm}[t]
	\caption{MM for Solving Problem (SP1)}
	\renewcommand\baselinestretch{\linspreadalgr}\selectfont
	\renewcommand{\algorithmicrequire}{\textbf{Input:}}
	\renewcommand{\algorithmicensure}{\textbf{Initialize:}}
	\begin{algorithmic}[1]
		\REQUIRE  $N_t$, $D_x$, $d$, $\theta$, $\theta_t^1$, $\lambda$, $L_t$, $\bm{\sigma}$, $\epsilon_1$.
		\ENSURE $i \leftarrow 0$.

		\REPEAT 
		\STATE Compute $\nabla \bar{p}_1(\mathbf{x}^i)$ and $\nabla^2 \bar{p}_1(\mathbf{x}^i)$. 
		%via (\ref{gradient1}) and (\ref{hessian1}).
		\STATE Update $\delta_1$, 
		%via (\ref{delta1}), 
		and obtain $\mathbf{x}^{i+1}$ by solving (BP1-m).
%		\STATE \textbf{if} {$\mathbf{x}^{i+1}$ satisfies (\ref{p6b})} \textbf{then} \textbf{return} $\mathbf{x}^{i+1}$. \textbf{end if}
		\STATE $i \leftarrow i + 1$.
		\UNTIL {$\begin{vmatrix} p_1(\mathbf{x}^{i})-p_1(\mathbf{x}^{i-1})\end{vmatrix}<\epsilon_1$}
		\RETURN -1.
	\end{algorithmic}\vspace{-0.1cm}
\end{algorithm}

\begin{algorithm}[t]
	\caption{Proposed MM-based RGP algorithm}
	\renewcommand\baselinestretch{\linspreadalgr}\selectfont
	\renewcommand{\algorithmicrequire}{\textbf{Input:}}
	\renewcommand{\algorithmicensure}{\textbf{Initialize:}}
	\begin{algorithmic}[1]
		\REQUIRE $N_t$, $D_x$, $d$, $\theta$, $\theta_t^1$, $\lambda$, $L_t$, $\bm{\sigma}$, $\epsilon_2$, \( M \).
		\ENSURE \( i \leftarrow 0 \), obtain \( \mathbf{x}^1 \) via Algorithm 3.
		
		\STATE \textbf{while} \( i < M \) \textbf{do}
		\STATE \;\;\;\; \( i \leftarrow i + 1 \).
		\STATE \;\;\;\;Calculate \( \mathbf{M} \) and \( \mathbf{P}\).
		\STATE \;\;\;\;\textbf{if} \( \| \mathbf{P} \nabla p_2 (\mathbf{x}^i) \|< \epsilon_2 \) \textbf{then}
		\STATE \;\;\;\;\;\;\;\;\textbf{if} \( \mathbf{M} \) is an empty matrix \textbf{then} \textbf{break}.
		\STATE \;\;\;\;\;\;\;\;\textbf{else} \( \mathbf{u} = -(\mathbf{M}\mathbf{M}^T)^{-1} \mathbf{M} \nabla p_2 (\mathbf{x}^i) \), \( u_j \leftarrow \min (\mathbf{u}) \).
		\STATE \;\;\;\;\;\;\;\;\;\;\;\;\textbf{if} \( u_j \geq 0 \) \textbf{then} \textbf{break}.
		\STATE \;\;\;\;\;\;\;\;\;\;\;\;\textbf{else} remove \( \mathbf{M}_j \) and go to step 3. \textbf{end if}
		\STATE \;\;\;\;\;\;\;\;\textbf{end if}
		\STATE \;\;\;\;\textbf{else} obain $\alpha^i$ via Armijo step-size rule.	
		\STATE  \;\;\;\;\;\;\;\;\;\;\;\;\( \mathbf{x}^{i+1}\leftarrow \mathbf{x}^i- \alpha^i\mathbf{P}\nabla p_2 (\mathbf{x}^i)\).
		\STATE \;\;\;\;\textbf{end if}
		\STATE \textbf{end while}
		\RETURN \( \mathbf{x}^{i+1} \).
	\end{algorithmic}\vspace{-0.1cm}
\end{algorithm}
\iffalse
Next, we consider subproblem (BP2) and can easily have the following lower bound of its objective function:
\begin{equation}
	\label{ft}
	\begin{aligned}
		{f_t}{(\mathbf{x})} \ge &
		\sqrt{N_tP_T}\left(1 + \sin \phi(\mathbf{x})\cos \upsilon(\mathbf{x})- \sin^2\phi(\mathbf{x})\right)\\
		= & \sqrt{N_tP_T} + \sqrt{\Gamma \sigma_C^2}\left( 
		\frac{{\begin{vmatrix}{{\mathbf{h}^H}{\mathbf{a}}}\end{vmatrix} - \sqrt{\frac{N_t}{P_T}\Gamma \sigma_C^2}}}
		{{\|{\mathbf{h}}\|}^2}
		\right) ,
	\end{aligned}
\end{equation}
where the inequality holds because of (\ref{q15}), i.e., $\cos \upsilon(\mathbf{x})\le\sin \phi(\mathbf{x})$.
Due to the complexity of $	{f_t}{(\mathbf{x})}$, we relax it and reformulate subproblem (BP2) as follows:
\begin{equation}
	\label{p9_relax}
	\begin{aligned}
		\max\limits_{\mathbf{x}}\;\;&	\frac{{\begin{vmatrix}{{\mathbf{h}^H}{\mathbf{a}}}\end{vmatrix} - \sqrt{\frac{N_t}{P_T}\Gamma \sigma_C^2}}}
		{{\|{\mathbf{h}}\|}^2}\\
		\text{s.t.}\;\;&\text{(\ref{q15})}
		,\text{(\ref{p6c})}.
	\end{aligned}
\end{equation}

\begin{equation}
	\label{p9_relax_Dinkelbach}
	\begin{aligned}
		\max\limits_{\mathbf{x}}\;\;& \begin{vmatrix}{{\mathbf{h}^H(\mathbf{x})}{\mathbf{a}(\mathbf{x})}}\end{vmatrix}- \alpha^i
		{{\|{\mathbf{h}(\mathbf{x})}\|}^2}\\
		\text{s.t.}\;\;&\text{(\ref{q15})}
		,\text{(\ref{p6c})}.
	\end{aligned}
\end{equation}

$\alpha^{i+1} = \frac{{\begin{vmatrix}{{\mathbf{h}^H(\mathbf{x}^i)}{\mathbf{a}(\mathbf{x}^i)}}\end{vmatrix} - \sqrt{\frac{N_t}{P_T}\Gamma \sigma_C^2}}}
{{\|{\mathbf{h}(\mathbf{x}^i)}\|}^2}$

\fi

Next, we consider subproblem (BP2).
Since its objective function is a sine function with its domain restricted to $[0,\pi]$ and subject to the constraint (\ref{q15}), i.e., $\upsilon(\mathbf{x})+\phi(\mathbf{x})\ge \frac{\pi}{2}$, it can be equivalently reformulated as \vspace{-0.1cm}
\begin{equation}
	\label{MM2}\small
	\begin{aligned}
		(\text{BP2-m})	\quad \quad\quad
		\min\limits_{\mathbf{x}}\;\;&p_2(\mathbf{x})\triangleq\upsilon(\mathbf{x})+\phi(\mathbf{x})&&\\
		\text{s.t.}\;\;&\text{(\ref{q15})},\text{(\ref{p6c})}.&&
	\end{aligned}\vspace{-0.1cm}
\end{equation}
It is noteworthy that the constraint (\ref{q15}) in problem (BP2-m) is equivalent to \( p_1(\mathbf{x}) \leq \frac{N_t}{P_T} \Gamma \sigma_C^2 \) and it can also be omitted
because if the solution of the resulting problem violates (\ref{q15}), i.e., \( \upsilon(\mathbf{x}) + \phi(\mathbf{x}) < \frac{\pi}{2} \), then this solution becomes a feasible solution of subproblem (SP1), thus serving as the optimal solution to the original problem (BP).
Therefore, by enlarging the feasible region via ignoring (\ref{q15}), we increase the probability of obtaining the optimal solution of the original problem (BP).
Furthermore, by substituting (\ref{q12}) and (\ref{q13}) into $p_2(\mathbf{x})$, $p_2(\mathbf{x})$ can be equivalently rewritten as follows:\vspace{-0.1cm}
%\begin{align}
%	\label{p_2}
%	p_2(\mathbf{x})=&\upsilon(\mathbf{x})+\phi(\mathbf{x})=
%	\arccos\frac{{\begin{vmatrix}{{\mathbf{h}^H}{\mathbf{a}}}\end{vmatrix}}}
%	{{\|{\mathbf{h}}\|}\|\mathbf{a}\|}+
%	\arcsin\sqrt{\frac{\Gamma\sigma_C^2}{P_T\|\mathbf{h}\|^2}}\nonumber\\=&
%	\arccos\sqrt{\frac{{p_1(\mathbf{x})}}
%		{N_t{\|{\mathbf{h}}\|}^2}}+
%	\arcsin\sqrt{\frac{\Gamma\sigma_C^2}{P_T\|\mathbf{h}\|^2}}.
%\end{align}
\begin{equation}\small
\begin{aligned}
	\label{p_2}
	p_2(\mathbf{x})=
	\arccos\sqrt{\frac{{p_1(\mathbf{x})}}
		{N_t{\|{\mathbf{h}}\|}^2}}+
	\arcsin\sqrt{\frac{\Gamma\sigma_C^2}{P_T\|\mathbf{h}\|^2}},
\end{aligned}\vspace{-0.1cm}
\end{equation}
which is much more complex than \( p_1(\mathbf{x}) \). Thus, finding a proper surrogate function based on the MM principle is quite challenging. 
However, we find that \(\phi(\mathbf{x})\), i.e., $\arcsin\sqrt{{\Gamma\sigma_C^2}/({P_T\|\mathbf{h}\|^2})}$, does not make a significant contribution to the overall optimization process, because \(\|\mathbf{h}\|^2\) is not sensitive to the changes in \(\mathbf{x}\). This can be seen from the following expression of \(\|\mathbf{h}\|^2\):\vspace{-0.1cm}
\begin{equation}
	\label{h2expand}\small
	 \|\mathbf{h}(\mathbf{x})\|^2=
	\sum\limits_{k=1}^{N_t}\begin{vmatrix}
	\sum\limits_{i=1}^{L_t}\sigma_ie^{-j\frac{2\pi}{\lambda}\sin\theta_t^ix_k}
	\end{vmatrix}^2,\vspace{-0.1cm}
\end{equation}
%as different values of \( x_k \) induce different phases in the corresponding exponential term in (\ref{h2expand}), and thus the magnitude of the sum of a number of complex numbers with different phases, i.e.,
%\(
%| \sum\nolimits_{i=1}^{L_t} \sigma_ie^{-j\frac{2\pi}{\lambda} \sin(\theta_t^i) x_k} |,
%\)
%tends to stabilize. 
%%When \( L_t \) is sufficiently large, the law of large numbers implies that the magnitude of the sum of a large number of unit-length complex numbers with different phases, i.e.,
%%\(
%%| \sum\nolimits_{i=1}^{L_t} e^{-j\frac{2\pi}{\lambda} \sin(\theta_t^i) x_k} |,
%%\)
%%tends to stabilize. 
%%Although the value of \( \|\mathbf{h}(\mathbf{x})\|^2 \) becomes less sensitive to the variations of \( \mathbf{x} \),
%%Although \(\|\mathbf{h}(\mathbf{x})\|^2\) does not make a significant contribution to problem (\ref{MM2further}), 
%Furthermore, the value of \( \|\mathbf{h}(\mathbf{x})\|^2 \) is composed of the sum of numerous periodic functions with varying periods, resulting in numerous fluctuations, which can easily cause the optimization process to converge to a stationary point with poorer performance. 
which comprises multiple periodic components with distinct phases and frequencies, resulting in numerous oscillations w.r.t. $ \mathbf{x} $. Despite their relatively small amplitudes, $ \|\mathbf{h}(\mathbf{x})\|^2 $ exhibits frequent but low-magnitude fluctuations, which may induce the optimization algorithm to converge to a bad stationary point.
As shown in the toy example depicted in Fig. \ref{MM_illustration}, directly minimizing \( p_2(\mathbf{x}) \) using a gradient descent type algorithm with a random initial point often converges to a bad stationary point, as indicated by the red circle. In contrast, by minimizing \( \upsilon(\mathbf{x}) \) first and using its solution as the initial point for minimizing \( p_2(\mathbf{x}) \), the performance of the final solution can be significantly improved, as shown by the green circle.
\begin{figure}[!t]
	\centering
	\vspace{-0.4cm}
	\includegraphics[width=2.5in]{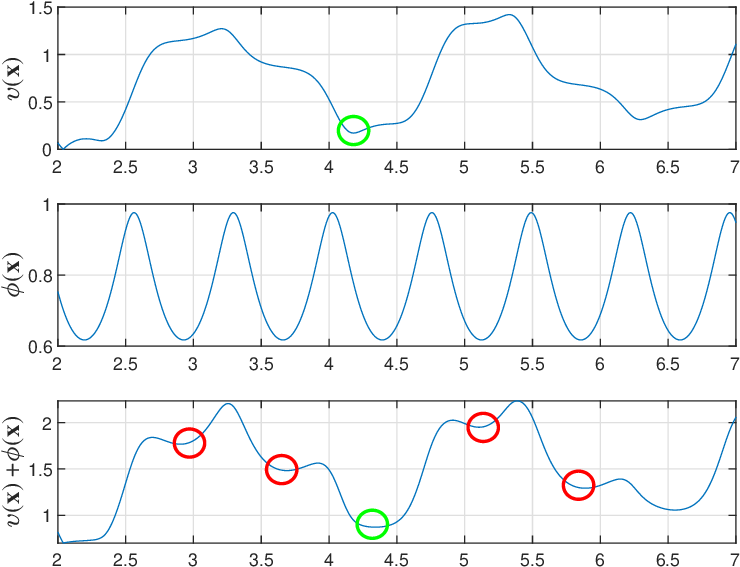}\vspace{-0.2cm}
	\caption{Example of \( \upsilon(\mathbf{x}) \), $\phi(\mathbf{x})$ and \( \upsilon(\mathbf{x})+\phi(\mathbf{x}) \) over ${x_1}$.}
	\label{MM_illustration}
	\vspace{-0cm}
\end{figure}

Consequently, we proceed to analyze \(  \upsilon(\mathbf{x})  \), i.e., $\arccos\sqrt{{{p_1(\mathbf{x})}}/
({N_t{\|{\mathbf{h}}\|}^2})}$.
Note that \( \arccos(x) \) is a monotonically decreasing function of \(x \in [0, 1]\), thus minimizing \( \upsilon(\mathbf{x}) \) is equivalent to maximizing \(\bar{p}_2(\mathbf{x}) \triangleq \frac{{p_1(\mathbf{x})}}{\|{\mathbf{h}}\|^2}\).
Since $\frac{p_1(\mathbf{x})}{\|{\mathbf{h}}\|^2}$ is jointly convex in $\{\bm{\psi},\|{\mathbf{h}}\|^2\}$, we can derive the following lower bound with given local point \( \mathbf{x}^i \) in the \( i \)-th iteration of the proposed MM-based algorithm:\vspace{-0.1cm}
%\begin{align}
%	\frac{{\bm{\psi}}^H\bm{\Sigma}\bm{\psi}}{t}\ge&
%	\frac{2\operatorname{Re} \left\{ \bm{\psi}(\mathbf{x}^i)^H \bm{\Sigma} \bm{\psi}(\mathbf{x}) \right\}}{\|{\mathbf{h}(\mathbf{x}^i)}\|^2}-
%	\frac{p_1(\mathbf{x}^i)}{\|{\mathbf{h}(\mathbf{x}^i)}\|^4}\|{\mathbf{h}(\mathbf{x})}\|^2\nonumber
%	\\=&\frac{1}{\|{\mathbf{h}(\mathbf{x}^i)}\|^2}\left(2\bar{p}_1(\mathbf{x})-\frac{p_1(\mathbf{x}^i)}{\|{\mathbf{h}(\mathbf{x}^i)}\|^2}\|{\mathbf{h}(\mathbf{x})}\|^2\right)\nonumber
%	\\
%	\approx&\frac{2}{\|{\mathbf{h}(\mathbf{x}^i)}\|^2}\bar{p}_1(\mathbf{x})-\text{constant},
%\end{align}
\begin{equation}\small
\begin{aligned}
	\label{MM2RPGM}
	\frac{{\bm{\psi}}(\mathbf{x})^H\bm{\Sigma}\bm{\psi}(\mathbf{x})}{\|{\mathbf{h}(\mathbf{x})}\|^2}\ge
	\frac{2\bar{p}_1(\mathbf{x})}{\|{\mathbf{h}(\mathbf{x}^i)}\|^2}-
	\frac{p_1(\mathbf{x}^i)}{\|{\mathbf{h}(\mathbf{x}^i)}\|^4}\|{\mathbf{h}(\mathbf{x})}\|^2.
\end{aligned}\vspace{-0.1cm}
\end{equation}
%where the approximation holds due to the nature of $\|{\mathbf{h}(\mathbf{x})}\|^2$ that has already been clearly explained above.
Building upon the previous analysis on the insensitivity of ${\|\mathbf{h}(\mathbf{x})\|^2}$ w.r.t. $\mathbf{x}$, we first omit the term $\|\mathbf{h}(\mathbf{x})\|^2$ in (\ref{MM2RPGM}),
where Algorithm 3 can be immediately used to obtain $\mathbf{x}^1$ as an initial point for problem (BP2-m). Subsequently, an MM-based RGP algorithm (summarized in Algorithm 4) is proposed to solve problem (BP2-m), where a projection matrix 
\(
\mathbf{P} = \mathbf{I} - \mathbf{M}^T (\mathbf{M}\mathbf{M}^T)^{-1} \mathbf{M}
\) 
(\( \mathbf{M} \) is composed of the row vectors from \( \mathbf{U} \) that satisfy \(\mathbf{U}_k \mathbf{x}^i = (\mathbf{l}_u)_k \) in the $i$-th iteration) is introduced to project the updated gradient descent step back onto the feasible region. Furthermore, we denote the gradient vector of \( {{p}_2(\mathbf{x})} \) w.r.t. \( \mathbf{x} \) by \( \nabla {{p}_2(\mathbf{x})} \in \mathbb{R}^{N_t} \), which can be simply obtained by the chain rule.
%is provided in Appendix F. 
%The overall algorithm for solving (BP2-m) is summarized in Algorithm 4.
With the proposed initialization method, Algorithm 4 only requires a relatively small number of iterations to converge to a good stationary point of problem (BP2-m) \cite{rosen}.

\begin{remark}
In the most general scenario where both the transmit and receive antennas are movable, it can be readily seen that the transmit and receive MAs can be designed separately without loss of optimality and the corresponding problems can be efficiently solved by using the techniques proposed in the above two special cases, respectively. 
{Moreover, practical MA platforms admit only a finite grid of positions due to limited rail length and actuator resolution. Therefore, one can first compute the continuous-domain antenna positions via our proposed algorithm and then round each position to its nearest grid point, achieving near-optimal performance with low complexity.}
\end{remark}

\vspace{-0.2cm}
\subsection{CRB Performance Analysis}
In this subsection, we analyze the impacts of movable transmit antennas on the sensing performance under varying user communication SNR thresholds.
We can see that the constraints of subproblems (SP1) and (SP2), i.e., (\ref{p6b}) and (\ref{p7b}), are both related to the user's communication SNR threshold, $\Gamma$. 
%Now, we take a closer look at subproblems (SP1) and (SP2). The constraint \(|{{\mathbf{h}^H}{\mathbf{a}}}|^2 > \frac{N_t}{P_T}\Gamma\sigma_C^2\) belongs to subproblem (SP1) (derived by substituting (\ref{q11a}) into the original problem (BP)), while \(|{{\mathbf{h}^H}{\mathbf{a}}}|^2 \le \frac{N_t}{P_T}\Gamma\sigma_C^2\) corresponds to subproblem (SP2) (derived by (\ref{q11b})). We can see that both constraints are related to the user's communication SNR threshold $\Gamma$. 

Consequently, we can regard subproblem (SP1) as a sensing performance optimization problem under low communication SNR threshold. 
%From the optimal beamforming solution in (\ref{q11a}), i.e., \(\mathbf{w}_\text{opt} = \sqrt{P_T}\frac{\mathbf{a}}{\|\mathbf{a}\|}\), and the objective function of the original maximization problem (BP), i.e., \(|{{\mathbf{a}^H}{\mathbf{w}}}|^2\), we can see that the solution of subproblem (SP1) corresponds to the minimum CRB, which can be obtained by (\ref{CRB_expression}) as:
By substituting the optimal beamforming solution \(\mathbf{w}_\text{opt} = \sqrt{P_T}\frac{\mathbf{a}}{\|\mathbf{a}\|}\) into the CRB expression, i.e., (\ref{CRB_expression}), we can obtain the global minimum of the CRB in terms of \( \mathbf{w} \) and \( \mathbf{x} \) as\vspace{-0.1cm}
\begin{equation}
	\label{CRB_expression_min}\small
	\begin{aligned}
		\text{CRB}(\theta)_\text{min}=
		\frac{\sigma_R^2/
			(2\begin{vmatrix}
				\alpha
			\end{vmatrix}^2L{N_tP_T})}
		{
			(\frac{2\pi}{\lambda}\cos\theta)^2
			\big(\sum\limits_{i=1}^{N_r}y_i^2
			-\frac{1}{N_r}(\sum\limits_{i=1}^{N_r}y_i)^2\big)
		}.
	\end{aligned}\vspace{-0.1cm}
\end{equation}
However, due to the transmit power budget constraint on the beamforming vector, i.e., $\|\mathbf{w}\|^2\le{P_T}$, this minimum value cannot be sustained as the user's communication SNR threshold $\Gamma$ increases.
Specifically, as $\Gamma$ increases, the constraint set of (\ref{p5b}) in problem (BP) shrinks, which will prevent the beamforming vector from remaining at $\sqrt{P_T}\frac{\mathbf{a}}{\|\mathbf{a}\|}$ and finally leads to sensing performance degradation. 
In this case, by deploying movable transmit antenna array, an additional degree of freedom is introduced by allowing the CRB to remain at its lowest level (\ref{CRB_expression_min}) until the user's SNR threshold reaches \(\Gamma_0=\frac{|{{\mathbf{h}(\mathbf{x}^{\star})^H}{\mathbf{a}(\mathbf{x}^{\star})}}|^2}{N_t\sigma_C^2/P_T} \) (as indicated by either constraint (\ref{p6b}) or (\ref{p7b})), where $\mathbf{x}^{\star}$ is the optimal transmit APV obtained by Algorithm 1 or Algorithm 3.
To be specific, the movable transmit antenna array can increase $\Gamma$ by \(\Delta_{\Gamma}=20\lg(\frac{|{{\mathbf{h}(\mathbf{x}^{\star})^H}{\mathbf{a}(\mathbf{x}^{\star})}}|}
{|{{\mathbf{h}^H}{\mathbf{a}}}|_\text{ULA}} )\) dB while maintaining the same sensing performance.

In contrast, subproblem (SP2) addresses the high communication SNR scenario. If no feasible solution exists for subproblem (SP1), i.e., $\Gamma > \Gamma_0$, the solution of subproblem (SP2) results in a higher CRB than the minimum value specified in (\ref{CRB_expression_min}). Therefore, we can intuitively conclude that when the user's communication SNR threshold exceeds $\Gamma_0$, the sensing performance will inevitably degrade.

The above analysis embodies the trade-off between sensing and communication performance inherent in ISAC systems, and it shows that movable transmit antenna arrays can significantly enhance both performance by introducing an additional degree of freedom.
\vspace{-0.2cm}
\subsection{Computational Complexity Analysis}
In this subsection, we analyze the computational complexity of the proposed Algorithms 1-4, where we focus exclusively on the number of multiplication operations.

First, we consider the worst-case complexity of  Algorithms 1 and 2 (i.e., the BT-BFS and BT-DFS algorithms).
{Since the complexity of calculating the value of $D_\text{min}$ in (\ref{Dmin}) or (\ref{Dmin_2}) is \(\mathcal{O}(1)\), the worst-case complexity of the BT-BFS algorithm is} \vspace{-0.2cm}
%\(\mathcal{O}(\sum\nolimits_{c=1}^{N_t-1}{\binom{N_t}{c}})=\mathcal{O}(2^{N_t})\).
{{
		\begin{equation}\small
		\begin{aligned}
			\mathcal{C}_{\rm BT\text{-}BFS}&=\mathcal{O}\Bigl(\sum_{c=1}^{N_t-1} \binom{N_t}{c}\Bigr)\\
			&=
			\mathcal{O}\Bigl(\sum_{c=0}^{N_t} \binom{N_t}{c}
			-\binom{N_t}{0}-\binom{N_t}{N_t}\Bigr)\\
			&=
			\mathcal{O}\Bigl(2^{N_t}-2\Bigr)=
			\mathcal{O}\Bigl(2^{N_t}\Bigr).
		\end{aligned}
	\end{equation}
Second, the BT-DFS algorithm activates constraints one by one along a single depth-first path and terminates as soon as \(D_x\ge D_{\min}\).  In the worst case, it examines at most \(N_t-1\) constraints, each incurring \(\mathcal{O}(1)\) complexity for \(D_{\min}\) computation and closed-form update (\ref{pcase1opt}), hence  
$
\mathcal{C}_{\rm BT\text{-}DFS}
=\mathcal{O}\bigl(N_t\bigr).
$}}
%Following a similar analysis, the worst-case complexity of the BT-DFS algorithm is \(\mathcal{O}( N_t)\).

Then, for Algorithm 3, its computational complexity is dominated by solving the QP problem (BP1-m) which is $\mathcal{O}(N_t^{1.5} \ln(1/\beta))$ with $\beta$ denoting the accuracy of the interior-point method \cite{book3}. Let $\gamma_r$ denote the maximum iteration number, the total complexity of Algorithm 3 can be expressed as $\mathcal{O}(N_t^{1.5} \ln(1/\beta)\gamma_r)$. 
For Algorithm 4, the complexity of calculating $\nabla_{\mathbf{x}}{p_2{(\mathbf{x})}}$ is $\mathcal{O}(N_tL_t^2+2L_t^2+3N_tL_t)$, thus the overall complexity of Algorithm 4 is $\mathcal{O}(M(N_tL_t^2+2L_t^2+3N_tL_t))$.

\vspace{-0.2cm}
\section{Numerical Results}
This section presents numerical results to evaluate the effectiveness of the proposed algorithms for optimizing the sensing performance of the MA-enabled ISAC system while ensuring the communication quality. 
Without loss of generality, we consider a MIMO ISAC BS that is equipped with \( N_t = 18 \) and \( N_r = 20 \) MAs at its transmitter and receiver, respectively. The power budget is \( P_T = 20 \) dBm, the noise power is set as \( \sigma_C^2 = \sigma_R^2 = 0 \) dBm, and the frame length is set as \( L = 30 \). The apertures of both transmit and receive antenna arrays are set to be \( D_x = D_y = 13.55 \lambda \) unless otherwise specified. \footnote{{We chose such values of $D_x$ and $D_y$ to place BT-BFS in its most computationally demanding regime, forcing it to explore the maximum number of active set combinations before feasibility is achieved. Although larger apertures improve sensing performance by relaxing constraints, they trivialize the optimization by immediately satisfying feasibility via the closed-form solution in (\ref{q17}). Therefore, such apertures prevent premature termination and enable fair validation of the algorithm’s effectiveness and robustness.}}
The geometry channel model is considered \cite{10243545}, where the number of transmit paths is set to $L_t = 18$. The path response vector is assumed to follow Rician fading with ${\sigma}_1 \sim \mathcal{CN}(0, \kappa / (\kappa + 1))$ and $\sigma_p \sim \mathcal{CN}(0, 1 / ((\kappa + 1)(L_t - 1)))$ for $p = 2, 3, \dots, L_t$, where $\kappa = 3$ denotes the ratio of the average power for LoS paths to that for NLoS paths. The azimuth AoDs are assumed to be i.i.d. variables that are uniformly distributed over $[-\pi/2, \pi/2]$. The minimum distance between the MAs is set as $d = \lambda / 2$.
%We set convergence thresholds for the relative increment of the objective value as $\epsilon_1 = \epsilon_2 = 10^{-3}$ in Sections IV-C.
%The target angle is assumed to be $\theta = 0$. 
Other parameters are set as follows unless otherwise specified: $\epsilon_1 = \epsilon_2 = 10^{-3}$ and $\theta = 0 ^\circ$.
The following benchmark schemes are considered in our simulations: 1) ULA with half-wavelength spacing (ULAH); 2) ULA with full aperture (ULAF), i.e., the inner-antenna spacings for the transmit and receive antennas are \( D_x / (N_t-1) \) and \( D_y / (N_r-1) \), respectively;
3) a scheme based on the successive convex approximation (SCA) technique for solving problems (AP-m), (BP1) and (BP2);
4) a scheme based on RGP algorithm with random initialization for solving problems (BP1) and (BP2).

\begin{figure}[t]
	\centering
	\includegraphics[width=3in]{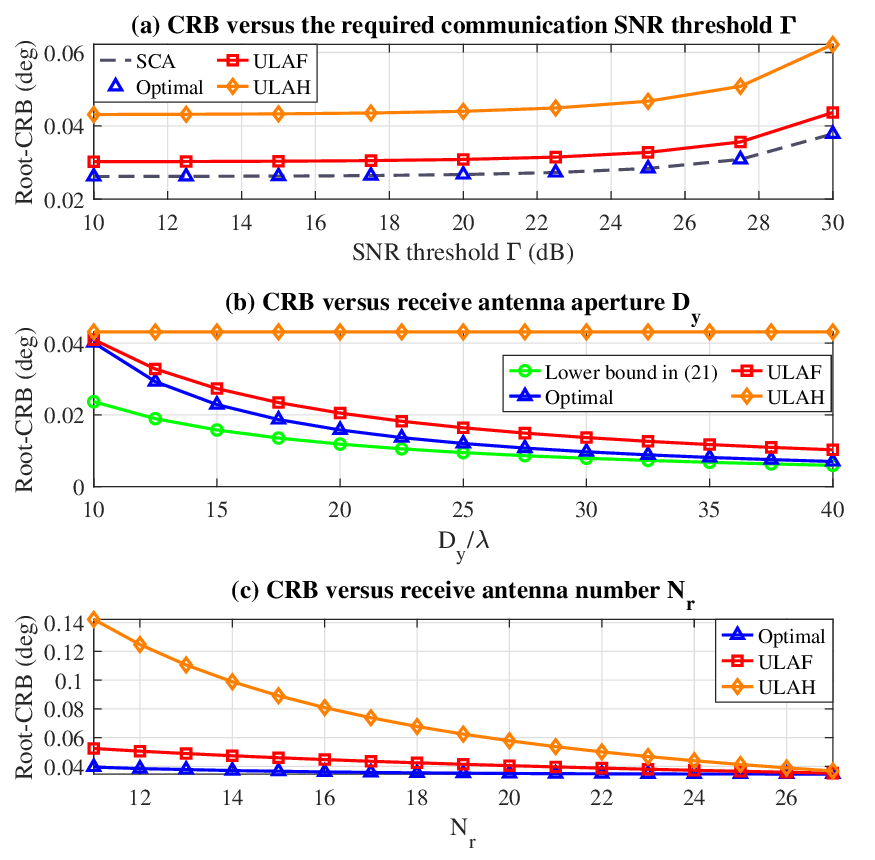}\vspace{-0.4cm}
	\caption{CRB versus the required communication SNR threshold $\Gamma$, antenna aperture $D_y$ and antenna number $N_r$ in the case of receive MAs.}
	\label{Rx_CRB_SNR}
	\vspace{-0cm}
\end{figure}

\begin{figure}[t]
	\centering
	\includegraphics[width=3.2in]{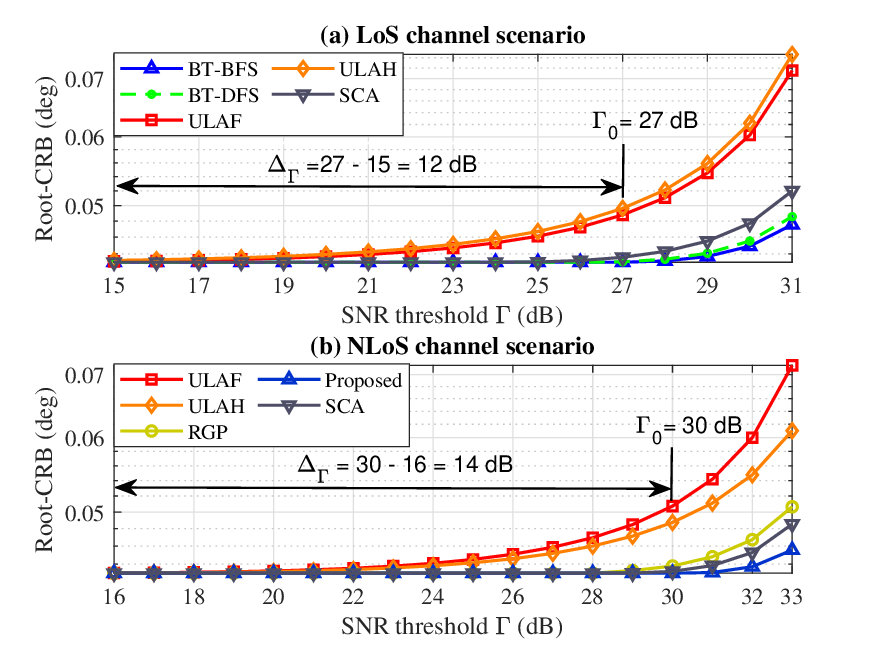}\vspace{-0.4cm}
	\caption{CRB versus the required communication SNR threshold $\Gamma$ in the case of transmit MAs.}
	\label{LOS_CRB_SNR}
	\vspace{-0cm}
\end{figure}

\begin{figure}[t]
	\centering
	\includegraphics[width=3.5in]{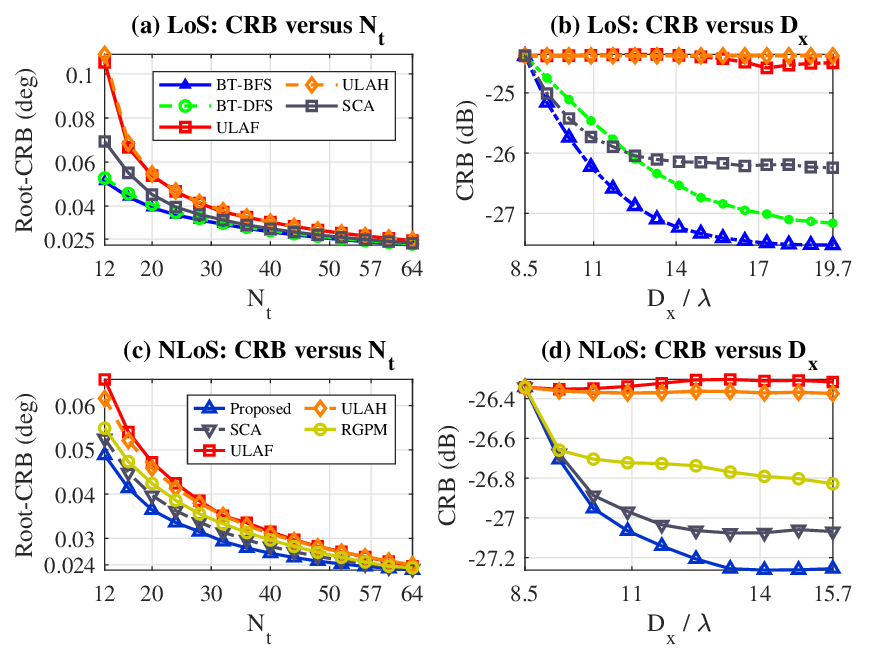}\vspace{-0.4cm}
	\caption{{{CRB versus antenna number $N_t$ and antenna aperture $D_x$ in the case of transmit MAs.}}}
	\label{LOS&NLOS_CRB_Nt_Dx}
	\vspace{-0cm}
\end{figure}

%\begin{figure}[t]
%	\centering
%	\includegraphics[width=3.0in]{LOS_CRB_SNRto31_noGA}
%	\caption{CRB versus the required communication SNR threshold $\Gamma$ in the LoS channel scenario (movable transmit antenna case).}
%	\label{LOS_CRB_SNR}
%	\vspace{-0.3cm}
%\end{figure}
%
%
%
%\begin{figure}[t]
%	\centering
%	\includegraphics[width=3.0in]{NLOS_CRB_SNR_1}
%	\caption{CRB versus the required communication SNR threshold $\Gamma$ in the NLoS channel scenario (movable transmit antenna case).}
%	\label{NLOS_CRB_SNR}
%	\vspace{-0.3cm}
%\end{figure}

\begin{figure}[t]
	\centering
	\includegraphics[width=3in]{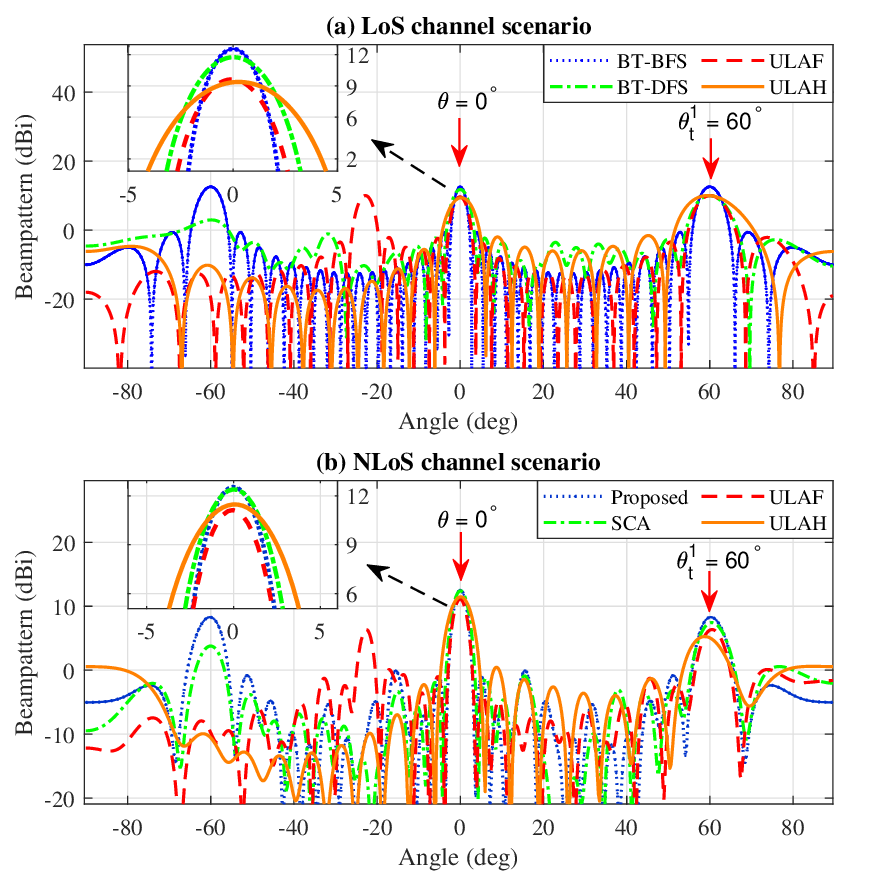}\vspace{-0.4cm}
	\caption{Comparison of beampatterns with different antennas' positions.}
	\label{LOS_beampattern}
	\vspace{-0cm}
\end{figure}

%\begin{figure}[t]
%	\centering
%	\includegraphics[width=3.0in]{LOS_beampattern_addDFS}
%	\caption{Comparison of beampatterns with different antennas' positions in the LoS channel scenario.}
%	\label{LOS_beampattern}
%	\vspace{-0.3cm}
%\end{figure}
%
%
%\begin{figure}[t]
%	\centering
%	\includegraphics[width=3.0in]{beampattern_NLOS_addRGPMrandomInit}
%	\caption{Comparison of beampatterns with different antennas' positions in the NLoS channel scenario.}
%	\label{NLOS_beampattern}
%	\vspace{-0.3cm}
%\end{figure}

\begin{figure}[t]
	\centering 	\vspace{-0.4cm}
	\includegraphics[width=3.5in]{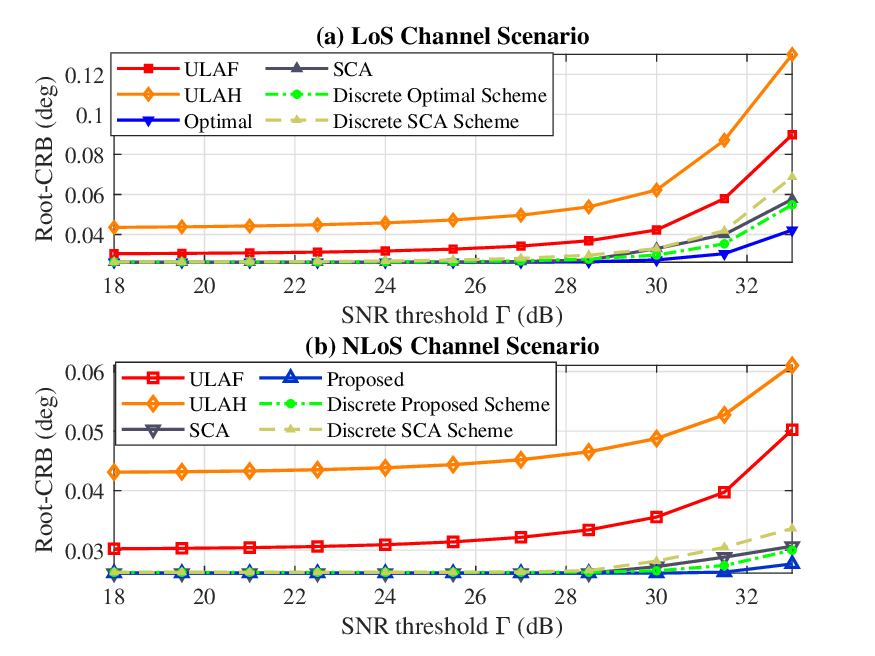}\vspace{-0.4cm}
	\caption{{{CRB versus the required communication SNR threshold $\Gamma$ in the case of both transmit and receive MAs.}}}
	\label{CRB_SNR_TxRx}
	\vspace{-0.1cm}
\end{figure}

%
%\begin{figure}[t]
%	\centering
%	\includegraphics[width=3.5in]{LOS_CRB_SNR_TxRx}
%	\caption{CRB versus the required communication SNR in the case of LoS channel when both transmit and receive antenna movable.}
%	\label{LOS_CRB_SNR_TxRx}
%\end{figure}
%
%\begin{figure}[t]
%	\centering
%	\includegraphics[width=3.5in]{NLOS_CRB_SNR_TxRx}
%	\caption{CRB versus the required communication SNR in the case of NLoS channel when both transmit and receive antenna movable.}
%	\label{NLOS_CRB_SNR_TxRx}
%\end{figure}
%
%
%\begin{figure}[t]
%	\centering
%	\includegraphics[width=3.8in]{LOS_N}
%	\caption{CRB versus the number of MAs in the case of LoS channel when both transmit and receive antenna movable.}
%	\label{LOS_N}
%\end{figure}
%
%\begin{figure}[t]
%	\centering
%	\includegraphics[width=3.8in]{NLOS_N}
%	\caption{CRB versus the number of MAs in the case of NLoS channel when both transmit and receive antenna movable.}
%	\label{NLOS_N}
%\end{figure}

%\begin{figure}[t]
%	\centering
%	\includegraphics[width=3.0in]{untitled}
%	\caption{CRB versus the required communication SNR threshold $\Gamma$ when both transmit and receive antenna movable.}
%	\label{CRB_SNR_TxRxuntitled}
%	\vspace{-0.3cm}
%\end{figure}

\vspace{-0.2cm}
\subsection{Receive MA Case}
Fig. \ref{Rx_CRB_SNR} demonstrates the sensing performance (in terms of root-CRB) achieved by movable receive antennas under various user communication SNR thresholds, receive antenna apertures and numbers.
As shown in Fig. \ref{Rx_CRB_SNR} (a), the root-CRB performance achieved by the optimal receive MA positions given in (\ref{q10}) is significantly lower than those of ULAH and ULAF with the same antenna aperture.
In the meantime, from Fig. \ref{Rx_CRB_SNR} (b), we can see that increasing the antenna aperture from \(D_x = 10\lambda\) to \(D_x = 40\lambda\) is able to reduce the root-CRB consistently, which validates the analysis in Section \uppercase\expandafter{\romannumeral3}.
%Specifically, at SNR = 20 dB, the proposed scheme demonstrates a 4.33 dB and 1.25 dB CRB gain over the ULAH and ULAF scheme, respectively, when \(D_x = 13.55\lambda\); when \(D_x = 27.1\lambda\), this performance gain increases to 11.93 dB and 2.83 dB, respectively.
Furthermore, as $D_x$ increases, the performance of the proposed scheme will gradually approach the lower-bound root-CRB given in (\ref{approx1}), and a 4.77 dB CRB gain can be achieved over the ULAF scheme.
In Fig. \ref{Rx_CRB_SNR} (c), it is observed that under a given antenna aperture $ D_y = 13.55\lambda $, increasing the number of receive antennas improves the sensing performance, but the performance gain cannot grow unboundedly. This is because the value of $ N_t$ is limited by $ D_y / d + 1$ with fixed $D_y$, corresponding to the case that all antennas are arranged in a ULA with a spacing of $ \lambda/2 $.
\vspace{-0.4cm}
\subsection{Transmit MA Case}
Next, we evaluate the effectiveness of our proposed algorithms for movable transmit antennas, as illustrated in Fig. \ref{LOS_CRB_SNR}.
From Fig. \ref{LOS_CRB_SNR} (a), we can see that the proposed BT-BFS algorithm achieves the minimum CRB among the considered schemes across all user communication SNR thresholds. 
Besides, the proposed BT-BFS algorithm is able to maintain the minimum CRB level of target angle estimation specified in (\ref{CRB_expression_min}) until the SNR threshold reached about $\Gamma_0 = 27$ dB, which is $\Delta_{\Gamma}=12$ dB larger than those of the traditional ULAs (about 15 dB). Additionally, the proposed low-complexity BT-DFS algorithm also significantly outperforms the traditional ULAs as well as the SCA scheme. %providing a 1.16 dB CRB gain over the ULAH scheme at $\Gamma = 27$ dB.
Under the NLoS channel scenario (Fig. \ref{LOS_CRB_SNR} (b)), the proposed MM-based RGP algorithm maintains the CRB at the minimum specified in (\ref{CRB_expression_min}) across an SNR range of \(\Delta_{\Gamma} = 14\) dB, which is also larger than those of the benchmark schemes.
%and achieves a 1.43 dB CRB gain over traditional ULAs at a user SNR threshold of \(\Gamma = 30\) dB as demonstrated in Fig. \ref{NLOS_CRB_SNR}. 
Furthermore, we can see that initializing the RGP by the solution derived from the MM algorithm (as in Algorithm 3) yields superior performance as compared to random initialization, which validates the superiority of our proposed initialization method.

{Next, we show in Fig. \ref{LOS&NLOS_CRB_Nt_Dx} the sensing performance achieved by the proposed algorithms under various transmit antenna numbers and apertures. 
From Fig. \ref{LOS&NLOS_CRB_Nt_Dx} (a) and Fig. \ref{LOS&NLOS_CRB_Nt_Dx} (c), we can observe that increasing the number of transmit antennas, i.e., \(N_t\), enhances the sensing performance (measured by root-CRB) of all schemes, when the antenna number is extended up to 64 and the aperture is set as $64\lambda$.
An interesting phenomenon emerges as $N_t$ grows large: the performance gap between all schemes diminishes, and their CRBs become similar. This occurs because the communication SNR threshold $\Gamma = 30$ dB becomes easier to satisfy with larger antenna arrays. 
This trend is consistent with the analysis in Section IV. D, where increasing $N_t$ enables all schemes to achieve the optimal beamforming vector $\mathbf{w}_{\text{opt}} = \sqrt{P_T} \mathbf{a} / \|\mathbf{a}\|$, thereby achieving the minimum CRB, i.e., (\ref{CRB_expression_min}),  and decreasing at an order of $O(1/N_t)$.
Moreover, it is important to mention that this result does not  imply a diminished advantage of the MA scheme; as discussed in Section IV. D and illustrated in Fig. \ref{LOS_CRB_SNR}, the MA scheme’s benefits become more pronounced at higher SNR levels, particularly given that an antenna number of 64 is already quite large.
Fig. \ref{LOS&NLOS_CRB_Nt_Dx} (b) and Fig. \ref{LOS&NLOS_CRB_Nt_Dx} (d) show that increasing the transmit antenna aperture \(D_x\) only improves the sensing performance of the MA-enabled scheme, and the root-CRB performance eventually saturates with the increasing of $D_x$, which is consistent with our analysis in Section \uppercase\expandafter{\romannumeral4}.  Moreover, we can see that our proposed algorithms outperform all benchmark schemes under different settings.}

%\subsection{General Case}
Then, to gain more insights, we illustrate in Fig. \ref{LOS_beampattern} the beampatterns of different schemes where the required SNR level is set to be $\Gamma = 30$ dB and the azimuth AoD of the LoS path is set to $\theta_t^1 = {\pi}/{3}$.
Under the LoS channel scenario, as illustrated in Fig. \ref{LOS_beampattern} (a), all four beamformers direct their mainlobes towards $0^\circ$ (the target direction), and allocate a large amount of  energy at $60^\circ$ (the communication user's direction). 
The proposed BT-BFS algorithm achieves the highest peak power and the narrowest mainlobe width towards the target angle, followed by the BT-DFS algorithm. 
Both proposed algorithms demonstrate substantial improvements over the traditional ULAs.  %with BT-BFS additionally attaining the highest beam energy at the user communication direction.
Under the NLoS scenario, we observe that although the considered four schemes still direct their mainlobes towards $0^\circ$, the energy allocated to $60^\circ$ is reduced as compared to the LoS case due to dispersion across other NLoS paths (Fig. \ref{LOS_beampattern} (b)). The proposed MM-based RGP algorithm exhibits the highest peak power and the narrowest mainlobe width at the target direction, followed by the SCA algorithm. Similar to the LoS case, both algorithms significantly outperform the traditional ULAs. 
%with the MM-based RGPM algorithm also achieving the highest beam energy in the user communication direction.

{{
		\vspace{-0.3cm}
\subsection{General Case}
Finally, Fig. \ref{CRB_SNR_TxRx} illustrates the relationship between root-CRB performance and user communication SNR threshold when both transmit and receive antennas are movable.
Under the LoS channel scenario, the optimal scheme employs the BT-BFS algorithm for transmit movable antenna position optimization, while under the NLoS channel scenario, the proposed scheme utilizes the MM-based RGPM algorithm.
Under both scenarios, the optimal solution in (\ref{q10}) is utilized for receive antenna position optimization.  
As can be observed from Fig. \ref{CRB_SNR_TxRx}, MA arrays significantly enhance the target CRB performance compared to the considered benchmarks at the same user communication SNR threshold.
In both LoS and NLoS scenarios, we also include two discrete schemes obtained by quantizing the optimal (proposed) scheme and the SCA scheme. These discrete schemes comply with the antenna spacing constraints specified in (\ref{q4}) and (\ref{q5}), using a quantization step size of $\lambda/5$.
%Moreover, practical MA platforms are constrained to a finite set of discrete positions due to limited rail length and actuator resolution. Therefore, the discrete schemes correspond to the quantized versions of the optimal and proposed continuous schemes, respectively, in each scenario. These quantized solutions also satisfy the antenna spacing constraint, i.e., (\ref{q4}) and (\ref{q5}), with a quantization step size of $\lambda/5$. 
As shown in Fig. \ref{CRB_SNR_TxRx}, both the discrete optimal and discrete proposed schemes achieve better performance compared to the ULA and SCA schemes, which demonstrates their practical effectiveness.
}}

%Specifically, at a user threshold of $\Gamma = 30$ dB, the optimal scheme achieves CRB gains of 7.25 dB and 3.88 dB over the ULAH and ULAF schemes under LoS channels, respectively, and 5.42 dB and 2.78 dB under NLoS channels, respectively.

\vspace{-0.3cm}
\section{Conclusion}\vspace{-0.1cm}
This paper presented an MA-enabled ISAC system for future wireless networks, and provided optimal solutions and efficient algorithms for joint antenna position and beamforming optimization.
Specifically, for the case with receive MAs, we derived an optimal antenna position solution, which is able to achieve a CRB gain of at most 4.77 dB over the traditional FPAs. 
While for the case with transmit MAs, we proposed the BT-BFS and BT-DFS algorithms to obtain global optimal and local optimal solutions in the LoS channel scenario, respectively. 
In the NLoS scenario, we also proposed an MM-based RGP algorithm (together with an efficient initialization method) to obtain stationary solutions of the considered problem. 
Numerical results demonstrate the superiority of transmit/receive MAs in improving the ISAC performance over traditional FPAs, and the effectiveness of the proposed algorithms.
{Although our framework guarantees global optimality in linear array, single‐user, single‐target setups, extending it to 2D/3D array and multi‐user, multi‐target scenarios introduces joint optimization challenges that typically lead only to stationary solutions. 
We have already explored one such extension of a multi‐user downlink ISAC system in our prior work \cite{Han2025VTC}; accordingly, exploring scalable approaches to address these challenges in even more general settings is a promising direction for future research.
%	Exploring scalable approaches to address these challenges in more general settings is a promising direction for future research.
}

\vspace{-0.2cm}
{\appendices
	\vspace{-0.1cm}
\section{Proof of Lemma 1}\vspace{-0.1cm}
To prove this lemma, we first simplify $f(\mathbf{y})$ and express it in a quadratic form given by $f(\mathbf{y})=\mathbf{y}^T\mathbf{Q}\mathbf{y}$,
where $\mathbf{Q}=\mathbf{I}_{N_r}-\frac{1}{N_r}\mathbf{J}_{N_r}$.
Then, for any $a\in\mathbb{R}$, we have
$		f(\mathbf{y}+a)
=\mathbf{y}^T\mathbf{Q}\mathbf{y}+a\mathbf{y}^T\mathbf{Q}\mathbf{1}
+a\mathbf{1}^T\mathbf{Q}\mathbf{y}+a^2\mathbf{1}^T\mathbf{Q}\mathbf{1}
{=}f(\mathbf{y}),$
%\begin{equation}
%	\begin{aligned}
%		f(\mathbf{y}+a)
%		&=\mathbf{y}^T\mathbf{Q}\mathbf{y}+a\mathbf{y}^T\mathbf{Q}\mathbf{1}
%		+a\mathbf{1}^T\mathbf{Q}\mathbf{y}+a^2\mathbf{1}^T\mathbf{Q}\mathbf{1}\\
%		&\overset{(b)}{=}f(\mathbf{y}),
%	\end{aligned}
%\end{equation}
where the second equality holds because $\mathbf{1}^T\mathbf{Q}=\mathbf{0}^T$ and $\mathbf{1}^T\mathbf{Q}\mathbf{1}=0$. This completes the proof.

\vspace{-0.3cm}
\section{Proof of Lemma 3}\vspace{-0.1cm}
First, we analyze the property of the local maxima of $f(\mathbf{y})$ by examining its second-order derivatives, which can be obtained as
$	\frac{\partial^2{f(\mathbf{y})}}{\partial{y_i}^2}=2-\frac{2}{N_r},\;
\frac{\partial^2{f(\mathbf{y})}}{\partial{y_i}\partial{y_j}}=-\frac{2}{N_r}(i\neq{j})$.
%\begin{equation}
%	\label{AC2}
%	\begin{aligned}
%		\frac{\partial^2{f(\mathbf{y})}}{\partial{y_i}^2}=2-\frac{2}{N_r},\;
%		\frac{\partial^2{f(\mathbf{y})}}{\partial{y_i}\partial{y_j}}=-\frac{2}{N_r}(i\neq{j}).
%	\end{aligned}
%\end{equation}
It can be readily seen that the Hessian matrix of $f(\mathbf{y})$ 
%can be derived as
%\begin{equation}
%	\begin{aligned}
%		\nabla^2f(\mathbf{y})=\left[\begin{array} {cccc}
%			2-\frac{2}{N_r} & -\frac{2}{N_r} & \cdots & -\frac{2}{N_r}\\
%			-\frac{2}{N_r} & 2-\frac{2}{N_r} & \cdots & -\frac{2}{N_r}\\
%			\vdots & \vdots & \ddots & \vdots\\
%			-\frac{2}{N_r} & -\frac{2}{N_r} & \cdots & 2-\frac{2}{N_r} 
%		\end{array}\right]_{N_r\times{N_r}},
%	\end{aligned}
%\end{equation}
%which 
is a positive definite matrix, and thus 
$f(\mathbf{y})$ is a convex continuous function with its maximum value always achieved at the boundary points. 
This implies that when 
%$y_1,\cdots,y_{i-1},y_{i+1},\cdots,y_{N_r}$ 
$\mathbf{y} \backslash y_i $
are determined, the maximum value of $f(\mathbf{y})$ must be attained at the boundary of the feasible region of \( y_i \), i.e., $y_i=y_{i-1}+d\;\;\text{or}\;\;y_i=y_{i+1}-d.$

Next, we prove that compared to $y_i=y_{i+1}-d$, $y_i=y_{i-1}+d$ is better in terms of maximizing \( f(\mathbf{y}) \) when \( i\le\left\lfloor\frac{{N_r}}{2}\right\rfloor \).
Due to the translational invariance property of \( f(\mathbf{y}) \) as proved in Lemma 1, we can assume \( y_1 = 0 \) without loss of generality. Now, let
$\hat{y}_k=y_{k-1}+d$ and $\tilde{y}_k=y_{k+1}-d$, \( 2\le{k}\le\left\lfloor\frac{{N_r}}{2}\right\rfloor \), which satisfy $\hat{y}_k\neq{\tilde{y}_k}$.\footnote{If $ \hat{y}_k = \tilde{y}_k $, then $ y_{k+1} = y_{k-1} + 2d $ holds, which implies that $y_k$ is a constant.} 
By substituting  $\hat{\mathbf{y}} \triangleq [y_1,\cdots,y_{k-1},\hat{y}_k,y_{k+1},\cdots,y_{N_r}]^T$ and $\tilde{\mathbf{y}} \triangleq[y_1,\cdots,y_{k-1},\tilde{y}_k,y_{k+1},\cdots,y_{N_r}]^T$ into $f(\mathbf{y})$, we obtain
$		f(\hat{\mathbf{y}})=
(\sum\nolimits_{\substack{{i=1},i\neq{k}}}^{N_r}{y_i^2}+\hat{y}_k^2)
-\frac{1}{N_r}(\sum\nolimits_{\substack{{i=1},  i\neq{k}}}^{N_r}{y_i}+\hat{y}_k)^2$ and 
$f(\tilde{\mathbf{y}})=
(\sum\nolimits_{\substack{{i=1},i\neq{k}}}^{N_r}{y_i^2}+\tilde{y}_k^2)
-\frac{1}{N_r}(\sum\nolimits_{\substack{{i=1},i\neq{k}}}^{N_r}{y_i}+\tilde{y}_k)^2.$
%\begin{equation}
%	\begin{aligned}
%		f_{\hat{y}_k}(\mathbf{y})=
%		\left(\sum\limits_{\substack{{i=1},i\neq{k}}}^{N_r}\!\!\!{y_i^2}+\hat{y}_k^2\right)
%		-\frac{1}{N_r}\left(\sum\limits_{\substack{{i=1},  i\neq{k}}}^{N_r}\!\!\!{y_i}+\hat{y}_k\right)^2,\\
%		f_{\tilde{y}_k}(\mathbf{y})=
%		\left(\sum\limits_{\substack{{i=1},i\neq{k}}}^{N_r}\!\!\!{y_i^2}+\tilde{y}_k^2\right)
%		-\frac{1}{N_r}\left(\sum\limits_{\substack{{i=1},i\neq{k}}}^{N_r}\!\!\!{y_i}+\tilde{y}_k\right)^2.
%	\end{aligned}
%\end{equation}
Hence, to establish Lemma 3,  we need to prove that $f(\hat{\mathbf{y}})>f(\tilde{\mathbf{y}})$ holds for  \( 2\le{k}\le\left\lfloor\frac{{N_r}}{2}\right\rfloor \).
By subtracting $f(\hat{\mathbf{y}})$ from $f(\tilde{\mathbf{y}})$, we can obtain
$f(\tilde{\mathbf{y}})-f(\hat{\mathbf{y}})=
\frac{y_{k-1}-{y}_{k+1}+2d}{N_r}
[(N_r-1)(y_{k-1}+{y}_{k+1})-2\sum\nolimits_{\substack{{i=1},i\neq{k}}}^{N_r}y_i].$
%	\begin{align}
%		&f_{\tilde{y}_k}(\mathbf{y})-f_{\hat{y}_k}(\mathbf{y})\\
%		&=
%		\frac{y_{k-1}-{y}_{k+1}+2d}{N_r}
%		\bigg[\left(N_r-1\right)\left(y_{k-1}+{y}_{k+1}\right)-2\sum\limits_{\substack{{i=1}\\i\neq{k}}}^{N_r}y_i\bigg].\nonumber
%	\end{align}
Since ${y}_{k+1}\ge{y_{k}+d}$, ${y}_{k}\ge{y_{k-1}+d}$ and $\hat{y}_k\neq{\tilde{y}_k}$, we have $y_{k-1}-{y}_{k+1}+2d<0$. 
Therefore, proving $f(\hat{\mathbf{y}})-f(\tilde{\mathbf{y}})>0$ is equivalent to proving $\left(N_r-1\right)\left(y_{k-1}+{y}_{k+1}\right)-2\sum\nolimits_{i=1,i\neq{k}}^{N_r}y_i<0$.
To prove the latter, we resort to the following result:\vspace{-0.2cm}
\begin{equation}
	\label{L3}\small
	\begin{aligned}
		\sum\limits_{{{i=1},i\neq{k}}}^{N_r}y_i
		\ge\sum\limits_{i=1}^{k-1}(i-1)d+\sum\limits_{i=k+1}^{N_r}(i-1)d
		\overset{(b)}{>}\frac{(N_r-1)^2}{2}d,
	\end{aligned}\vspace{-0.2cm}
\end{equation}
where the inequality $(b)$ holds due to the fact that $k\le\lfloor\frac{N_r}{2}\rfloor\le\frac{N_r}{2}<\frac{N_r+1}{2}$.
Since $y_{k-1}+y_{k+1}\le(k-2)d+kd<(N_r-1)d$ holds, by combining it with (\ref{L3}), we can obtain $\left(N_r-1\right)\left(y_{k-1}+{y}_{k+1}\right)-2\sum\nolimits_{i=1,i\neq{k}}^{N_r}y_i
<0$,
%\begin{equation}
%	\begin{aligned}
%&\left(N_r-1\right)\left(y_{k-1}+{y}_{k+1}\right)-2\sum\limits_{i=1,i\neq{k}}^{N_r}y_i
%<0,
%	\end{aligned}
%\end{equation}
which directly leads to $f(\hat{\mathbf{y}})>f(\tilde{\mathbf{y}})$. This thus completes the proof.

%\section{Proof of Theorem 1}
%From Lemma 3, the first $\lfloor\frac{N_r}{2}\rfloor$ antennas must form a ULA with spacing \( d \) starting from zero to $\left(\lfloor\frac{N_r}{2}\rfloor-1\right)d$. Similarly, based on the symmetry in Lemma 2, the last $\lfloor\frac{N_r}{2}\rfloor$ antennas must form the same ULA starting from \( D \) to $D-\left(\lfloor\frac{N_r}{2}\rfloor-1\right)d$.

%In summary, when \( N_r \) is even, i.e., $\lfloor\frac{N_r}{2}\rfloor=\frac{N_r}{2}$, $\mathbf{y}_{opt}$ satisfies $[0,d,\cdots,(\frac{N_r-2}{2})d,D-(\frac{N_r-2}{2})d,D-(\frac{N_r-4}{2})d,\cdots,D]^T$; when \( N_r \) is odd, $\mathbf{y}_{opt}$ satisfies $[0,\cdots,(\frac{N_r-3}{2})d,(\frac{N_r-1}{2})d\;or\;D-(\frac{N_r-1}{2})d,D-(\frac{N_r-3}{2})d,\cdots,D]^T$. Thus, Theorem 1 is proven.
\vspace{-0.3cm}
\section{Proof of Lemma 4}\vspace{-0.1cm}
First, we show that \( \bm{\beta^*} \) is a local maximum of \( E(\bm{\beta}) \) if and only if \( \bm{\beta^*} \) is a local maximum of \( E^2(\bm{\beta}) \).
Given that \( E(\bm{\beta}) \ge 0 \), if \( \bm{\beta^*} \) is a local maximum of \( E(\bm{\beta}) \), then there must exist a neighborhood \( \mathcal{N} \) of \( \bm{\beta^*} \) such that \( E(\bm{\beta^*}) \ge E(\bm{\beta}) \) for \( \forall\bm{\beta} \in \mathcal{N} \), which is equivalent to \( E^2(\bm{\beta^*}) \ge E^2(\bm{\beta}) \) and thus \( \bm{\beta^*} \) is a local maximum of \( E^2(\bm{\beta}) \), and vice versa. Consequently, we only need to verify that \( ( \sum\nolimits_{i=1}^{N} \rho_i )^2 \) is the unique local maximum value of \( E^2(\bm{\beta}) \).

Then, we prove the existence of the local maximum value $(\sum\nolimits_{i=1}^{N}\rho_i)^2$ of \( E^2(\bm{\beta}) \) and that the corresponding solution satisfies condition (\ref{Econ}), where we only need to verify that \(\nabla E^2(\bm{\beta}) =0\) and \(\nabla^2 E^2(\bm{\beta}) \prec 0\) when condition (\ref{Econ}) is satisfied.
% Using Euler's formula, we rewrite $E(\bm{\beta})$ as 
% \begin{equation}
% 	\label{L4_2}
% 	\begin{aligned}
% 		E(\bm{\beta})=
% 		\sqrt{\left(\sum\limits_{i=1}^{N}\rho_i{\cos{\beta_i}}\right)^2+
% 			\left(\sum\limits_{i=1}^{N}\rho_i{\sin{\beta_i}}\right)^2}.
% 	\end{aligned}
% \end{equation}
Since \( E^2(\bm{\beta}) \) is twice continuously differentiable, we can simply verify that the each entry of the gradient of $E^2(\bm{\beta})$, i.e., \vspace{-0.2cm}
\begin{equation}
	\label{Egradient}\small
	\begin{aligned}
		\frac{\partial}{\partial{\beta_i}}{E^2(\bm{\beta})}=-2\sum\limits_{k=1}^{N}
		\rho_k\rho_i\sin({{\beta}_i-{\beta}_k}),
	\end{aligned}\vspace{-0.1cm}
\end{equation}
is zero when (\ref{Econ}) is satisfied. 
The second-order derivatives of $E^2(\bm{\beta})$ can be similarly determined as 
$		\frac{\partial^2}{\partial{\beta_i}^2}{E^2(\bm{\beta})}=-2\sum\nolimits_{k=1,k\neq{i}}^{N}\rho_k\rho_i\cos({{\beta}_i-{\beta}_k})$ and $
\frac{\partial^2}{\partial{\beta_i}\partial{\beta_j}}{E^2(\bm{\beta})}=2\rho_j\rho_i\cos({{\beta}_i-{\beta}_j}),(i\neq{j})$.
%\begin{equation}
%	\label{EHessian_derivatives}
%	\begin{aligned}
%		\frac{\partial^2}{\partial{\beta_i}^2}{E^2(\bm{\beta})}=-2\sum\limits_{k=1,k\neq{i}}^{N}\rho_k\rho_i\cos({{\beta}_i-{\beta}_k}),\\
%		\frac{\partial^2}{\partial{\beta_i}\partial{\beta_j}}{E^2(\bm{\beta})}=2\rho_j\rho_i\cos({{\beta}_i-{\beta}_j})(i\neq{j}).
%	\end{aligned}
%\end{equation} 
We can also obtain that when (\ref{Econ}) is satisfied, the following inequality holds:
$\mathbf{x}^T \nabla^2 \left(E^2(\bm{\beta})\right) \mathbf{x} = -2 \sum_{i\neq{j}}  \rho_i \rho_j \cos(\beta_i - \beta_j) (x_i - x_j)^2 < 0,$ $\forall\mathbf{x} \in \mathbb{R}^N$
%\begin{equation}
%	\label{EHessian_semi_neg}
%	\begin{aligned}
%\mathbf{x}^T \nabla^2 \left(E^2(\bm{\beta})\right) \mathbf{x} = -2 \sum_{i\neq{j}}  \rho_i \rho_j \cos(\beta_i - \beta_j) (x_i - x_j)^2 < 0, 
%	\end{aligned}
%\end{equation}
if $\mathbf{x} \neq \mathbf{0} $.
This implies that the Hessian matrix \(\nabla^2 \left(E^2(\bm{\beta})\right)\) is negative definite. 
Thus, by substituting (\ref{Econ}) into $E^2(\bm{\beta})$, the local maximum value $(\sum\nolimits_{i=1}^{N}\rho_i)^2$ can be obtained.

Next, we prove the uniqueness of this local maximum value.
From the triangle inequality, we have 
$|\sum\nolimits_{i=1}^{N}\rho_ie^{j\beta_i}|^2\le
( {\sum\nolimits_{i=1}^{N}|\rho_ie^{j\beta_i}|})^2 =( \sum\nolimits_{i=1}^{N}\rho_i)^2$.
% \begin{equation}
%	\begin{aligned}
%\begin{vmatrix}\sum\limits_{i=1}^{N}\rho_ie^{j\beta_i}\end{vmatrix}^2\le
%\left( {\sum\limits_{i=1}^{N}\begin{vmatrix}\rho_ie^{j\beta_i}\end{vmatrix}}\right)^2 =\left( \sum\limits_{i=1}^{N}\rho_i\right)^2.
%	\end{aligned}
%\end{equation}
Therefore, this local maximum is also the global maximum of \( E^2(\bm{\beta}) \). Now, we prove that there is no other local maximum that is smaller than \( (\sum\nolimits_{i=1}^{N}\rho_i)^2 \) by contradiction.
Suppose that there exists another local maximum solution $\bm{\beta}^0$, and the corresponding local maximum value $E^2(\bm{\beta}^0)$ satisfies $E^2(\bm{\beta}^0)< (\sum\nolimits_{i=1}^{N}\rho_i)^2$.
Since $\bm{\beta}^0$ does not satisfy condition (\ref{Econ}), it must belong to one of the following two cases.

\textbf{Case}\;\textbf{\uppercase\expandafter{\romannumeral1}}:
For all \( N \) complex numbers, i.e., \( \rho_i e^{j\beta^0_i} \), \( 1 \leq i \leq N \), they are not all collinear, which indicates that there exists a positive integer \(1\le{t}\le{N-1}\) such that \(0<{\eta^0}<{\pi}\), where $\eta^0=\min\left\lbrace \varphi,2\pi-\varphi \right\rbrace$ and 
$\varphi= |\angle(\sum\nolimits_{i=1}^{t}\rho_ie^{j\beta_i^0})
	-\angle(\sum\nolimits_{i=t+1}^{N}\rho_ie^{j\beta_i^0})|$.
% \begin{equation}
%	\begin{aligned}
%		\eta^0=\min\left\lbrace \varphi,2\pi-\varphi \right\rbrace,
%	\end{aligned}
%\end{equation}
%and 
% \begin{equation}
%	\begin{aligned}
%		\varphi= \begin{vmatrix}\angle\left(\sum\limits_{i=1}^{t}\rho_ie^{j\beta_i^0}\right)
%			-\angle\left(\sum\limits_{i=t+1}^{N}\rho_ie^{j\beta_i^0}\right)
%		\end{vmatrix}.
%	\end{aligned}
%\end{equation}
Additionally, using the cosine rule, we can obtain
$E^2(\bm{\beta}^0)={\begin{vmatrix}E_1\end{vmatrix}^2+
	\begin{vmatrix}E_2\end{vmatrix}^2+
	2\begin{vmatrix}E_1\end{vmatrix}\begin{vmatrix}E_2\end{vmatrix}\cos{\eta^0}}$,
%\begin{equation}
%	\begin{aligned}
%%		E(\bm{\beta}^0)&=\begin{vmatrix}\sum\limits_{i=1}^{t}\rho_ie^{j\beta_i^0}+\sum\limits_{i=t+1}^{N}\rho_ie^{j\beta_i^0}\end{vmatrix}
%%		\\
%		E^2(\bm{\beta}^0)={\begin{vmatrix}E_1\end{vmatrix}^2+
%			\begin{vmatrix}E_2\end{vmatrix}^2+
%			2\begin{vmatrix}E_1\end{vmatrix}\begin{vmatrix}E_2\end{vmatrix}\cos{\eta^0}},
%	\end{aligned}
%\end{equation}
where $E_1\triangleq\sum\nolimits_{i=1}^{t}\rho_ie^{j\beta_i^0}$ and $E_2\triangleq\sum\nolimits_{i=t+1}^{N}\rho_ie^{j\beta_i^0}$.
Since \(\bm{\beta}^0\) is a local maximum of $E^2(\bm{\beta})$, there must exist a positive number \(\delta\) such that for any $\bm{\beta}$ that satisfies  \(\|\bm{\beta} - \bm{\beta}^0\| \leq \delta\), \(E(\bm{\beta}) \le E(\bm{\beta}^0)\) holds.
The neighborhood of \(\bm{\beta}^0\), i.e., \(\|\bm{\beta} - \bm{\beta}^0\| \leq \delta\), corresponds to the neighborhood of \({\eta}^0\) which is denoted as \(|\eta-\eta^0| \leq \eta'\). 
However, since \(0<{\eta^0}<{\pi}\), if we take \(\eta^1 = \eta^0 -\eta''\), where $0<\eta''<\eta'$ and $\eta''$ is small enough to ensure that \({\eta^1}>0\) is satisfied, we will have \(\cos\eta^1>\cos\eta^0\), which further implies \(E^2(\bm{\beta}^1)> E^2(\bm{\beta}^0)\), 
where $\bm{\beta}^1$ is the vector corresponding to $\eta^1$, analogous to how $\bm{\beta}^0$ corresponds to $\eta^0$. 
This contradicts the assumption that \(E^2(\bm{\beta}^0)\) is a local maximum value.

\textbf{Case}\;\textbf{\uppercase\expandafter{\romannumeral2}}:
If all $N$ complex numbers are collinear but do not satisfy condition (\ref{Econ}), that is, there must exist at least one positive integer \(1\le{t}\le{N-1}\) such that \(\beta_t^0-\beta_{t+1}^0=(2k+1){\pi}\), $k\in\mathbb{Z}$. 
%In this case, we can easily derive the following corollary: 
%$\exists{1\le{t'}\le{N}}$, such that $\rho_{t'}\le\begin{vmatrix}\sum\limits_{{i=1},i\neq{{t'}}}^{N}\rho_ie^{j\beta_i^0}\end{vmatrix}$ and $\begin{vmatrix}\beta_{t'}-\angle{\left(\sum\limits_{{i=1},i\neq{{t'}}}^{N}\rho_ie^{j\beta_i^0}\right)}\end{vmatrix}=\pi$.
%Similarly, in the $\delta$-neighborhood of \(\beta_{t'}\), by perturbing \(\beta_{t'}^0\) to \(\beta_{t'}^1=\beta_{t'}^0+\frac{\delta}{2}\), it can be deduced that
% 	
%	\begin{align}\label{L4_3}
%		E(\bm{\beta}^0)&=\begin{vmatrix}\sum\limits_{{{i=1},i\neq{{t'}}}}^{N}\rho_ie^{j\beta_i^0}+\rho_{t'}e^{j\beta_{t'}^0}\end{vmatrix}\nonumber\\
%		&=\begin{vmatrix}\sum\limits_{{{i=1},i\neq{{t'}}}}^{N}\rho_ie^{j\beta_i^0}\end{vmatrix}-\rho_{t'}\nonumber\\
%		&<\sqrt{\begin{vmatrix}\sum\limits_{\substack{{i=1}\\i\neq{{t'}}}}^{N}\rho_ie^{j\beta_i^0}\end{vmatrix}^2+
%			\rho_{t'}^2-
%			2\rho_{t'}\begin{vmatrix}\sum\limits_{\substack{{i=1}\\i\neq{{t'}}}}^{N}\rho_ie^{j\beta_i^0}\end{vmatrix}\cos{\frac{\delta}{2}}}\nonumber\\
%		&=E(\bm{\beta}^1).
%	\end{align}
%In this case, (\ref{L4_3}) also leads to contradiction to the previous assumption.
In this case, we can verify from (\ref{Egradient}) that $\nabla E^2(\bm{\beta})=\bm{0}$ is satisfied. However, since \(\beta_t^0-\beta_{t+1}^0=(2k+1){\pi}\), we have \(\cos(\beta_t^0-\beta_{t+1}^0)=-1 \), which allows for the existence of a vector  $\hat{\mathbf{x}} \in \mathbb{R}^N$ such that forces \(-\rho_t\rho_{t+1}\cos(\beta_t^0-\beta_{t+1}^0)(\hat{x}_t-\hat{x}_{t+1})^2 \) go to positive infinity. 
Therefore,  $\hat{\mathbf{x}}^T \nabla^2 \left(E^2(\bm{\beta}^0)\right) \hat{\mathbf{x}}>0$ implies that $\nabla^2 \left(E^2(\bm{\beta}^0)\right)$ is not a negative semi-definite matrix, which also contradicts the assumption that $\bm{\beta}^0$ is a local maximum.

In summary, based on the above two cases, the uniqueness of the local maximum value $ (\sum\nolimits_{i=1}^{N}\rho_i)^2$ is proved. Thus, Lemma 4 can be proved by combining the proofs of the existing and uniqueness of the local maximum value $ (\sum\nolimits_{i=1}^{N}\rho_i)^2$ of \( E^2(\bm{\beta}) \).
\vspace{-0.3cm}
\section{Verification for LICQ}\vspace{-0.1cm}
\textit{Definition (LICQ \cite{book1}):}
Given a point \( \mathbf{x} \) and an active set of a given nonlinear optimization problem, we say that the LICQ holds if the set of active constraint gradients 
at \( \mathbf{x} \) is linearly independent.

According to the above definition, we now compute the gradients of the inequality constraints in problem (CP), i.e., $\nabla_{\mathbf{x}}^T{\left(\mathbf{U}\mathbf{x}-\mathbf{l}_u\right)}=\mathbf{U}$,
%\begin{equation}
%	\begin{aligned}
	%		\nabla_{\mathbf{x}}^T{\left(\mathbf{U}\mathbf{x}-\mathbf{l}_u\right)}=\mathbf{U},
	%	\end{aligned}
%\end{equation}
where \(\mathbf{U}\) is an \( N_t \times N_t \) matrix given in (\ref{q16}) with \(\text{Rank}(\mathbf{U}) = N_t - 1\), and it can be easily verified that any 
\( N_t-1 \) rows in \(\mathbf{U}\) are linearly independent. 
%As some of the constraints in problem (CP) must be active when $D_x<\frac{(N_t-1)\lambda}{\sin\theta_t^1+\sin\theta}$, let us assume that the number of active constraints is \( c \), where \( 1 \leq c \leq N_t-1 \).\footnote{If \( c=N_t \), then \( \mathbf{U}\mathbf{x} = \mathbf{l}_u \) and \( D_x = (N_t - 1)\lambda/2 \) hold, which implies that the transmit antennas should be arranged at equal spacing with an interval of $\lambda/2$. This is equivalent to the conventional ULA scheme and thus we can safely ignore the case $c=N_t$.} 
Based on the properties of \( \mathbf{U} \), we can infer that any \( c \) rows of \( \mathbf{U} \) must also be linearly independent, which indicates that problem (CP) satisfies the LICQ condition.

\vspace{-0.2cm}
{{
\section{KKT Conditions Verification for BT-BFS}\vspace{-0.1cm}
		Now we rewrite problem (CP) as follows:\vspace{-0.1cm}
\begin{equation}\label{CPrewrite}\small
	\max_{\mathbf x\in\mathbb R^{N_t}}\,g(\mathbf x)
	\quad\text{s.t.}\quad
	h_i(\mathbf x)\le0,\;i=1,\dots,N_t,\vspace{-0.2cm}
\end{equation}
with\vspace{-0cm}
\begin{equation}\small
	h_i(\mathbf x)=
	\begin{cases}
		x_i - x_{i+1} + d, &1\le i\le N_t-1,\\
		x_{N_t}-x_1 - D_x, &i=N_t.
	\end{cases}\vspace{-0.2cm}
\end{equation}
Let \(\mathcal{A}_c=\{a_1,\dots,a_c\}\subset\{1,\dots,N_t\}\) denote the indices of the \(c\) active constraints at a local maximum point \(\mathbf x^*\), and let \(\mathcal I=\{1,\dots,N_t\}\setminus\mathcal A\) denote its complement.
%Define \(\mathbf P\in\mathbb R^{N_t\times N_t}\)  as a permutation matrix that satisfies   
%\begin{equation}\small
%	\mathbf y \;=\; \mathbf P^T \mathbf x
%	\quad\Longleftrightarrow\quad
%	\mathbf x \;=\; \mathbf P\,\mathbf y,
%\end{equation}
%and partition $\mathbf y$ such that its first $c$ entries, $\mathbf u \triangleq [x_{a_1}, \dots, x_{a_c}]^T$, form a vector in $\mathbb{R}^c$, while the remaining $N_t - c$ entries, $\{x_j \mid j \in \mathcal{I}\}$, form the vector $\mathbf v \in \mathbb{R}^{N_t - c}$. In short, we can partition \(\mathbf y\) into two subvectors as follows
%\begin{equation}\small
%	\mathbf y = \begin{pmatrix}\mathbf u\\ \mathbf v\end{pmatrix}.
%\end{equation}
Define \(\mathbf{P} \in \mathbb{R}^{N_t \times N_t}\) as a permutation matrix such that 
$	\mathbf y = \mathbf P^T \mathbf x
\Longleftrightarrow
\mathbf x = \mathbf P \mathbf y,$
%\vspace{-0.1cm}
%\begin{equation}\small
%	\mathbf y \;=\; \mathbf P^T \mathbf x
%	\quad\Longleftrightarrow\quad
%	\mathbf x \;=\; \mathbf P\,\mathbf y,\vspace{-0.2cm}
%\end{equation}
where \(\mathbf{P}\) rearranges the elements of \(\mathbf{x}\) so that the first \(c\) entries of \(\mathbf{y}\) correspond to the variables associated with the \(c\) active constraints, and the remaining \(N_t - c\) entries correspond to the inactive constraints.\footnote{{Note that \(\mathbf{P}\) is not unique, as any permutation that groups the active variables first and inactive variables afterward is valid; the order within each group does not affect the analysis.}}
Accordingly, we partition \(\mathbf{y}\) as \vspace{-0.2cm}
\begin{equation}\small
	\mathbf y = \begin{pmatrix}\mathbf u\\ \mathbf v\end{pmatrix},\vspace{-0.2cm}
\end{equation}
where $\mathbf u \triangleq [x_{a_1}, \dots, x_{a_c}]^T$ consists of the first \(c\) entries of \(\mathbf{y}\), while the remaining $N_t - c$ entries, $\{x_j \mid j \in \mathcal{I}\}$, form the vector $\mathbf v \in \mathbb{R}^{N_t - c}$.
We further define $\hat g (\mathbf y) \triangleq g(\mathbf P \mathbf y)$ and $\hat h_i (\mathbf y) \triangleq h_i(\mathbf P \mathbf y),1 \le i \le N_t$.
Since $\mathbf  P$ is an invertible linear operator, the mapping of variables $\mathbf x = \mathbf P \mathbf  y$ is bijective; hence, if $\mathbf  x^*$ is a local optimum of the original problem, then $\mathbf y^* = \mathbf P^{T} \mathbf  x^*$ is a local optimum of the following transformed problem:\vspace{-0.2cm}
\begin{equation}\label{yspaceProblem}\small
	\max_{\mathbf y\in\mathbb R^{N_t}}\;\hat g(\mathbf y)
	\quad\text{s.t.}\quad
	\hat h_i(\mathbf y)\le0,\quad i=1,\dots,N_t.\vspace{-0.2cm}
\end{equation}
Now we are ready to show that the proposed BT-BFS algorithm indeed satisfies the KKT conditions.

First, we verify the stationarity condition of the KKT conditions as stated in (\ref{kkt1}).
The Lagrangian of problem (\ref{CPrewrite}) is formed over the active constraints as 
$\mathcal L(\mathbf x,\boldsymbol\lambda_{c})
= g(\mathbf x) + \sum_{i\in\mathcal{A}_c}\lambda_{c,i}\,h_i(\mathbf x),$
%follows:
%\vspace{-0.1cm}
%\begin{equation}\small
%	\mathcal L(\mathbf x,\boldsymbol\lambda_{c})
%	= g(\mathbf x) + \sum_{i\in\mathcal{A}_c}\lambda_{c,i}\,h_i(\mathbf x),\vspace{-0.2cm}
%\end{equation}
and the stationarity \(\nabla_{\mathbf x}\mathcal L (\mathbf x^*,\boldsymbol\lambda_{c})=0\) gives\vspace{-0.1cm}
\begin{equation}\small
	\nabla g(\mathbf x^*) + \sum_{i\in\mathcal{A}_c}\lambda_{c,i}\,\nabla h_i(\mathbf x^*)=\mathbf 0.\vspace{-0.2cm}
\end{equation}
Since \(\nabla \hat g (\mathbf y)  =\mathbf P^T\nabla g(\mathbf P \mathbf y)\), the stationarity 
in \(\mathbf y\)-space is\vspace{-0.1cm}
\begin{equation}\small
	\mathbf P^T\nabla g(\mathbf P\,\mathbf y^*) \;+\;\sum_{i\in\mathcal{A}_c}\lambda_{c,i}\,\mathbf P^T\nabla h_i(\mathbf P\,\mathbf y^*)=\mathbf 0,\vspace{-0.1cm}
\end{equation}
or equivalently in problem (\ref{yspaceProblem}):\vspace{-0.1cm}
\begin{equation}\label{CPrewriteLagrangian}\small
	\nabla_{\mathbf y} \hat g(\mathbf y^*)
	+ \sum_{i\in\mathcal{A}_c}\lambda_{c,i}\,\nabla_{\mathbf y} \hat h_i(\mathbf y^*)
	= \mathbf 0.\vspace{-0.2cm}
\end{equation}
To facilitate the subsequent analysis, we introduce block‐partition and rewrite (\ref{CPrewriteLagrangian}) in the corresponding 
\((\mathbf u,\mathbf v)\) block form, i.e.,\vspace{-0.1cm}
\begin{equation}\label{blockLagrangian}\small
	\begin{pmatrix}
		\nabla_{\mathbf u} \hat g(\mathbf y^*)\\
		\nabla_{\mathbf v} \hat g(\mathbf y^*)
	\end{pmatrix}
	+ \sum_{i\in\mathcal{A}_c}\lambda_{c,i}
	\begin{pmatrix}
		\nabla_{\mathbf u} \hat h_i(\mathbf y^*)\\
		\nabla_{\mathbf v} \hat h_i(\mathbf y^*)
	\end{pmatrix}
	= \mathbf 0.\vspace{-0.2cm}
\end{equation}
Similarly, the Jacobian matrix of the active constraints can be written as\vspace{-0.1cm}
\begin{equation}\small
	\mathbf H_{\mathcal{A}_c}
	=\begin{bmatrix}
		\nabla^T_{\mathbf u} \hat h_{a_1}(\mathbf y^*) & \nabla^T_{\mathbf v} \hat h_{a_1}(\mathbf y^*)\\
		\vdots & \vdots\\
		\nabla^T_{\mathbf u} \hat h_{a_c}(\mathbf y^*) & \nabla^T_{\mathbf v} \hat h_{a_c}(\mathbf y^*)
	\end{bmatrix} 
	\triangleq \bigl[\mathbf H_{\mathcal{A}_c,\mathbf u}\;\big|\;\mathbf H_{\mathcal{A}_c,\mathbf v}\bigr],
\end{equation}
which, as verified in Appendix D, has full row rank \(c\).  Thus, by combining (\ref{blockLagrangian}), the stationarity can be obtain as \vspace{-0.1cm}
\begin{equation}\small
	\begin{cases}\label{KKTseperate}
		\nabla_{\mathbf u} \hat g(\mathbf y^*) + \mathbf H_{\mathcal{A}_c,\mathbf u}^T\,\boldsymbol\lambda_{c} =\mathbf 0,\\
		\nabla_{\mathbf v} \hat g(\mathbf y^*) + \mathbf H_{\mathcal{A}_c,\mathbf v}^T\,\boldsymbol\lambda_{c} =\mathbf 0.
	\end{cases}\vspace{-0.1cm}
\end{equation}
Since the active constraints satisfy
\(\hat h_i(\mathbf y)=0\) for \(i\in\mathcal{A}_c\), i.e., \(\mathbf H_{\mathcal{A}_c} \mathbf y = \mathbf b\), where $\mathbf b\in\mathbb R^c$ collects the $c$ entries of $\mathbf l_u$ corresponding to the active constraints,
we can form the equivalent linear equation by block-partitioning \(\mathbf H_{\mathcal{A}_c}\) and \( \mathbf y \) as
$	\mathbf H_{\mathcal{A}_c,\mathbf u}\,\mathbf u
\;+\;
\mathbf H_{\mathcal{A}_c,\mathbf v}\,\mathbf v
\;=\;
\mathbf b.$
%\begin{equation}\small
%	\mathbf H_{\mathcal{A}_c,\mathbf u}\,\mathbf u
%	\;+\;
%	\mathbf H_{\mathcal{A}_c,\mathbf v}\,\mathbf v
%	\;=\;
%	\mathbf b.
%\end{equation}
Since \(\mathbf H_{\mathcal{A}_c}\) has full row rank $c$, \(\mathbf H_{\mathcal{A}_c,\mathbf u}\in\mathbb R^{c\times c}\) is nonsingular and we can immediately obtain $	\mathbf u 
= -\,\mathbf H_{\mathcal{A}_c,\mathbf u}^{-1}
\bigl(\mathbf H_{\mathcal{A}_c,\mathbf v}\,\mathbf v - \mathbf b\bigr),$
%\begin{equation}\small
%	\mathbf u 
%	= -\,\mathbf H_{\mathcal{A}_c,\mathbf u}^{-1}
%	\bigl(\mathbf H_{\mathcal{A}_c,\mathbf v}\,\mathbf v - \mathbf b\bigr),
%\end{equation}
which exactly corresponds to the constraint-elimination step in the BT-BFS algorithm.
Define the reduced objective in \(\mathbf v\)-space as\vspace{-0.2cm}
\begin{equation}\small
	\tilde g(\mathbf v)
	\triangleq \hat g\begin{pmatrix} \mathbf u(\mathbf v) \\ \mathbf v\end{pmatrix}.\vspace{-0.1cm}
\end{equation}
Under condition (\ref{Econ}), \(\mathbf v^*\) must satisfy $\nabla_{\mathbf v} \tilde g(\mathbf v^*)=\mathbf 0$. By the chain rule, we have\vspace{-0.2cm}
%		\begin{equation}\label{BT-BFSgradient}
	%		 \nabla_{\mathbf v} \tilde g(\mathbf v^*)
	%		= \nabla_{\mathbf v} \hat g
	%		+\left(\frac{\partial \mathbf u}{\partial \mathbf v}\right)^T \nabla_{\mathbf u} \hat g
	%		= \nabla_{\mathbf v} \hat g
	%		- \left( \mathbf H_{\mathcal{A}_c,\mathbf u}^{-1}\,
	%		\mathbf H_{\mathcal{A}_c,\mathbf v}\right)^T  \nabla_{\mathbf u} \hat g\,
	%		=\mathbf 0.
	%		\end{equation}
\begin{equation}\label{BT-BFSgradient}\small
	\begin{aligned}
		\nabla_{\mathbf v}\,\tilde g(\mathbf v^*)
		&= \nabla_{\mathbf v}\,\hat g\bigl(\mathbf y^*\bigr)
		+ \Bigl(\tfrac{\partial\mathbf u}{\partial\mathbf v}(\mathbf v^*)\Bigr)^{T}
		\nabla_{\mathbf u}\,\hat g\bigl(\mathbf y^*\bigr)\\
		&= \nabla_{\mathbf v}\,\hat g\bigl(\mathbf y^*\bigr)
		- \Bigl(\mathbf H_{\mathcal A_c,\mathbf u}^{-1}\,\mathbf H_{\mathcal A_c,\mathbf v}\Bigr)^{T}
		\nabla_{\mathbf u}\,\hat g\bigl(\mathbf y^*\bigr)
		= \mathbf 0.
	\end{aligned}\vspace{-0.1cm}
\end{equation}

Now, we will verify ($\ref{KKTseperate}$) $\Longleftrightarrow$ ($\ref{BT-BFSgradient}$), which establishes that every solution in each iteration of the BT-BFS algorithm  satisfies the KKT stationarity.
If ($\ref{KKTseperate}$) holds, then since $\mathbf H_{\mathcal A_c,\mathbf u}$ is nonsingular, 		
substituting
$
\boldsymbol\lambda_c = -\,\mathbf H_{\mathcal A_c,\mathbf u}^{-T}\,\nabla_{\mathbf u}\hat g(\mathbf y^*)
$
into 
$
\nabla_{\mathbf v}\hat g(\mathbf y^*) + \mathbf H_{\mathcal A_c,\mathbf v}^T\,\boldsymbol\lambda_c = \mathbf 0
$
yields exactly ($\ref{BT-BFSgradient}$).
Conversely, if ($\ref{BT-BFSgradient}$) holds, then setting
$\boldsymbol\lambda_c = -\,\mathbf H_{\mathcal A_c,\mathbf u}^{-T}\,\nabla_{\mathbf u}\hat g(\mathbf y^*)$ immediately recovers ($\ref{KKTseperate}$).  Hence the two systems are algebraically equivalent.

In conclusion, the BT-BFS algorithm generates points that satisfy the KKT conditions of problem (CP), thus guaranteeing its optimality.
}
}

\vspace{-0.2cm}
\section{Derivations of \( \nabla {\bar{p}_1(\mathbf{x})} , \nabla^2 {\bar{p}_1(\mathbf{x})} \) and  $\delta_1$}\vspace{-0.1cm}
For ease of exposition, we define
$
	\mathbf{z} \triangleq \bm{\Sigma} \bm{\psi}(\mathbf{x}^i) \in \mathbb{C}^{L_t}.
$
Further let \(z_p = b_p+jq_p\) denote the \(p\)-th entry of \(\mathbf{z}\),
with  \(b_p,q_p\in \mathbb{R}\). Then, \(\bar{p}_1(\mathbf{x})\) can be written as
$\bar{p}_1(\mathbf{x}) = \operatorname{Re} \left\{ \mathbf{z}^H\bm{\psi}(\mathbf{x}) \right\}
= \sum_{i=1}^{N_t}\sum_{p=1}^{L_t}  \left( b_p\cos(\alpha^px_i)-
q_p\sin(\alpha^px_i)
\right), $
%\begin{align}
%&\bar{p}_1(\mathbf{x}) = \operatorname{Re} \left\{ \mathbf{z}^H\bm{\psi}(\mathbf{x}) \right\}\nonumber \\
%&= \sum_{i=1}^{N_t}\sum_{p=1}^{L_t}  \left( b_p\cos(\alpha^px_i)-
%q_p\sin(\alpha^px_i)
% \right), 
%\end{align}
where
\(
\alpha^p \triangleq \frac{2\pi}{\lambda}(\sin\theta_t^{p}+\sin\theta).
\)
Thus, for $1\leq{i{\neq}k}\leq{N_t}$, we have $\frac{\partial \bar{p}_1(\mathbf{x})}{\partial x_{i}} = \sum_{p=1}^{L_t}  
\left( 
-b_p\alpha^p\sin(\alpha^px_i)-q_p\alpha^p\cos(\alpha^px_i)
\right),  
\frac{\partial^2 \bar{p}_1(\mathbf{x})}{\partial x_{i}^2} =   \sum_{p=1}^{L_t}   
\left( 
-b_p(\alpha^p)^2\cos(\alpha^px_i)+q_p(\alpha^p)^2\sin(\alpha^px_i)
\right)$, and $ 
\frac{\partial^2 \bar{p}_1(\mathbf{x})}{\partial x_{i} \partial x_{k}} = 0.$
%\begin{align}
%	\label{gradient1}
%\frac{\partial \bar{p}_1(\mathbf{x})}{\partial x_{i}} = \sum_{p=1}^{L_t}  
%\left( 
%-b_p\alpha^p\sin(\alpha^px_i)-q_p\alpha^p\cos(\alpha^px_i)
%\right), 
%\end{align}
%and
%\begin{align}
%	\label{hessian1}
%\frac{\partial^2 \bar{p}_1(\mathbf{x})}{\partial x_{i}^2} =&   \sum_{p=1}^{L_t}  
%\left( 
%-b_p(\alpha^p)^2\cos(\alpha^px_i)+q_p(\alpha^p)^2\sin(\alpha^px_i)
%\right),
%\nonumber\\
%\frac{\partial^2 \bar{p}_1(\mathbf{x})}{\partial x_{i} \partial x_{k}} =& 0.
%\end{align}
We observe that \( \nabla^2 \bar{p}_1(\mathbf{x}) \) is a diagonal matrix.
%, as derived from (\ref{hessian1}).
Thus, since \( \max\nolimits_{1\leq{i}\leq{N_t}}\{\frac{\partial^2 \bar{p}_1(\mathbf{x})}{\partial x_{i}^2}\} \mathbf{I}_{N_t} \succeq \nabla^2 \bar{p}_1(\mathbf{x}) \), we can select 
$\delta_1 = \sum_{p=1}^{L_t}  
\sqrt{b_p^2(\alpha^p)^4+q_p^2(\alpha^p)^4}$,
%\begin{align}
%	\label{delta1}
%\delta_1 = \sum_{p=1}^{L_t}  
%\sqrt{b_p^2(\alpha^p)^4+q_p^2(\alpha^p)^4},
%\end{align}
which can be simply obtained by $-b_p(\alpha^p)^2\cos(\alpha^px_i)+q_p(\alpha^p)^2\sin(\alpha^px_i)\le\sqrt{b_p^2(\alpha^p)^4+q_p^2(\alpha^p)^4}$.

\iffalse
\section{Derivations of $\nabla{p_2}(\mathbf{x})$}
The gradient of \( p_2(\mathbf{x}) \) can be obtained by using the chain rule as follows:
\begin{align}
	\label{gradient_p2}
\mathbf{\nabla_x}p_2(\mathbf{x}) = &
\frac{\bm{\psi}^H \mathbf{\Sigma} \bm{\psi}\frac{\mathbf{\nabla_x} \|\mathbf{h}\|^2}{2 \|\mathbf{h}\|^2}-\operatorname{Re}\left\{ (\mathbf{\nabla_x} \bm{\psi}^H)\bm{\Sigma} \bm{\psi} \right\}} 
{\sqrt{N_t \|\mathbf{h}\|^2 \bm{\psi}^H \mathbf{\Sigma} \bm{\psi} - (\bm{\psi}^H \mathbf{\Sigma} \bm{\psi})^2}}\nonumber
\\&-  
 \sqrt{\frac{\Gamma \sigma_c^2}{P_T{\|\mathbf{h}\|^2}-\Gamma \sigma_c^2}} 
\frac{ \mathbf{\nabla_x} \|\mathbf{h}\|^2}{2 \|\mathbf{h}\|^2},
\end{align}
where the element in the \( i \)-th row and \( k \)-th column of \( \mathbf{\nabla_x} \bm{\psi}^H \) is expressed as $		[  \mathbf{\nabla_x} \bm{\psi}^H] _{ik}=
j\frac{2\pi}{\lambda}(\sin{{\theta}_t^k}+\sin\theta){e^{j\frac{2\pi}{\lambda}x_i(\sin{{\theta}_t^k}+\sin\theta)}}$,
%\begin{equation}
%	\begin{aligned}
%		\left[  \mathbf{\nabla_x} \bm{\psi}^H\right] _{ik}=
%		j\frac{2\pi}{\lambda}(\sin{{\theta}_t^k}+\sin\theta){e^{j\frac{2\pi}{\lambda}x_i(\sin{{\theta}_t^k}+\sin\theta)}},
%	\end{aligned}
%\end{equation}
and the $p$-th entry of $\mathbf{\nabla_x} \|\mathbf{h}\|^2$ can be written as
$[  \mathbf{\nabla_x}  \|\mathbf{h}\|^2] _{p}=
\sum\nolimits_{i=1}^{L_t}\sum\nolimits_{k=1}^{L_t}
{j\frac{2\pi}{\lambda}(\sin{{\theta}_t^i}-\sin{{\theta}_t^k})}{\sigma_i{\sigma_k^*}e^{j\frac{2\pi}{\lambda}x_p(\sin{{\theta}_t^i}-\sin{{\theta}_t^k})}}$.
%	\begin{align}
%		\left[  \mathbf{\nabla_x}  \|\mathbf{h}\|^2\right] _{p}&=\\
%				 \sum\limits_{i=1}^{L_t}&\sum\limits_{k=1}^{L_t}
%		{j\frac{2\pi}{\lambda}(\sin{{\theta}_t^i}-\sin{{\theta}_t^k})}{\sigma_i{\sigma_k^*}e^{j\frac{2\pi}{\lambda}x_p(\sin{{\theta}_t^i}-\sin{{\theta}_t^k})}}.\nonumber
%	\end{align}
\fi

}

\vspace{-0.2cm}

\bibliography{ref1.bib}

\vspace{-1.2cm}

\begin{IEEEbiography}[{\includegraphics[width=1in,height=1in,keepaspectratio]{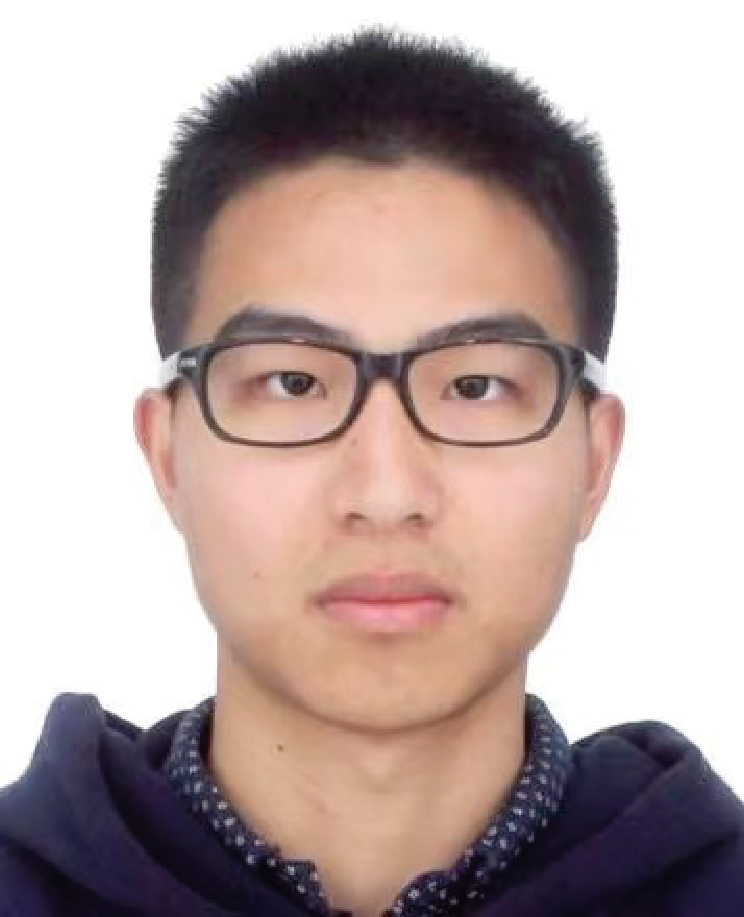}}]{Lebin Chen}
	(Student Member, IEEE) received the B.Eng. degree in information and communication engineering from Zhejiang University, Hangzhou, China, in 2024, where he is currently pursuing the Ph.D. degree with the College of Information Science and Electronic Engineering. 
	His research interests include reconfigurable MIMO, integrated sensing and communication, and signal processing for wireless communications.
\end{IEEEbiography}

\vspace{-1.2cm}

\begin{IEEEbiography}[{\includegraphics[width=1in,height=1.25in,keepaspectratio]{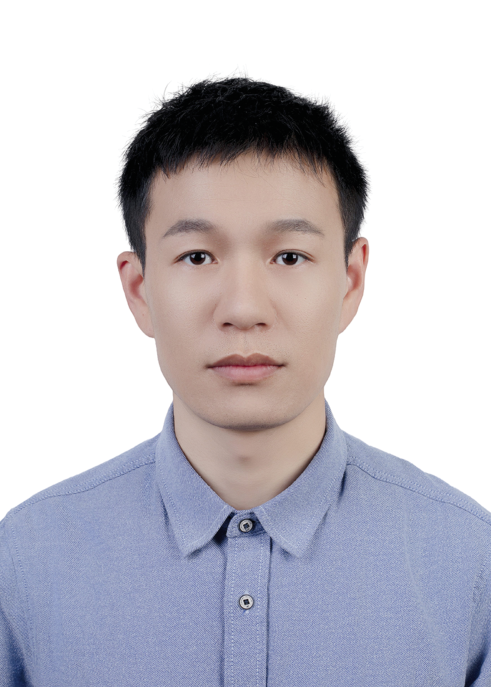}}]{Ming‑Min Zhao}
	(S’14-M’20-SM’23) received the B.Eng. and Ph.D. degrees in information and communication engineering from Zhejiang University, in 2012 and 2017, respectively. From 2015 to 2016, he was a Visiting Scholar with the Department of Electrical and Computer Engineering, Iowa State University, Ames, IA, USA. From 2017 to 2018, he worked as a Research Engineer with Huawei Technologies Co., Ltd. From 2019 to 2020, he was a Visiting Scholar with the Department of Electrical and Computer Engineering, National University of Singapore. Since 2018, he has been working with Zhejiang University, where he is currently an Associate Professor with the College of Information Science and Electronic Engineering. His research interests include algorithm design and analysis for advanced MIMO, signal processing for communication, channel coding, and machine learning for wireless communications. He was the recipient of the IEEE Communications Society Katherine Johnson Young Author Best Paper Award in 2024.
\end{IEEEbiography}

\vspace{-1.2cm}

\begin{IEEEbiography}[{\includegraphics[width=1in,height=1.25in,keepaspectratio]{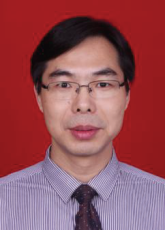}}]{Min‐Jian Zhao}
	(Senior Member, IEEE) received the M.Sc.\ and Ph.D.\ degrees in communication and information systems from Zhejiang University, Hangzhou, China, in 2000 and 2003, respectively.
	
	He was a Visiting Scholar with the University of York, York, U.K., in 2010. He is currently a Professor with the College of Information Science and Electronic Engineering, Zhejiang University. His current research interests include modulation theory, channel estimation and equalization, MIMO signal processing for wireless communications, anti‐jamming technology for wireless transmission and networking, and communication SoC chip design.
\end{IEEEbiography}

\vspace{-1.2cm}

\begin{IEEEbiography}[{\includegraphics[width=1in,height=1.25in,keepaspectratio]{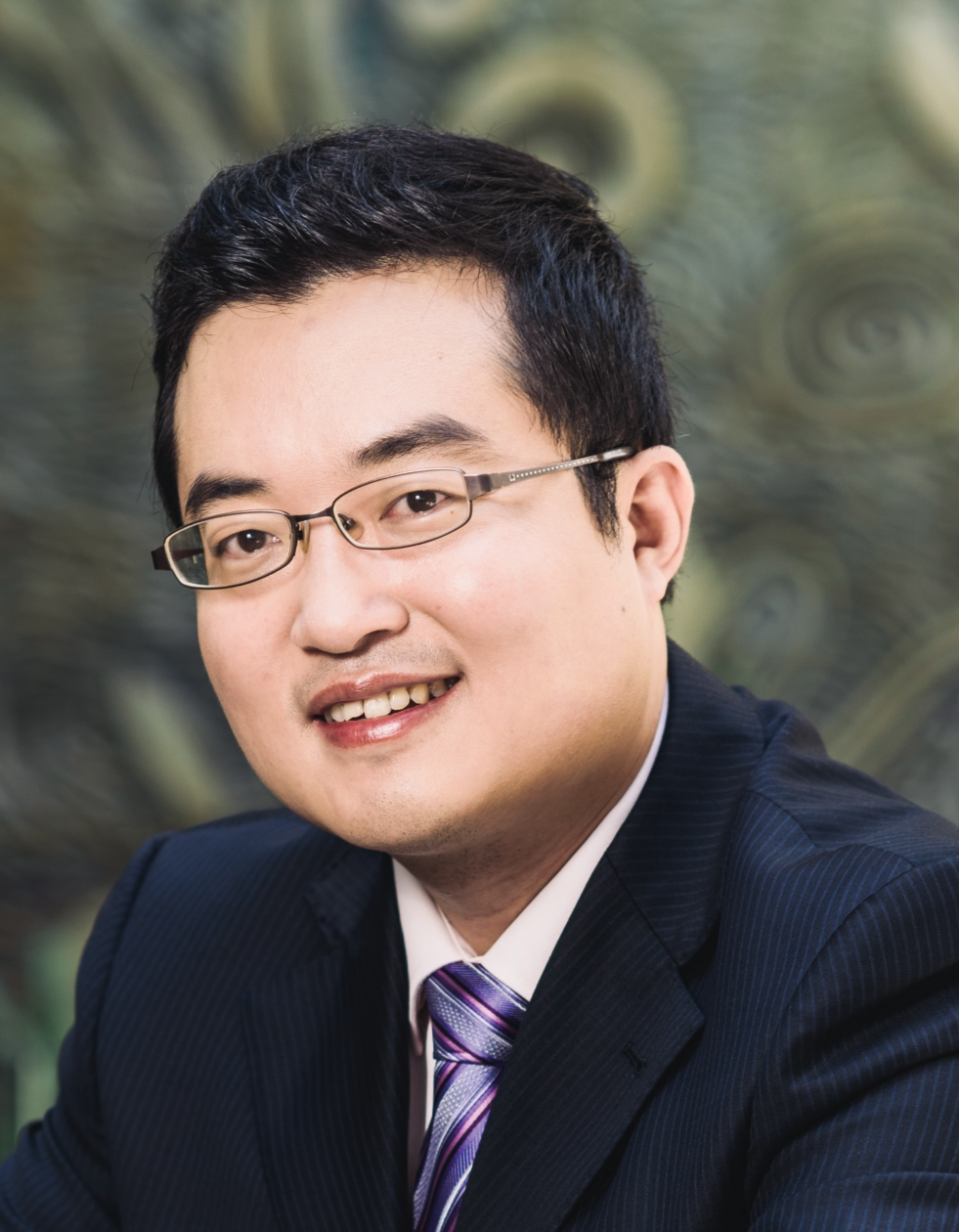}}]{Rui Zhang} (S'00-M'07-SM'15-F'17) received the B.Eng. (first-class Hons.) and M.Eng. degrees from the National University of Singapore, Singapore, and the Ph.D. degree from the Stanford University, Stanford, CA, USA, all in electrical engineering.
From 2007 to 2009, he worked as a researcher at the Institute for Infocomm Research, ASTAR, Singapore. In 2010, he joined the Department of Electrical and Computer Engineering of National University of Singapore, where he was appointed as a Provost’s Chair Professor in 2020. He is now with the School of Science and Engineering, Shenzhen Research Institute of Big Data, The Chinese University of Hong Kong, Shenzhen, as a Principal’s Diligence Chair Professor. He has published over 350 journal papers and over 200 conference papers. He has been listed as a Highly Cited Researcher by Thomson Reuters/Clarivate Analytics since 2015. His current research interests include UAV/satellite communications, wireless power transfer, intelligent reflecting surface, reconfigurable MIMO, radio mapping and optimization methods.      

He was the recipient of the 6th IEEE Communications Society Asia-Pacific Region Best Young Researcher Award in 2011, the Young Researcher Award of National University of Singapore in 2015, the Wireless Communications Technical Committee Recognition Award in 2020, the IEEE Signal Processing and Computing for Communications (SPCC) Technical Recognition Award in 2021, and the IEEE Communications Society Technical Committee on Cognitive Networks (TCCN) Recognition Award in 2023. His works received 18 IEEE Best Journal Paper Awards, including the IEEE Marconi Prize Paper Award in Wireless Communications in 2015 and 2020, the IEEE Signal Processing Society Best Paper Award in 2016, the IEEE Communications Society Heinrich Hertz Prize Paper Award in 2017, 2020 and 2022, the IEEE Communications Society Stephen O. Rice Prize in 2021, etc. He served for over 30 international conferences as the TPC co-chair or an organizing committee member. He was an elected member of the IEEE Signal Processing Society SPCOM Technical Committee from 2012 to 2017 and SAM Technical Committee from 2013 to 2015, and served as the Vice Chair of the IEEE Communications Society Asia-Pacific Board Technical Affairs Committee from 2014 to 2015. He was a Distinguished Lecturer of IEEE Signal Processing Society and IEEE Communications Society from 2019 to 2020. He served as an Editor for the IEEE TRANSACTIONS ON WIRELESS COMMUNICATIONS from 2012 to 2016, the IEEE JOURNAL ON SELECTED AREAS IN COMMUNICATIONS: Green Communications and Networking Series from 2015 to 2016, the IEEE TRANSACTIONS ON SIGNAL PROCESSING from 2013 to 2017, the IEEE TRANSACTIONS ON GREEN COMMUNICATIONS AND NETWORKING from 2016 to 2020, and the IEEE TRANSACTIONS ON COMMUNICATIONS from 2017 to 2022. He served as a member of the Steering Committee of the IEEE Wireless Communications Letters from 2018 to 2021. He is a Fellow of the Academy of Engineering Singapore. 
	\end{IEEEbiography}

\bibliographystyle{IEEEtran}

\end{document}